\documentclass[
  prx, 
  10pt,
  amsmath,amssymb,
  aps,
  twocolumn,
  superscriptaddress,
  floatfix,
  nofootinbib,
]{revtex4-2}

\usepackage{graphicx}
\usepackage[caption=false]{subfig} 
\usepackage{microtype}
\usepackage{bm}
\usepackage{braket}
\usepackage{xcolor}

\usepackage{mathtools}
\usepackage{amsthm}
\usepackage{tikz}
\usepackage{booktabs}
\usepackage{multirow}
\usepackage{hyperref}
\usepackage[capitalise,noabbrev]{cleveref}
\usepackage{refcount} 
\crefname{paragraph}{paragraph}{paragraphs}
\Crefname{paragraph}{Paragraph}{Paragraphs}
\crefname{problem}{Problem}{Problems}
\Crefname{problem}{Problem}{Problems}
\newtheorem{theorem}{Theorem}

\newcounter{savedtheorem}
\newenvironment{restatedtheorem}[1]{%
  \setcounter{savedtheorem}{\value{theorem}}%
  \setcounterref{theorem}{#1}\addtocounter{theorem}{-1}%
  \begin{theorem}}{%
  \end{theorem}\setcounter{theorem}{\value{savedtheorem}}}
\newtheorem{lemma}{Lemma}
\newcounter{savedlemma}
\newenvironment{restatedlemma}[1]{%
  \setcounter{savedlemma}{\value{lemma}}%
  \setcounterref{lemma}{#1}\addtocounter{lemma}{-1}%
  \begin{lemma}}{%
  \end{lemma}\setcounter{lemma}{\value{savedlemma}}}
\theoremstyle{definition}
\newtheorem{problem}{Problem}
\theoremstyle{remark}

\DeclareMathOperator{\Tr}{tr}
\hypersetup{colorlinks=true,linkcolor=blue,citecolor=blue,urlcolor=blue}
\DeclareMathOperator{\tr}{tr}

\newcommand{\SC}{\mathsf{SC}}

\newcommand{\NL}{\mathsf{NL}}
\newcommand{\BPL}{\mathsf{BPL}}

\newcommand{\Lclass}{\mathsf{BQL}}

\begin{document}

\title{Classical and Quantum Simulation of All-to-All Quantum Dynamics in Logarithmic Space}

\author{Maximilian Lutz}
\thanks{These authors contributed equally}
\email{maximilian.lutz@mpq.mpg.de}
\author{Rahul Trivedi}
\thanks{These authors contributed equally}
\email{rahul.trivedi@mpq.mpg.de}
\author{Ignacio Cirac}
\email{ignacio.cirac@mpq.mpg.de}
\affiliation{Max-Planck-Institut f\"ur Quantenoptik, Hans-Kopfermann-Stra\ss e 1, D-85748 Garching, Germany}
\affiliation{Munich Center for Quantum Science and Technology (MCQST), Schellingstra\ss e 4, D-80799 M\"unchen, Germany}

\date{\today}

\begin{abstract}
We study both the classical and the quantum complexity of the problem of estimating disorder-averaged quench dynamics for all-to-all spin Hamiltonians such as quantum Hopfield, Sherrington--Kirkpatrick (SK) or Dicke-type models.
Disorder averaging restores permutation invariance in this setting, so states of $N$ spins can be stored in $\mathcal O(\log N)$ qubits.
While logarithmic-space quantum computations are known to be simulatable classically in polynomial time or in polylogarithmic space, whether a single algorithm can achieve both is open, raising the possibility of \emph{simultaneous} quantum advantage.
To this end, we give a quantum algorithm for permutation-invariant $k$-local dynamics without disorder using polynomial time and $\mathcal O(\log N)$ quantum and classical space.
We show that any classical algorithm matching these resources would imply $\Lclass\subseteq\SC$. For short times, however, we provide such algorithms, including for SK models.
For Hopfield models, we discuss how dilating the disorder to ancilla qubits reduces the problem to the disorder-free case on an enlarged system, thus extending both classical and quantum results to them.
We give some evidence that their long-time dynamics are non-trivial.
They thus emerge as a candidate for a simultaneous time--space advantage in quantum simulation.
\end{abstract}

\maketitle

\section{Introduction}
\label{sec:intro}

Few models generate as much complexity from as little structure as disordered all-to-all spin models.
The Sherrington--Kirkpatrick (SK) model~\cite{sherrington1975}, its $p$-spin generalisations~\cite{gross1984}, and the Hopfield models~\cite{amit1985,ma1993,nishimori1996} couple every spin to every other through random interactions, and from this compact definition emerge replica-symmetry breaking, glassy energy landscapes, and the phenomenology behind disordered magnets, neural networks, and hard optimisation problems~\cite{parisi1979,mezard1987,ma1993,nishimori1996,nishimori2001}.
Since the couplings are random, the meaningful quantities are averages over the disorder - in this work, we focus on such averages for quantum quench dynamics.

Numerically simulating these dynamics faces challenges:
without geometric locality there is no light cone to exploit, and finite-size effects are strong, so larger systems are needed to approach the thermodynamic limit than for lattice models \cite{aspelmeier2008finite}.
The standard many-body toolbox of Krylov-subspace, tensor-network, and Monte Carlo methods therefore applies only in part.

Quantum computers are naturally suited to simulate dynamics of many-body systems~\cite{lloyd1996}, but a direct simulation needs one qubit per spin, while early fault-tolerant devices will offer few logical qubits.
Memory, in addition to running time, is therefore a decisive resource.
Relevant to this, in our setting, disorder averaging restores a symmetry the individual realisations lack: the ensemble is invariant under relabelling the sites, so the averaged dynamics are permutation invariant and live in a subspace whose dimension is only polynomial in the number of sites $N$.
In principle, $\mathcal{O}(\log N)$ qubits therefore suffice to store the state of ensemble averages of all-to-all models on $N$ spins, providing a starting point for space-efficient quantum algorithms.

However, logarithmic quantum space is also a regime with strong classical simulation results.
Any polynomial-time quantum computation on logarithmically many qubits can be simulated classically either in polynomial time, or in space only quadratically larger than the quantum space~\cite{watrous1999,watrous2003,borodin1983}.
While this seems to rule out superpolynomial speedups, neither simulation meets both bounds at once, and it is this \emph{simultaneous} time--space complexity that matters here.
Indeed this corresponds to the open problem of $\Lclass\overset{\scriptscriptstyle ?}{\subseteq}\SC$, i.e.\ whether all problems solvable with bounded error in polynomial circuit depth and logarithmic quantum space are solvable by a classical Turing machine in both polynomial time and polylogarithmic space.
This inclusion is open; a related classical counterpart for nondeterministic logarithmic space is the famous $\NL\overset{\scriptscriptstyle ?}{\subseteq}\SC$~\cite{arora2009}.

We thus first explore the question of whether physically interesting dynamics are accessible to quantum computers in polynomial time and logarithmic space while remaining out of reach for classical computers under the same bounds.
Concretely, we consider $k$-local all-to-all Hamiltonians whose couplings are built from random signs attached to each site, with a tunable number of sign patterns per site (\cref{sec:setup}).
Few patterns give quantum Hopfield-type models~\cite{ma1993,nishimori1996}; many patterns make the couplings Gaussian and recover the transverse-field SK and $p$-spin models.
Exploiting the symmetry is not immediate, since the averaged state is a permutation-invariant \emph{mixed} state outside the symmetric subspace of pure states, and an algorithm must be designed that never leaves the invariant space.
Without disorder, the dynamics are dictated by Dicke-like Hamiltonians: they are collective-spin dynamics confined to the symmetric subspace.

Our results chart the resulting landscape.
A hardness result implies that there must be a simultaneous space-time quantum advantage for the problem of estimating permutation-invariant quench dynamics at polynomial times under certain complexity assumptions.
We show that the case without disorder, i.e.\ Dicke-type models, can be quantum simulated in polynomial time and logarithmic quantum and classical workspace. 
Classically, however, we meet the same resource bounds at short evolution times, in two ways.
Firstly, we provide sampling algorithms up to $t=\mathcal{O}(\log(N)/N)$ through truncated expansions.
Secondly, for expectation value estimation up to $t=\mathcal{O}(1)$ we show a cluster expansion result that also covers the SK-type models \cite{wild2023}.
Furthermore, we show how discrete disorder can be dilated by additional ancilla qubits, which reduces averaged dynamics of Hopfield-type models to the disorder-free case, with corresponding quantum and classical (sampling and expectation value estimation) results corresponding to the above therefore applying.
Finally, we provide some evidence that the averaged dynamics of Hopfield-type models at long evolution times are non-trivial.
They thus emerge as a candidate for a simultaneous time--space quantum advantage.

\section{Summary of Results}
\label{sec:summary_of_results}

\subsection{Setup: Hamiltonian \& Dynamics Estimation}
\label{sec:setup}
\label{sec:tasks}
We consider a system of $N$ qubits with the collective spin operators $m_\alpha = N^{-1}\sum_{i=1}^N \sigma_i^\alpha, \alpha \in \{x, y, z\}$.
Specifically, we focus on ensembles of $k$-local all-to-all Hamiltonians of the form
\begin{gather}
\label{eq:hamiltonian}
H_J = N f(m_x,m_y,m_z) - \frac{J}{N^{k-1}}\sum_{1\le i_1<\cdots<i_k}^NJ_{i_1\cdots i_k}\sigma^z_{i_1}\cdots \sigma^z_{i_k}\\
J_{i_1\cdots i_k}=\frac{1}{\sqrt{r}}\sum_{\ell=1}^{r}v_{i_1\ell}\cdots v_{i_k\ell},
\end{gather}
where $v_{i\ell}$ are i.i.d.\ Rademacher-distributed, i.e.\ $\pm1$ with equal probabilities, and $f$ is a degree-$k$ polynomial.
Since the indices within each tuple are distinct and $v_{i\ell}^2=1$, the couplings have zero mean and are uncorrelated with unit variance,
\begin{equation}
  \mathbb{E}[J_{i_1\cdots i_k}]=0, \qquad
  \mathbb{E}[J_{i_1\cdots i_k}J_{i'_1\cdots i'_k}]=\delta_{i_1i'_1}\cdots\delta_{i_ki'_k},
  \label{eq:coupling-moments}
\end{equation}
for ordered tuples $i_1<\cdots<i_k$ and $i'_1<\cdots<i'_k$.

Different regimes of this family of Hamiltonians $H_J$ correspond to various models of interest in many-body physics, which we now discuss.

First of all, note that as per Schur--Weyl duality, observables on qubits commuting with every site permutation $U_\pi$ are exactly those generated by $\{m_\alpha\}_{\alpha=x,y,z}$, so for $J=0$, $H_J$ corresponds to all permutation-invariant (i.e.\ $U_\pi H U_\pi^\dagger=H$ for all $\pi$) $k$-local qubit Hamiltonians.
Physically, such models closely relate to Dicke-type models: e.g.\ for $k=2$, this class contains the Lipkin--Meshkov--Glick (LMG) model~\cite{lipkin1965}, which arises as the effective spin Hamiltonian of the Dicke model of $N$ two-level atoms collectively coupled to a single cavity mode~\cite{dicke1954,kirton2019} when the cavity mode is far detuned and adiabatically eliminated~\cite{larson2010}.
We later also consider the qudit generalisation of local dimension $\chi$ of this case, where $\{m_\alpha\}_{\alpha=x,y,z}$ is replaced by an operator basis on $\mathbb{C}^\chi$.

Second, we take $J\neq0$ and fixed $r=\mathcal{O}(1)$.
For $k=2$, $H_J$ with $f=Bm_x$ coincides, up to an additive constant, with the quantum Hopfield models~\cite{ma1993,nishimori1996}
\begin{equation}
    H_{\mathrm{Hop}}=BN m_x + \frac1N\sum_{\ell=1}^{r}\mu_\ell A_\ell^2, \qquad
    A_\ell=\sum_i v_{i\ell}\sigma^z_i,
    \label{eq:models-hopfield}
\end{equation}
with uniform pattern weights $\mu_\ell=-J/(2\sqrt r)$.
These models have been studied as a transverse-field extension of the classical neural-network model of associative memory~\cite{amit1985,ma1993,nishimori1996}, where the $r$ sign patterns $(v_{i\ell})_i$ are the stored memories.
We allow general real weights $\mu_\ell$ as mild generalisation of \cref{eq:hamiltonian}.

Finally, we consider the limit of $r \rightarrow \infty$.
For $r\ge N^{k-1}$, we replace the prefactor $J/N^{k-1}$ in \cref{eq:hamiltonian} by $J/N^{(k-1)/2}$ to normalise $\|H_J\|=\mathcal O(N)$ with high probability (see \cref{app:normalization}); this ensures the disordered part does not form a vanishing part of the norm.
Here the couplings of \cref{eq:hamiltonian} become independent Gaussians.
Indeed, writing $J^{(\ell)}_{i_1\cdots i_k}=v_{i_1\ell}\cdots v_{i_k\ell}$ for the rank-one couplings of pattern $\ell$, we have $J_{i_1\cdots i_k}=r^{-1/2}\sum_{\ell=1}^r J^{(\ell)}_{i_1\cdots i_k}$.
The $r$ families $\{J^{(\ell)}_{i_1\cdots i_k}\}_{i_1<\cdots<i_k}$ are i.i.d., since each pattern carries its own signs, and each has zero mean and identity covariance by \cref{eq:coupling-moments} with $r=1$; hence for any finite set of tuples, the multivariate central limit theorem implies convergence to independent standard Gaussians.
Thus $H_J$ in this regime contains the well-known family of quantum $p$-spin Hamiltonians~\cite{gross1984},
\begin{equation}
    H_p= BN m_x+\frac{J}{N^{(p-1)/2}}
    \sum_{i_1<\cdots<i_p}
    g_{i_1\cdots i_p}\sigma^z_{i_1}\cdots\sigma^z_{i_p} 
  \label{eq:models-p-spin}
\end{equation}
with $g_{i_1\cdots i_p} \overset{\mathrm{i.i.d.}}{\sim} \mathcal{N}(0, 1)$ and transverse-field strength $B$.
For $p=2$ and $J=1$, these in turn specialize to the seminal transverse-field Sherrington--Kirkpatrick (SK) model~\cite{sherrington1975},
\begin{equation}
  H_{\mathrm{SK}}=BN m_x + \frac1{\sqrt N}\sum_{i<j}
  g_{ij}\sigma_i^z\sigma_j^z,
  \qquad g_{ij}\overset{\mathrm{i.i.d.}}{\sim}\mathcal N(0,1),
  \label{eq:models-sk}
\end{equation}
which at $B=0$, where all terms commute, reduces to the classical SK Hamiltonian.

We study the computational task of estimating disorder-averaged dynamics from a symmetric product state.
We set $\hbar=1$ throughout.
\begin{problem}[Disorder-averaged dynamics estimation]
\label{prob:averaged-dynamics}
Let $\rho(0)=(\ket{\phi}\bra{\phi})^{\otimes N}$ be a product state, $O$ an observable supported on at most $k_O$ sites, and $t\ge0$ an evolution time.
Given $\epsilon > 0$, compute $\widetilde{F}_O(t)$ with
\begin{gather}
  \overline F_O(t)=\mathbb{E}_J\!\left[\Tr\!\left(O\rho(t)\right)\right],\quad\rho(t)=e^{-iH_Jt}\rho(0)e^{iH_Jt}\\
  \mathrm{s.t.} \, | \widetilde{F}_O(t) - \overline F_O(t) | \le \epsilon \| O\|.
  \label{eq:task-averaged}
\end{gather}
\end{problem}
Throughout, $k$ and $k_O$ are independent of $N$ unless stated otherwise.
Since $\rho(0)$ and the distribution of $H_J$ are invariant under relabelling the sites, so is the averaged state, and only the average of $O$ over site permutations enters; e.g.\ $\sigma^z_1$ and $m_z$ have the same expectation value.
Conversely, $k_O$-local permutation-invariant observables such as $m_z^2$ reduce to $k_O$ observables of this form at a constant loss in accuracy (\cref{app:local-observables}).
We also consider the case where $H_J$ is piecewise-constant time dependent.

\begin{table*}[t]
  \caption{Overview of results.
  Every algorithm listed runs in $\mathrm{poly}(N)$ time and, simultaneously, $\mathcal O(\log N)$ (classical or quantum and classical) space.
  The referenced statements state conditions on accuracy, the locality entries, and the admissible initial states and observables.
  A question mark indicates that no algorithm with these resources is known, the cross marks a hardness result.}
  \label{tab:overview}
  \small
  \renewcommand{\arraystretch}{1.25}
  \begin{tabular*}{\textwidth}{@{\extracolsep{\fill}}llccc@{}}
    \toprule
    & & Sampling & \multicolumn{2}{c}{Expectation value (\cref{prob:averaged-dynamics})} \\
    \cmidrule(lr){3-3}\cmidrule(l){4-5}
    Regime of $H_J$ & Algorithm & $t=\mathcal O(\log N/N)$ & $t=\mathcal O(1),~\epsilon=\Theta(1)$ & $t=\mathrm{poly}(N),~\epsilon=1/\mathrm{poly}(N)$ \\
    \midrule
    $J=0$ (permutation invariant)
      & classical & \cref{thm:informal-sampling} & \cref{thm:informal-classical} & $\times$ (\cref{cor:informal-classical-hardness}) \\
      & quantum   & & \cref{thm:informal-quantum} & \cref{thm:informal-quantum} \\
    \addlinespace
    $r=\mathcal O(1)$ (quantum Hopfield)
      & classical & \cref{thm:hopfield-sampling} & \cref{thm:hopfield-classical} & ? \\
      & quantum   & & \cref{thm:hopfield-quantum} & \cref{thm:hopfield-quantum} \\
    \addlinespace
    $r\to\infty$ ($p$-spin, SK)
      & classical & ? & \cref{thm:sk-classical} & ? \\
    \bottomrule
  \end{tabular*}
\end{table*}
\Cref{tab:overview} summarises our results with respect to this problem across the three regimes, to be discussed in the following.

\subsection{\texorpdfstring{$J=0$}{J=0}: Non-Disordered Case}
\label{sec:summary-nondisordered}

We first consider the regime of $H_{J=0}$, i.e.\ without disorder.
Here no average is needed, and \cref{prob:averaged-dynamics} asks for
\begin{equation}
  F_O(t) = \tr\left(O e^{-iH_{J=0}t} \rho(0) e^{iH_{J=0}t}\right)
  \label{eq:task-fixed}
\end{equation}
instead of $\overline F_O(t)$ in \cref{eq:task-averaged}.

\emph{Quantum algorithms and advantage.}
In \cref{sec:compressed-simulation} we give a quantum algorithm for this task whose quantum and classical workspace are both logarithmic in $N$.
The starting point is that the dynamics never leave the symmetric subspace, whose dimension is $N+1$ for qubits (and $\mathrm{poly}(N)$ for any fixed local dimension), so the evolved state fits on $\mathcal O(\log N)$ qubits.
This alone does not yield a logarithmic-space algorithm, however: the Hamiltonian, the observable and the evolution circuit, even as matrices on the symmetric subspace, are of polynomial size and cannot be stored completely.
We show these can, however, be generated on the fly by a classical controller confined to logarithmic space.
\begin{theorem}[Informal, logarithmic-space quantum simulation]\label{thm:informal-quantum}
For local dimension $\chi$, \cref{prob:averaged-dynamics}, possibly with piecewise-constant time-dependent $H_{J=0}$, can be solved to error $\epsilon$ by a bounded-error quantum algorithm using, simultaneously, $\mathrm{poly}(t,1/\epsilon)\,N^{\mathcal O(\chi)}$ time, $\mathcal O(\chi\log N)$ qubits and $\mathcal O(\chi\log N+\log(t+1/\epsilon))$ bits of classical workspace.
\end{theorem}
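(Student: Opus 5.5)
We work entirely inside the symmetric subspace $\mathrm{Sym}^N(\mathbb{C}^\chi)$. Its dimension is $D=\binom{N+\chi-1}{\chi-1}=N^{\mathcal O(\chi)}$, and its occupation-number basis $\ket{n_1,\dots,n_\chi}$ with $\sum_a n_a=N$ can be encoded in $\mathcal O(\chi\log N)$ qubits. Since $H_{J=0}$ is a degree-$k$ polynomial in the collective operators $S_{ab}=\sum_i \ket{a}\bra{b}_i$, it preserves this subspace. So does any piecewise-constant time dependence. In the occupation basis, each $S_{ab}$ acts as $\ket{\mathbf n}\mapsto\sqrt{n_b(n_a+1)}\ket{\mathbf n-e_b+e_a}$. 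Therefore $H$ restricted to the subspace is a sparse matrix: each row has at most $\chi^{2k}$ nonzero entries. Given a row index and a slot number, a classical controller can compute each column index and entry by expanding the monomials of $f$, using only $\mathcal O(\chi\log N)$ bits of counters and arithmetic. The first step is to make this sparse-access oracle explicit as a log-space computation that outputs bits of the entry to precision $\mathrm{poly}(t,1/\epsilon)^{-1}$.

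\textbf{Preparation.} The initial state $\ket{\phi}^{\otimes N}$ has multinomial amplitudes $\sqrt{\binom{N}{\mathbf n}}\prod_a\phi_a^{n_a}$. We prepare it in the compressed register by a sequence of controlled rotations that split occupation numbers one colour at a time. A simpler alternative is to start from $\ket{N,0,\dots,0}$ and apply $e^{-i\theta G}$, where $G$ is a collective generator; $G$ is again sparse and local in the same sense, so this reuses the evolution routine below. The rotation angles are ratios of binomial sums. They are computable in logarithmic space, or avoided entirely by the second route, which is the option we take.

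\textbf{Evolution.} We simulate $e^{-iHt}$ for each piece of the time dependence with a sparse Hamiltonian simulation algorithm that needs only a constant number of ancillas beyond the system register and oracle calls. Candidates are Trotterisation after an edge-colouring decomposition of the sparse graph into $\mathcal O(\chi^{4k})$ one-sparse pieces, each exactly implementable with the oracle, or qubitisation/QSP. Trotter of order $2p$ gives time $\mathrm{poly}(t,1/\epsilon)\,N^{\mathcal O(\chi)}$, since $\|H\|\le \mathrm{poly}(N)$. The classical controller only needs a loop counter of $\mathcal O(\log(t/\epsilon)+\log N)$ bits to stream the gate sequence. We must also check that each one-sparse exponential is a two-level rotation between $\ket{\mathbf n}$ and $\ket{\mathbf n'}$, computed reversibly with ancillas holding the entry to $\mathcal O(\log(Nt/\epsilon))$ bits. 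These ancillas are uncomputed afterwards, so qubit usage stays $\mathcal O(\chi\log N)$.

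\textbf{Measurement.} By permutation invariance, $\tr(O\rho(t))$ equals the expectation of the symmetrised observable $\bar O$. This is a polynomial in the $S_{ab}/N$ of degree $k_O$, which is again sparse on the subspace. We estimate it by block-encoding $\bar O/\|O\|$ with the same oracle and applying a Hadamard test. Alternatively, we measure in the occupation basis and perform a local unitary change of basis, if we decompose $O$ into products of single-site operators and use that the marginal of a symmetric state is determined by the collective moments. We repeat $\mathcal O(1/\epsilon^2)$ times, keeping a running sum in $\mathcal O(\log(1/\epsilon))$ bits.

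The main obstacle is keeping the classical controller in logarithmic space while generating the polynomially long circuit. Binomial and square-root arithmetic, the colouring of the sparse graph, and the Trotter schedule must all be computed on the fly rather than stored. We will handle this by making every gate a function of its index, computable by a log-space transducer, and by composing such transducers.
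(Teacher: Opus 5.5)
Your route is sound but differs from the paper's in where the arithmetic lives. The paper's quantum algorithm uses neither sparsity nor ancillas. It expands the compressed matrix $h=V^\dagger HV$ in all $Q^2$ Pauli strings of the $q$-qubit register. It computes each coefficient $\alpha_P=\tr(Ph)/Q$ classically on the fly, as a phased sum of $Q$ entries, each obtained by the log-space path sum over monomials. It then Trotterises into $Q^2$ Pauli rotations per step, each a basis change, a CNOT ladder and one $\mathsf Z$-rotation by a classically computed angle. The observable is handled the same way: measure a uniformly random Pauli string $P$ of $o=V^\dagger OV$ and output $Q^2\widetilde\alpha_PX_P$, whose second moment is at most $4Q^2\|O\|^2$. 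The state preparation is exactly your second route. This costs an $N^{\mathcal O(\chi)}$ overhead in gates and repetitions. In exchange, the quantum memory is exactly the $q$ register qubits, and all square roots and path products are computed irreversibly by the classical controller. That is why the statement puts $\log(t+1/\epsilon)$ only into the classical workspace, and why the formal version tolerates $k,k_O=\mathcal O(\log N)$.

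Your sparse decomposition with a quantum entry oracle, plus block-encoding and Hadamard test for $O$, gives much better polynomial degrees. The price is that the arithmetic moves into reversible circuits on ancillas. Your qubit count becomes $\mathcal O(\chi\log N+k\log(Nt/\epsilon))$, which is $\mathcal O(\chi\log N)$ only under $t,1/\epsilon\le\mathrm{poly}(N)$ and $k=\mathcal O(1)$. The path sums must also be made reversible without a generic Bennett-type $\log^2$ space blow-up; compute--copy--uncompute per path with $\mathcal O(k)$ intermediate registers suffices.

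Two details you should pin down. First, give a concrete colouring. For example, group edges by their displacement $\Delta=\mathbf m-\mathbf n$ up to sign. Then split by the parity of $\lfloor n_a/|\Delta_a|\rfloor$ at the endpoint with smaller $n_a$, for a stored coordinate with $\Delta_a\neq0$; this gives matchings. Second, a sparse block-encoding of $\bar O$ is normalised by sparsity times maximal entry, i.e.\ by $\chi^{\mathcal O(k_O)}\|O\|$ rather than $\|O\|$, so your $\mathcal O(1/\epsilon^2)$ repetition count carries an extra $\chi^{\mathcal O(k_O)}$ factor. Your alternative measurement via a single collective basis change only covers observables diagonal in one common local basis, but the Hadamard-test route makes it unnecessary.
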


We briefly recall some relevant complexity theory background~\cite{arora2009,watrous1999} before we proceed.
In the sublinear-space settings considered here, the input tape, holding an input of size $n$, is assumed to be read-only and does not count towards the space bound, whereas all working memory does.
Importantly, for a quantum algorithm all classical workspace of the circuit generation is counted as well.
$\Lclass$ is the class of promise problems solvable by a quantum algorithm that uses polynomial time and $\mathcal O(\log n)$ qubits, succeeds with probability at least $2/3$, and has its gates emitted by a classical controller of $\mathcal O(\log n)$ workspace.
Its classical counterparts are $\BPL$, decided with bounded error by randomised machines in polynomial time and $\mathcal O(\log n)$ space, and $\SC$, decided by deterministic machines in polynomial time and, simultaneously, $\mathrm{polylog}(n)$ space.
$\Lclass$ is classically simulable in polynomial space and polynomial time by explicit state-vector evolution, or, via \citet{watrous1999}, in $\mathcal O(\log^2 n)$ space and quasipolynomial time~\cite{watrous2003,borodin1983}.
However, the question of whether $\Lclass\overset{\scriptscriptstyle ?}{\subseteq}\SC$ (let alone $\Lclass\overset{\scriptscriptstyle ?}{\subseteq}\BPL$, as $\BPL\subseteq\SC$~\cite{nisan1994}) i.e.\ whether a simultaneous polynomial time, polylogarithmic space simulation is possible, is open.
There is evidence against such a simulation: approximating an entry of the inverse of a well-conditioned matrix of polynomial dimension to inverse-polynomial precision is $\Lclass$-complete~\cite{tashma2013,fefferman2018,fefferman2021}, yet no deterministic or randomised classical algorithm for this task using $o(\log^2 n)$ space is known~\cite{tashma2013}.

We also show a complementary result to \cref{thm:informal-quantum}: simulating permutation-invariant dynamics is at least as hard as running an arbitrary quantum circuit on logarithmically many qubits under reductions that use polynomial time and logarithmic space.
The idea is to encode the circuit's qubits into Dicke states of the symmetric subspace and to compile any circuit on them into a piecewise-constant schedule of a two-local permutation-invariant Hamiltonian.
That such schedules exist follows from the fact that the collective operators $m_x$, $m_y$, $m_z$ and $m_z^2$ generate the full Lie algebra $\mathfrak{su}(N+1)$~\cite{giorda2003}.
This controllability argument is not constructive, however, while a hardness result for $\Lclass$ requires a reduction that itself runs in polynomial time and logarithmic classical space.
In \cref{sec:hardness} we therefore give an explicit compilation fulfilling these requirements.
From this we derive a simultaneous space-time quantum advantage for \cref{prob:averaged-dynamics} under complexity assumptions.


\begin{theorem}[Informal, simultaneous space-time quantum advantage up to complexity assumption]\label{cor:informal-classical-hardness}
Suppose a deterministic classical algorithm solves \cref{prob:averaged-dynamics} for piecewise-constant $H_{J=0}$ and some constant accuracy $\epsilon<1/3$, in $\mathrm{poly}(N,t)$ time and, simultaneously, $\mathrm{polylog}(N,t)$ space. Then $\Lclass\subseteq\SC$.
If the algorithm is instead randomised with bounded error and uses $\mathcal O(\log(Nt))$ space, then even $\Lclass=\BPL$.
\end{theorem}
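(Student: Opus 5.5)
The plan is a log-space, poly-time reduction from an arbitrary $\Lclass$ computation to \cref{prob:averaged-dynamics} with piecewise-constant $H_{J=0}$, after which we compose with the hypothesised classical solver. Start from a $\Lclass$ promise problem with input $x$ of length $n$. Its log-space controller emits a circuit $C_x$ of $T=\mathrm{poly}(n)$ one- and two-qubit gates on $q=\mathcal O(\log n)$ qubits, acting on $\ket{0^q}$, with acceptance probability $\ge 2/3$ or $\le 1/3$ of measuring the first qubit. We choose $N+1\ge 2^q$, so $N=\mathrm{poly}(n)$, and embed the $2^q$ computational basis states into the Dicke basis $\{\ket{D_N^w}\}_{w=0}^N$ of the symmetric subspace. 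The starting state is $\ket{D_N^0}=\ket{0}^{\otimes N}$, which is of the allowed product form $(\ket\phi\bra\phi)^{\otimes N}$. The acceptance observable must be a $k_O$-local operator whose symmetrised version distinguishes the two cases. A convenient choice is to encode the output qubit in the parity or the magnitude of $w$. For instance, we can add a final gate stage that maps ``accept'' basis states to $\ket{D_N^N}$ and reject states to $\ket{D_N^0}$ (or to low-$w$ states), so that $\sigma^z_1$, with $\langle\sigma_1^z\rangle=1-2w/N$, separates the cases by a constant margin. Any constant $\epsilon<1/3$ then decides the problem.

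The core step is the explicit compilation from \cref{sec:hardness}. Each gate of $C_x$ acts on the $2^q$-dimensional encoded space and decomposes into $\mathrm{poly}(N)$ two-level rotations between neighbouring Dicke levels $w,w+1$. Each such rotation must then be realised as a piecewise-constant schedule of $m_x,m_y,m_z,m_z^2$ with durations and count $\mathrm{poly}(N)$ and accuracy $1/\mathrm{poly}(N)$. To do this, use $m_z^2$ and $m_z$ to imprint $w$-dependent phases, and use $m_x$ (which couples $w\leftrightarrow w\pm1$ with known coefficients) together with group-commutator or phase-kickback constructions to isolate a single transition $w\leftrightarrow w+1$. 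Every step must be described by a formula in $w,N$ and the gate label, so that a log-space transducer can output the schedule piece by piece by recomputing the gate list of $C_x$ on demand. The total error stays $\le 1/\mathrm{poly}$ by a union bound over $\mathrm{poly}(N)$ pieces, so the margin is preserved. The total time $t$ is $\mathrm{poly}(n)$.

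Composition then gives the conclusion. A deterministic solver running in $\mathrm{poly}(N,t)=\mathrm{poly}(n)$ time and $\mathrm{polylog}(N,t)=\mathrm{polylog}(n)$ space, combined with a log-space transducer via the standard recompute-on-demand composition, runs in polynomial time and polylog space, so $\Lclass\subseteq\SC$. If the solver is randomised with bounded error and $\mathcal O(\log(Nt))=\mathcal O(\log n)$ space, the same composition (random bits read one-way, log-space composition still valid) puts $\Lclass\subseteq\BPL$. Together with the known inclusion $\BPL\subseteq\Lclass$, this gives equality.

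The main obstacle is the constructive compilation: achieving a two-level Dicke rotation with only polynomially many collective pulses, inverse-polynomial error, and pulse parameters computable in logarithmic space. The controllability result of \citet{giorda2003} is non-constructive. My first attempt would use $e^{-i\theta m_z^2 N^2}$-type phase gates to make all transitions except $w\leftrightarrow w+1$ off-resonant, and then drive with weak $m_x$ for long times in the rotating frame. This is a selective-pulse (spectral isolation) argument with error controlled by first-order perturbation theory in the detuning, which is $\Omega(1/N)$ relative.
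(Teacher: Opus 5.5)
Your architecture matches the paper's: a log-space, polynomial-time reduction that embeds the circuit in the Dicke basis, splits each gate into rotations of adjacent Dicke pairs via routing swaps, and realises each rotation by a resonant pulse detuned by $m_z^2$ with a weak transverse drive. The result is then composed with the solver by recomputing input symbols on demand.

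The gap is in the readout. Your final stage sends every accepting basis state to $\ket{D_N^N}$ and every rejecting one to $\ket{D_N^0}$. This is not unitary, because it would merge $2^{q-1}$ orthogonal states. The work qubits carry garbage entangled with the output bit, and a bounded-error computation cannot simply uncompute it. Without such a stage there is no constant margin. On the symmetric subspace $\sigma_1^z$ has the same compressed matrix as $m_z$, and for $N=2^q-1$ one has $V^\dagger m_zV=\sum_i(2^i/N)\mathsf Z_i$. So even if the output is the most significant qubit, it carries weight only about $1/2$, while the other qubits can contribute anything in roughly $[-1/2,1/2]$. The ranges of $\langle m_z\rangle$ in the two cases, roughly $[-1,1/3]$ and $[-1/3,1]$, then overlap. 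Encoding in the parity of $w$ is worse, since the least significant bit carries weight $1/N$ in $m_z$. Your ``low-$w$'' variant works only if rejecting configurations are confined to $w\le\alpha N$ and accepting ones to $w\ge(1-\alpha)N$, with $\alpha=\mathcal O(\gamma)$ and $\gamma=1/3-\epsilon$. You give no construction achieving this.

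The missing idea is the paper's fan-out. Append $c-1$ CNOTs copying the output qubit onto fresh qubits, with the constant $c=\lceil\log_2(8/\gamma)\rceil$, and order the $c$ copies as the most significant register qubits. Their total weight in $m_z$ is then at least $1-2^{-c}$, so $|\bra\varphi m_z\ket\varphi-(1-2p_x)|\le2^{1-c}\le\gamma/4$. Compiling with error $\gamma/16$ adds at most $\gamma/4$. The solver's estimate is therefore within $\epsilon+\gamma/2<1/3$ of $1-2p_x$, and its sign decides membership.

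Two smaller points. First, you assume a unitary circuit with a single terminal measurement, whereas $\Lclass$ permits intermediate measurements; the paper removes these using Fefferman and Remscrim within the same resources. Second, for a clean pulse with piecewise-constant controls, the diagonal part must include the linear term $\frac{2\ell+1-N}{2}m_z$ to bring the selected transition to zero frequency. The duration $\tau=8\pi NR$ must also make $e^{-i\tau Q_\ell}=I$, so that no $w$-dependent phases remain and off-resonant transitions complete whole cycles, as in \cref{lem:selective-pulse}.
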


\emph{Classical algorithms.}
So far we considered $t=\mathrm{poly}(N)$, but at short times the picture for classical algorithms changes. 

In \cref{sec:classical-sampling} we consider the task of sampling, that is, drawing a string from the computational-basis measurement distribution of the evolved state, and show that for $t=\mathcal O(\log N/N)$, it is solved by a classical algorithm in polynomial time and logarithmic space.
At these times the evolution operator is approximated by $\mathcal O(\log N)$ terms of its exponential series, the key observation is then that the Hamiltonian in the symmetric subspace is a banded matrix; a matrix element of the truncated series is therefore a sum over $N^{\mathcal O(\chi)}$ paths, each of which can be enumerated and evaluated in logarithmic space.
\begin{theorem}[Informal, classical sampling at short times]\label{thm:informal-sampling}
For local dimension $\chi$ and $t=\mathcal O(\log N/N)$, the measurement distribution of $\rho(t)$ can be sampled to total variation distance $\epsilon$ by a randomised classical algorithm in $(N/\epsilon)^{\mathcal O(\chi)}$ time and, simultaneously, $\mathcal O(\chi\log(N/\epsilon))$ workspace.
\end{theorem}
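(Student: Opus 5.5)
We work throughout in the symmetric subspace $\mathrm{Sym}^N(\mathbb{C}^\chi)$. It has the occupation-number (Dicke) basis $\ket{\mathbf n}$, $\mathbf n=(n_1,\dots,n_\chi)$ with $\sum_a n_a=N$. Its dimension is $D=\binom{N+\chi-1}{\chi-1}=N^{\mathcal O(\chi)}$, so a basis label takes $\mathcal O(\chi\log N)$ bits. The initial state $\ket{\phi}^{\otimes N}$ has multinomial amplitudes $\braket{\mathbf n|\phi^{\otimes N}}=\sqrt{\binom{N}{\mathbf n}}\prod_a \phi_a^{n_a}$, each computable to $\mathrm{poly}(N/\epsilon)$ precision in logarithmic space. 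The Hamiltonian $H=Nf(\{m_\alpha\})$ is a degree-$k$ polynomial in the collective generators $E_{ab}=\sum_i \ket{a}\bra{b}_i$. It therefore maps $\ket{\mathbf n}$ to basis states $\ket{\mathbf n'}$ with $\|\mathbf n-\mathbf n'\|_1\le 2k$, i.e.\ it is banded with $\chi^{\mathcal O(k)}$ nonzero entries per row. Each entry is an explicit product of square-root factors $\sqrt{n_b(n_a+1)}$, computable in $\mathcal O(\chi\log(N/\epsilon))$ space, and $\|H\|=\mathcal O(N)$.

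\textbf{Step 1 (truncation).} Let $\tau=\|H\|t=\mathcal O(\log N)$. Replace $e^{-iHt}$ by $T_L=\sum_{\ell=0}^{L}(-iHt)^\ell/\ell!$ with $L=\mathcal O(\tau+\log(1/\epsilon))=\mathcal O(\log(N/\epsilon))$. This gives $\|e^{-iHt}-T_L\|\le\epsilon/4$, so $\psi_L=T_L\phi^{\otimes N}$ is $\epsilon/4$-close to $\psi(t)$ in norm. The induced distributions are then $\mathcal O(\epsilon)$-close in total variation after normalising by $\|\psi_L\|^2$, which is $1\pm\mathcal O(\epsilon)$.

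\textbf{Step 2 (amplitudes by path sums).} For a fixed output $\mathbf n$ we compute
\[
\braket{\mathbf n|\psi_L}=\sum_{\ell\le L}\frac{(-it)^\ell}{\ell!}\sum_{\mathbf n=\mathbf n_0,\dots,\mathbf n_\ell}\prod_j \bra{\mathbf n_{j-1}}H\ket{\mathbf n_j}\braket{\mathbf n_\ell|\phi^{\otimes N}}.
\]
Each step of a path is one of $\chi^{\mathcal O(k)}$ local moves, so there are $\chi^{\mathcal O(kL)}=(N/\epsilon)^{\mathcal O(\chi)}$ paths for constant $k$. A path is stored as a counter of $\mathcal O(kL\log\chi)=\mathcal O(\chi\log(N/\epsilon))$ bits, and we accumulate the sum in a fixed-point register of $\mathcal O(\log(N/\epsilon))$ bits. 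The summand is a product of $L+1$ factors of size up to $N^{\mathcal O(k)}$, which we must compute to additive precision $\mathrm{poly}(\epsilon/N)$ relative to the total. We handle this by keeping a running product in floating point with $\mathcal O(\log(N/\epsilon))$ mantissa bits, so the relative errors add over $L$ factors.

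\textbf{Step 3 (sampling in small space).} The marginals are exponentially many in $\chi$ but only polynomially many overall, so we sample by a sequential scan. Draw a uniform $u\in[0,1)$ to $\mathcal O(\log(N/\epsilon))$ bits. Enumerate $\mathbf n$ in lexicographic order, computing $p(\mathbf n)=|\braket{\mathbf n|\psi_L}|^2$ by Step 2 and maintaining a cumulative sum; output the first $\mathbf n$ at which the cumulative sum exceeds $u\cdot Z$. The normaliser $Z=\sum_{\mathbf n}p(\mathbf n)$ comes from a first pass. All of this needs only a constant number of $\mathcal O(\chi\log(N/\epsilon))$-bit registers and total time $D\cdot(N/\epsilon)^{\mathcal O(\chi)}$.

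Finally, to sample a qubit or qudit string, we convert the sampled type $\mathbf n$ into a uniformly random string of that type. Within a Dicke state, the distribution over strings of fixed type is uniform, and a uniform string of type $\mathbf n$ can be emitted site by site in logarithmic space by sequential conditional sampling: choose symbol $a$ with probability $n_a^{\mathrm{rem}}/N^{\mathrm{rem}}$.

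\textbf{Main obstacle.} The delicate point is the precision bookkeeping. We need the accumulated rounding errors over $(N/\epsilon)^{\mathcal O(\chi)}$ terms to stay below $\epsilon/D$ per amplitude while every register remains $\mathcal O(\chi\log(N/\epsilon))$ bits. Taking each term to precision $\epsilon\,(N/\epsilon)^{-c\chi}$ for a large constant $c$ still costs only $\mathcal O(\chi\log(N/\epsilon))$ bits, so this should close.
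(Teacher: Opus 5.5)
Your proposal is correct and follows essentially the paper's route: truncate the Taylor series at order $\mathcal O(\log(N/\epsilon))$, use that the compressed Hamiltonian is banded so that amplitudes become path sums enumerable in $\mathcal O(\chi\log(N/\epsilon))$ space, sample the occupation $\mathbf n$ by a sequential scan, and then emit a uniformly random string of that type site by site. The differences are cosmetic. You square pure-state amplitudes rather than computing the diagonal of $U_L\varrho(0)U_L^\dagger$, which also makes your weights automatically nonnegative, and you use a two-pass inverse-CDF scan instead of the paper's one-pass weighted reservoir sampling; note, though, that $u$ must then be drawn to $\mathcal O(\log(D/\epsilon))=\mathcal O(\chi\log(N/\epsilon))$ bits rather than $\mathcal O(\log(N/\epsilon))$, so that the discretisation error summed over the $D$ outcomes stays below $\epsilon$.
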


Moreover, expectation values of local observables can still be estimated classically in logarithmic space and polynomial time for $t=\mathcal O(1)$, although only to constant accuracy, as we show in \cref{sec:cluster-method}.
The method is a cluster expansion \cite{wild2023}, uses neither the symmetric subspace nor permutation invariance and applies to any all-to-all $k$-body Hamiltonian with bounded, suitably normalised terms, in particular to $H_{J=0}$.
\begin{theorem}[Informal, classical simulation at constant times]\label{thm:informal-classical}
For local dimension $\chi$ \cref{prob:averaged-dynamics} can be solved for $t=\mathcal O(1)$ to constant error $\epsilon$ by a deterministic classical algorithm in $\mathrm{poly}(\chi N)$ time and, simultaneously, $\mathcal O(\log(\chi N))$ workspace.
\end{theorem}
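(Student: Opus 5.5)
We prove \cref{thm:informal-classical} with a cluster (Taylor/Lieb--Robinson-free) expansion in the spirit of \citet{wild2023}, working in the Heisenberg picture. We never use the symmetric subspace. Write $H=\sum_{X} h_X$ as a sum over subsets $X$ with $|X|\le k$ and $\|h_X\|=\mathcal O(N^{1-|X|})$, so that $\sum_{X\ni i}\|h_X\|=\mathcal O(1)$ for each site $i$. This is the normalisation of $H_{J=0}$ after expanding $f(m_x,m_y,m_z)$ into monomials, or its qudit analogue.

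\emph{Expansion.} Expand $O(t)=e^{iHt}Oe^{-iHt}=\sum_{n\ge0}\frac{(it)^n}{n!}\mathrm{ad}_H^n(O)$. Then
\[
\mathrm{ad}_H^n(O)=\sum_{X_1,\dots,X_n}[h_{X_n},\dots,[h_{X_1},O]],
\]
where only sequences in which each $X_j$ intersects $\mathrm{supp}(O)\cup X_1\cup\dots\cup X_{j-1}$ survive. These connected sequences are the clusters.

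\emph{Truncation.} Truncate at order $n\le K$. For connected sequences the support after $j$ steps has size at most $k_O+(k-1)j$. Each step therefore contributes a combinatorial factor of at most $\mathcal O(k_O+kj)$ times the per-site weight $\mathcal O(1)$, which gives $\|\mathrm{ad}_H^n(O)\|\le 2^n\|O\|\prod_{j<n}c(k_O+kj)\lesssim (ck)^n n!\,\|O\|$. The series therefore converges for $t<t^*=1/(ck)$, with tail bounded by $\sum_{n>K}(t/t^*)^n$. For constant $\epsilon$ a constant $K=\mathcal O(\log(1/\epsilon))$ suffices. For larger constant $t$, split into $t/t^*$ steps and apply the truncated map repeatedly. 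This keeps all orders constant, but it needs the standard argument that a truncated evolution of a constant-support observable is a sum of constant-support operators with controlled $\ell_1$ norm of coefficients. This is where the restriction to constant $\epsilon$ enters, since errors compound by a constant factor per step.

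\emph{Evaluation in logarithmic space.} The estimate is
\[
\sum_{n\le K}\frac{(it)^n}{n!}\sum_{X_1..X_n}\Tr\!\big(\rho(0)[h_{X_n},\dots,[h_{X_1},O]]\big).
\]
Each term involves at most $k_O+kK=\mathcal O(1)$ sites. Because $\rho(0)$ is a product state, its trace reduces to a product over the $\mathcal O(1)$ sites of a $\chi^{\mathcal O(1)}$-dimensional computation. A deterministic machine enumerates the tuples $(X_1,\dots,X_n)$ using $nk$ indices of $\mathcal O(\log N)$ bits each, computes the matrix entries of $h_X$ on the fly from the input description of $f$, evaluates the nested commutator on constantly many qudits, and accumulates the sum to $\mathcal O(\log(1/\epsilon)+\log N)$ bits of precision. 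The total is $N^{\mathcal O(K)}=\mathrm{poly}(N)$ time and $\mathcal O(\log(\chi N))$ space. The tuple index and the running sum are the only nonconstant memory.

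\emph{Main obstacle.} The crux is the combinatorial bound on connected clusters, namely controlling the count $\sum_{X_1..X_n \text{ conn.}}\prod_j\|h_{X_j}\|$ using only the all-to-all normalisation and not geometric locality. For a single step the bound is immediate. Extending it to all $t=\mathcal O(1)$ by iteration while keeping the number of terms polynomial requires care, because every time step multiplies the support size by a constant factor. I would handle this by noting that after a constant number $t/t^*$ of steps the support is still $\mathcal O(1)$. The enumeration can then be organised as a single nested sum of constant depth, which remains polynomial in time and logarithmic in space.
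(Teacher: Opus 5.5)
Your short-time argument is sound. For $t<t^\ast$ the connected-sequence bound gives $\|\mathrm{ad}_H^n(O)\|\lesssim(ck)^n n!\,\|O\|$, so a constant truncation order suffices, and your log-space enumeration of the surviving tuples is essentially \cref{lem:eval}. The gap is the extension to arbitrary constant $t$ by slicing.

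Because $H$ is time independent, each truncated slice $T_K=\sum_{n\le K}\frac{(i\delta)^n}{n!}\mathrm{ad}_H^n$ with $\delta=t/L$ is a polynomial in the single superoperator $\mathrm{ad}_H$. The slices therefore commute and compose to
\[
  T_K^L=\sum_{n\le LK}\gamma_n\,\frac{(it)^n}{n!}\,\mathrm{ad}_H^n,\qquad 0\le\gamma_n\le1,\quad\gamma_n=1\text{ for }n\le K,
\]
where $\gamma_n$ is the probability that $n$ balls thrown into $L$ bins leave at most $K$ in every bin. Your estimator is thus only a reweighted truncation of the same Maclaurin series of $f(t)=\Tr(\rho(0)O(t))$. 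Whether it converges to $f(t)$ beyond the radius of that series is an analytic-continuation question, and step-wise estimates cannot settle it.

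Indeed, the step-wise estimates do not compound. For an operator $A$ of support $s$ one only has $\|\mathrm{ad}_HA\|\lesssim c\,s\,\|A\|$. So after the first slice, both the truncation tail and the $\ell_1$ amplification grow with the support already generated. Weight a term of support $s$ by $z^{s/(k-1)}$. One untruncated slice then acts on the generating function as $G(z)\mapsto G\big(z/(1-c\delta z)\big)$, and $L$ slices compose to $G\big(z/(1-ctz)\big)$, independently of $L$. The majorant diverges at $t=1/c$, the same radius as for the unsliced series, so slicing gains nothing.

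This is not an artefact of the bound. For $H=N(m_z^2/2+m_x/2)$ started from $\ket0^{\otimes N}$, the large-$N$ limit of $\langle m_z(t)\rangle$ is $\operatorname{sech}t$ (\cref{sec:rt-classical} at $B=1/2$). Its Maclaurin series has radius $\pi/2$, so plain truncation fails for $t>\pi/2$ at large $N$. Any reweighting $\gamma_n$ would have to be justified as a summation method for the analytic continuation.

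The missing idea is the one in \cref{lem:analyticity,lem:continuation}. At complex time $s=t+i\tau$ one has $e^{iHs}Oe^{-iHs}=\mathcal U_t\big(e^{-H\tau}Oe^{H\tau}\big)$, and real-time conjugation is an isometry. The cluster expansion is therefore needed only in $\tau$. For deterministic $H$ with $\|h_S\|\le1$ it gives $|f(t+i\tau)|\le\|O\|/(1-2e\kappa|\tau|)$, i.e.\ boundedness on a neighbourhood of the whole real axis rather than only on a disc around $0$.

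Now compose with $\phi(z)=-\log\big(1-(1-\alpha^{-1})z\big)/\log\alpha$. This map sends a disc of radius $1+\alpha^{-1}$ into that neighbourhood and satisfies $\phi(1)=1$. Hence the Maclaurin series of $g(z)=f(t\phi(z))$ converges geometrically at $z=1$, and order $M=e^{\mathcal O(1+t^2)}\log(2/\epsilon)$ suffices. The result is again a finite combination of the same nested-commutator terms, with weights $[z^j]\phi(z)^m$ in place of your $\gamma_n$, so your enumeration carries over unchanged.

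This also corrects where constant $\epsilon$ enters. It is not error compounding across slices, but the $(\chi N)^{\mathcal O(kM)}$ cost with $M\propto\log(2/\epsilon)$.
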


\subsection{\texorpdfstring{$r = \mathcal{O}(1)$}{r = O(1)}: Quantum Hopfield Models}
\label{sec:summary-hopfield}

Next, we consider the regime where $J\neq0$ and fixed $r = \mathcal{O}(1)$.

As we show in \cref{sec:reduction-fixed-rank} in this setting \cref{prob:averaged-dynamics} is exactly reproduced by a permutation-invariant, non-disordered Hamiltonian on an enlarged system.
The idea is to promote the disorder to a quantum degree of freedom by associating each site with ancillas and replacing the random disorder by quantum operations on them.
As the resulting Hamiltonian is diagonal in the computational basis of the ancillas, preparing the ancillas in $\ket{+}$, then tracing over them, reproduces the uniform average over sign configurations exactly, as each ancilla basis state selects one sign configuration. 

\begin{lemma}[Informal, reduction of fixed-rank disorder]\label{thm:informal-reduction}
For $H_J$, let $H^\prime_{J=0}$ be obtained by replacing every $v_{i\ell}$ with $\sigma^z_{i\ell}$ on an ancilla qubit; it is a non-disordered, $k$-local, permutation-invariant Hamiltonian on $N$ qudits of local dimension $\chi=2^{r+1}$.
Then \cref{prob:averaged-dynamics} for $H_J$, $\rho(0)$ is identical to \cref{prob:averaged-dynamics} for $H^\prime_{J=0}$, $\rho^\prime(0) = \rho(0)\otimes\ket{+}\bra{+}^{\otimes rN}$.
\end{lemma}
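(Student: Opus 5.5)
The plan is to build $H'_{J=0}$ explicitly, show that it is block diagonal in the ancilla computational basis, and then use the $\ket{+}$ ancillas to turn the quantum dynamics into the classical average over signs.

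\emph{Construction.} Attach to each site $i$ the $r$ ancilla qubits $(i,\ell)$, $\ell=1,\dots,r$. Define $H'$ from \cref{eq:hamiltonian} by the substitution $v_{i\ell}\mapsto\sigma^z_{i\ell}$. This gives
\begin{equation*}
H' = N f(m_x,m_y,m_z)\otimes\mathbb{1} - \frac{J}{N^{k-1}\sqrt r}\sum_{\ell=1}^r\sum_{i_1<\cdots<i_k}\prod_{a=1}^k \sigma^z_{i_a}\sigma^z_{i_a\ell}.
\end{equation*}
The same substitution handles general weights $\mu_\ell$ and the Hopfield form \cref{eq:models-hopfield}. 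Group each system qubit with its $r$ ancillas into one qudit of dimension $\chi=2^{r+1}$. Each term then acts on at most $k$ qudits, and the sum over $i_1<\cdots<i_k$ is symmetric. So $H'$ is $k$-local, carries no disorder, and is invariant under permutations of the qudits, where a permutation moves a site together with its ancillas.

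\emph{Block diagonalisation.} Write $P_{\mathbf v}=\ket{\mathbf v}\bra{\mathbf v}$ for ancilla computational basis states $\mathbf v\in\{\pm1\}^{rN}$. The ancilla operators in $H'$ are all $\sigma^z$'s, so they commute with every $P_{\mathbf v}$. On the range of $P_{\mathbf v}$, each $\sigma^z_{i\ell}$ acts as the scalar $v_{i\ell}$, which gives
\begin{equation*}
H'=\sum_{\mathbf v} H_{J(\mathbf v)}\otimes P_{\mathbf v},
\end{equation*}
where $H_{J(\mathbf v)}$ is exactly the disordered Hamiltonian with sign pattern $\mathbf v$. This identity extends to functions of $H'$, so $e^{-iH't}=\sum_{\mathbf v}e^{-iH_{J(\mathbf v)}t}\otimes P_{\mathbf v}$. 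The same holds for piecewise-constant schedules: each piece is block diagonal, so the product of pieces is too.

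\emph{Averaging.} Take $\rho'(0)=\rho(0)\otimes\ket{+}\bra{+}^{\otimes rN}$ and an observable $O$ acting on the system qubits only, i.e.\ $O\otimes\mathbb{1}$. Then
\begin{equation*}
\Tr\big((O\otimes\mathbb{1})\rho'(t)\big)=\sum_{\mathbf v,\mathbf w}\Tr\big(O\,e^{-iH_{J(\mathbf v)}t}\rho(0)e^{iH_{J(\mathbf w)}t}\big)\,\bra{\mathbf w}\!+\!\rangle\langle+\!\ket{\mathbf v}.
\end{equation*}
The partial trace over the ancillas contributes $\Tr(P_{\mathbf v}\ket{+}\bra{+}^{\otimes rN}P_{\mathbf w})$, which equals $\delta_{\mathbf v\mathbf w}2^{-rN}$. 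The double sum therefore collapses to $2^{-rN}\sum_{\mathbf v}\Tr(O\rho_{J(\mathbf v)}(t))$. Because the $v_{i\ell}$ are i.i.d.\ Rademacher, this is exactly $\overline F_O(t)$. Equivalently, the reduced system state is $\mathbb{E}_J[\rho(t)]$.

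\emph{Remaining checks and main obstacle.} Two things remain.

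First, the new instance must be a valid input to \cref{prob:averaged-dynamics}. The initial state $\rho'(0)=(\ket{\phi}\bra{\phi}\otimes\ket{+}\bra{+}^{\otimes r})^{\otimes N}$ is a symmetric product state of qudits. The observable $O\otimes\mathbb{1}$ is supported on the same $k_O$ qudits and has the same norm, so the error guarantee carries over unchanged.

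Second, $H'$ must fit the qudit form of the $J=0$ class, meaning it is written through collective qudit operators. Permutation invariance together with $k$-locality gives this by Schur--Weyl duality: a symmetric sum of $k$-body products can be expressed as a degree-$k$ polynomial in the collective operators $\sum_i E^{(a)}_i$, up to lower-order corrections from coinciding indices, which are themselves of this form. Writing out this polynomial representation with the correct normalisation, so that the $\chi=2^{r+1}$ results apply directly, is the one step that needs some care. The identity of the two problems itself is the exact calculation above.
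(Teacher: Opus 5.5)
Your argument is correct and follows the paper's proof essentially step for step: you use the block decomposition $H'_{J=0}=\sum_{\mathbf v}H_{\mathbf v}\otimes\ket{\mathbf v}\bra{\mathbf v}$, expand $\ket{+}^{\otimes rN}$ in the ancilla basis so that the partial trace keeps only the blocks $\mathbf v=\mathbf w$, and read off $k$-locality and permutation invariance from $\prod_j\sigma^z_{i_j}\sigma^z_{i_j\ell}$; the collective-operator rewriting you flag as the remaining step is done in the paper right after the lemma, via Newton's identities and $Z_{i\ell}^2=I$. One small slip: the ancilla factor in your displayed trace formula should be $\Tr(P_{\mathbf v}\ket{+}\bra{+}^{\otimes rN}P_{\mathbf w})=\delta_{\mathbf v\mathbf w}2^{-rN}$, as your next sentence says, not $\bra{\mathbf w}+\rangle\langle+\ket{\mathbf v}$ alone.
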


This then allows to connect to the results of \cref{sec:summary-nondisordered} -- in particular, $H^\prime_{J=0}$ corresponds to exactly the setting of \cref{thm:informal-quantum,thm:informal-sampling,thm:informal-classical}.
The corresponding quantum and classical algorithmic results are thus implied for the $r=\mathcal O (1)$ regime with $\chi = 2^{r+1}$, and in particular for the quantum Hopfield models, as in the following.
\begin{theorem}[Informal, quantum simulation of quantum Hopfield models]\label{thm:hopfield-quantum}
For $|B|,\max_\ell|\mu_\ell|\le\mathrm{poly}(N)$ \cref{prob:averaged-dynamics} can be solved to error $\epsilon$ by a bounded-error quantum algorithm using, simultaneously, $\mathrm{poly}(N,t,1/\epsilon)$ time, $\mathcal O(\log N)$ qubits and $\mathcal O(\log N+\log(t+1/\epsilon))$ bits of classical workspace.
\end{theorem}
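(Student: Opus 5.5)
The proof combines \cref{thm:informal-reduction} with \cref{thm:informal-quantum}. The work is checking that the dilated instance meets the hypotheses of \cref{thm:informal-quantum} with $\chi=2^{r+1}=\mathcal O(1)$, and that no hidden parameter breaks the logarithmic space bound.

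\emph{Step 1: dilate the disorder.}
Write $H_{\mathrm{Hop}}$ in the form of \cref{eq:hamiltonian}, allowing general weights $\mu_\ell$. Replace each $v_{i\ell}$ by $\sigma^z_{i\ell}$ acting on an ancilla. Group each system qubit with its $r$ ancillas into a qudit of dimension $\chi=2^{r+1}$. The dilated Hamiltonian is
\begin{equation*}
H'=BNm_x+\frac1N\sum_{\ell}\mu_\ell\Big(\sum_i\sigma^z_{i\ell}\sigma^z_i\Big)^2 .
\end{equation*}
It is two-local and permutation invariant over the $N$ qudits. The initial state $\rho(0)\otimes\ket{+}\bra{+}^{\otimes rN}$ is a symmetric product state of the form $(\ket{\phi'}\bra{\phi'})^{\otimes N}$.

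By \cref{thm:informal-reduction}, $\overline F_O(t)=\Tr(O\,e^{-iH't}\rho'(0)e^{iH't})$ exactly. The reason is that $H'$ is diagonal in the ancilla $z$-basis. Tracing out the ancillas therefore averages uniformly over all $2^{rN}$ sign patterns, which is precisely the Rademacher average. The observable $O$ acts trivially on the ancillas, so it stays $k_O$-local in the qudit picture. Its norm is unchanged.

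\emph{Step 2: express $H'$ in a qudit operator basis.}
$H'$ must be written as $N f(\{m_\beta\})$, where $m_\beta=N^{-1}\sum_i E^\beta_i$ and $\{E^\beta\}$ is an operator basis on $\mathbb C^\chi$. Expanding the square gives
\begin{equation*}
\Big(\sum_i\sigma^z_{i\ell}\sigma^z_i\Big)^2=N^2 m_{(\ell)}^2,\qquad m_{(\ell)}:=N^{-1}\sum_i(\sigma^z_{\ell}\sigma^z)_i ,
\end{equation*}
where $\sigma^z_\ell\sigma^z$ denotes the single-qudit operator $\sigma^z_{i\ell}\sigma^z_i$. So $f$ has degree 2 and coefficients of size $\mathcal O(|B|+\max_\ell|\mu_\ell|)=\mathrm{poly}(N)$.

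\cref{thm:informal-quantum} is stated for general $f$, so its time bound $\mathrm{poly}(t,1/\epsilon)N^{\mathcal O(\chi)}$ implicitly assumes a normalisation of $f$. I would absorb the coefficient size by rescaling time: set $f=\lambda\tilde f$ with $\lambda=\mathrm{poly}(N)$ and $\tilde f$ normalised, and replace $t$ by $t'=\lambda t$. The dynamics are unchanged, and $\mathrm{poly}(t',1/\epsilon)$ remains $\mathrm{poly}(N,t,1/\epsilon)$.

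\emph{Step 3: bound the resources.}
Since $\chi=\mathcal O(1)$, we get $N^{\mathcal O(\chi)}=\mathrm{poly}(N)$ time and $\mathcal O(\chi\log N)=\mathcal O(\log N)$ qubits. The classical space is $\mathcal O(\log N+\log(\lambda t+1/\epsilon))=\mathcal O(\log N+\log(t+1/\epsilon))$.

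\emph{Main obstacle: log-space computability of the inputs.}
The expected difficulty is confirming that the log-space controller of \cref{thm:informal-quantum} can compute everything it needs from the Hopfield input. This includes the coefficients of $f$ in the qudit basis, the matrix elements of $H'$ on the symmetric subspace of $(\mathbb C^{\chi})^{\otimes N}$ (indexed by occupation numbers of $\chi$ types, hence $N^{\mathcal O(\chi)}$ states), and the preparation of $\ket{\phi'}^{\otimes N}$ projected onto that subspace. Computing the coefficients of $f$ amounts to multiplying a constant number of $\chi\times\chi$ matrices with $\mathrm{poly}(N)$-bounded entries, which fits in $\mathcal O(\log N)$ bits. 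The matrix-element and state-preparation steps are covered by \cref{thm:informal-quantum} once $f$ is given, so nothing beyond the constant-$\chi$ specialization is needed.
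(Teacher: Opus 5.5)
Your proposal is correct and follows the paper's proof. The paper likewise dilates the signs into ancillas via \cref{thm:informal-reduction}, writes $H'=N\big(Bm_x+\sum_\ell\mu_\ell\,m(Z_\ell)^2\big)$ with coefficient norm $|B|+\sum_\ell|\mu_\ell|\le\mathrm{poly}(N)$, and applies \cref{thm:compression-dynamics} at $\chi=2^{r+1}=\mathcal O(1)$ to the symmetric product state of $\ket\phi\otimes\ket+^{\otimes r}$; your only departure, rescaling time to normalise $f$, is harmless but unnecessary, since the formal theorem already allows $\|f\|_1\le\mathrm{poly}(N)$.
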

\begin{theorem}[Informal, classical sampling of quantum Hopfield models at short times]\label{thm:hopfield-sampling}
For $|B|,\max_\ell|\mu_\ell|=\mathcal O(1)$ and $t=\mathcal O(\log N/N)$ the computational-basis measurement distribution of the disorder-averaged state $\mathbb E_J[\rho(t)]$ of \cref{prob:averaged-dynamics} can be sampled to total variation distance $\epsilon$ by a randomised classical algorithm in $\mathrm{poly}(N/\epsilon)$ time and, simultaneously, $\mathcal O(\log(N/\epsilon))$ workspace.
\end{theorem}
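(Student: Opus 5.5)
The plan is to reduce to the disorder-free case via \cref{thm:informal-reduction} and then invoke the short-time sampling result \cref{thm:informal-sampling} with local dimension $\chi=2^{r+1}=\mathcal O(1)$. Two compatibility points must be checked first.

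\textbf{Step 1: writing the Hopfield model in the dilated form.} The quantum Hopfield Hamiltonian \cref{eq:models-hopfield} with general weights $\mu_\ell$ expands as
\begin{equation*}
\frac1N\sum_\ell\mu_\ell A_\ell^2=\frac1N\sum_\ell\mu_\ell\Bigl(N+2\sum_{i<j}v_{i\ell}v_{j\ell}\sigma^z_i\sigma^z_j\Bigr).
\end{equation*}
The additive constant only contributes a global phase. Replacing each $v_{i\ell}$ by $\sigma^z_{i\ell}$ on an ancilla gives
\begin{equation*}
H'=BNm_x+\frac{2}{N}\sum_\ell\mu_\ell\sum_{i<j}\sigma^z_i\sigma^z_{i\ell}\sigma^z_j\sigma^z_{j\ell}.
\end{equation*}
Grouping each spin with its $r$ ancillas into a qudit of dimension $2^{r+1}$, this is a $2$-local permutation-invariant operator. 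It can be written as $N f(\{M_a\})$, where $f$ is a degree-$2$ polynomial in the collective qudit operators $M_a=N^{-1}\sum_i P^{(a)}_i$ and the $P^{(a)}$ form a Pauli basis on $\mathbb C^{\chi}$. The coefficients of $f$ are $\mathcal O(1)$, because $|B|,|\mu_\ell|=\mathcal O(1)$; the diagonal $i=j$ correction terms are absorbed into a constant. This matches the setting assumed in \cref{thm:informal-sampling}.

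\textbf{Step 2: transferring from expectation values to the measurement distribution.} \cref{thm:informal-reduction} is stated for expectation values, so I would strengthen it slightly at the state level. Since $H'$ is diagonal in the ancilla computational basis, it decomposes as $H'=\sum_{v}H_v\otimes\ket v\bra v$. Hence
\begin{equation*}
e^{-iH't}\bigl(\rho(0)\otimes\ket{+}\bra{+}^{\otimes rN}\bigr)e^{iH't}=2^{-rN}\sum_{v,w}e^{-iH_vt}\rho(0)e^{iH_wt}\otimes\ket v\bra w.
\end{equation*}
Measuring everything in the computational basis and marginalising the ancilla outcomes keeps only the diagonal terms $v=w$. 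The resulting distribution on the system is exactly that of $\mathbb E_J[\rho(t)]$, since $v$ is uniform over sign configurations. So it suffices to sample the full computational-basis distribution of $\rho'(t)$ and discard the ancilla bits; discarding is a marginal and does not increase total variation distance.

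\textbf{Step 3: applying the sampling result.} The initial state $\rho'(0)=(\ket\phi\bra\phi\otimes\ket+\bra+^{\otimes r})^{\otimes N}$ is a symmetric product state of qudits. The qudit computational basis is the product of the spin and ancilla bases, so \cref{thm:informal-sampling} applies with $\chi=2^{r+1}$. This gives running time $(N/\epsilon)^{\mathcal O(\chi)}=\mathrm{poly}(N/\epsilon)$ and workspace $\mathcal O(\chi\log(N/\epsilon))=\mathcal O(\log(N/\epsilon))$, because $r=\mathcal O(1)$. Outputting only the system bits of each qudit sample adds no space.

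The main obstacle is Step 1: confirming that the normalisation used by \cref{thm:informal-sampling} covers $H'$. That result likely assumes $\|f\|$ bounded, so that the cutoff time $t=\mathcal O(\log N/N)$ keeps $\|H't\|=\mathcal O(\log N)$ and the truncated series has only $\mathcal O(\log N)$ terms. With $\mu_\ell=\mathcal O(1)$ the interaction term has norm $\mathcal O(rN)$, so this holds. I would verify it explicitly by bounding the banded symmetric-subspace matrix entries of $H'$.
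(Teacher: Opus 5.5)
Your proposal is correct and follows the paper's proof essentially step for step. You dilate the signs to ancillas via \cref{thm:informal-reduction}; its formal version already contains the state-level identity $\tr_{\mathrm{anc}}\rho'(t)=\mathbb E_J[\rho(t)]$ that you rederive. You then write $H'=N\big(Bm_x+\sum_\ell\mu_\ell m(Z_\ell)^2\big)$ up to an additive constant, with coefficient norm $|B|+\sum_\ell|\mu_\ell|=\mathcal O(1)$, apply the short-time sampler with $\chi=2^{r+1}$, and drop the ancilla bits as they are output, since marginalisation does not increase total variation distance.

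The one input hypothesis you leave implicit is that the compressed initial state is accessible in small space; the paper checks this explicitly. Concretely, the amplitudes $M_N(\mathbf n)^{1/2}\prod_b(\phi'_b)^{n_b}$ of $(\ket\phi\otimes\ket+^{\otimes r})^{\otimes N}$ must be computable to inverse-polynomial precision in polynomial time and $\mathcal O(\chi\log N)$ space.
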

In this case, the ancilla bits of each sampled symbol are discarded, as tracing out the ancillas yields $\mathbb E_J[\rho(t)]$ (\cref{sec:reduction-fixed-rank}).
\begin{theorem}[Informal, classical simulation of quantum Hopfield models at constant times]\label{thm:hopfield-classical}
For $|B|,\max_\ell|\mu_\ell|=\mathcal O(1)$ and $t=\mathcal O(1)$ \cref{prob:averaged-dynamics} can be solved to constant error $\epsilon$ by a deterministic classical algorithm in $\mathrm{poly}(N)$ time and, simultaneously, $\mathcal O(\log N)$ workspace.
\end{theorem}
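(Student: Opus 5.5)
The plan is to combine the reduction of \cref{thm:informal-reduction} with the cluster-expansion result \cref{thm:informal-classical}, applied to the enlarged system.

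\emph{Step 1: reduce to a disorder-free model.}
Replace every $v_{i\ell}$ by $\sigma^z_{i\ell}$ acting on an ancilla qubit attached to site $i$. This gives a non-disordered Hamiltonian $H'_{J=0}$ on $N$ qudits of local dimension $\chi=2^{r+1}$. For the Hopfield form \cref{eq:models-hopfield} with general weights $\mu_\ell$ it reads
\[
H' = BN m_x + \frac1N\sum_{\ell=1}^r \mu_\ell \Big(\sum_i \sigma^z_{i\ell}\sigma^z_i\Big)^2 .
\]
By \cref{thm:informal-reduction}, $\overline F_O(t)$ equals $F_{O\otimes\mathbb 1}(t)$ for $H'$ started from $\rho(0)\otimes\ket{+}\bra{+}^{\otimes rN}$. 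This starting state is again a symmetric product state on the qudits. The observable $O\otimes\mathbb 1$ is supported on $k_O$ qudits and has the same operator norm as $O$. The reduction uses only diagonal ancilla operators and the uniform ancilla superposition, so I expect it to carry over verbatim to arbitrary real $\mu_\ell$.

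\emph{Step 2: check the hypotheses of the cluster expansion.}
\Cref{thm:informal-classical} needs an all-to-all $k$-body Hamiltonian with bounded, suitably normalised terms. The expected normalisation is that each $q$-body term has norm $\mathcal O(N^{1-q})$, so that the total weight of terms touching a fixed site is $\mathcal O(1)$.

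Expanding the square gives three kinds of terms:
\begin{itemize}
\item the constants $\mu_\ell N\cdot N/N=\mu_\ell N$, which only contribute a global phase and can be dropped;
\item the two-body terms $\tfrac{2\mu_\ell}{N}(\sigma^z_{i\ell}\sigma^z_i)(\sigma^z_{j\ell}\sigma^z_j)$, each of norm $\mathcal O(1/N)$;
\item the one-body terms $B\sigma^x_i$, each of norm $\mathcal O(1)$.
\end{itemize}
Since $|B|$ and $\max_\ell|\mu_\ell|$ are $\mathcal O(1)$ and $r=\mathcal O(1)$, all of these are correctly normalised. The locality is $k=2$ in qudit terms.

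\emph{Step 3: apply the cluster expansion.}
Invoke \cref{thm:informal-classical} with $\chi=2^{r+1}=\mathcal O(1)$. This gives a deterministic algorithm running in $\mathrm{poly}(\chi N)=\mathrm{poly}(N)$ time and $\mathcal O(\log(\chi N))=\mathcal O(\log N)$ space, to constant error, for $t=\mathcal O(1)$.

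One point to verify is that the algorithm of \cref{thm:informal-classical} is allowed to generate the terms of $H'$ on the fly from $B$, $\mu_\ell$ and $N$ in logarithmic space. This is immediate, since each term is indexed by $(i,j,\ell)$ and has an explicit coefficient.

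\emph{Main obstacle.}
The delicate point is whether the constant-time radius $t=\mathcal O(1)$ in \cref{thm:informal-classical} is uniform in the local dimension and in the number of terms per pair. Each pair $(i,j)$ now carries $r$ interaction terms, so the local interaction strength grows by a factor of $r$. The convergence time of the cluster expansion therefore scales like $1/(r\max_\ell|\mu_\ell|+|B|)$. I would handle this by merging the $r$ pattern terms on a pair into a single bounded two-qudit term of norm $\mathcal O(r\max_\ell|\mu_\ell|/N)$. Since $r$ and $\max_\ell|\mu_\ell|$ are both constants, the admissible time window remains $\Theta(1)$.
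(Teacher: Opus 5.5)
Your proposal is correct and follows the paper's proof essentially verbatim. You dilate the patterns to ancillas via \cref{lem:reduction}, observe that $H'_{J=0}$ is, up to an additive constant, of the form \cref{eq:classical-hamiltonian} with $h'_T=0$ and $\mathcal J=\mathcal O(|B|+\sum_\ell|\mu_\ell|)$, and invoke \cref{thm:classical} with $\chi=2^{r+1}$. (Two minor points. The constant from expanding the squares is $\mu_\ell I$, not $\mu_\ell N$, which is immaterial. Your ``main obstacle'' is already absorbed: \cref{thm:classical} requires a single merged term $h_T$ per pair anyway, and it reaches any $t=\mathcal O(1)$ by analytic continuation with $M=e^{\mathcal O(1+\mathcal J^2t^2)}\log(2/\epsilon)$ at fixed $k,k_O$, independent of $\chi$.)
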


In \cref{app:rank-one-gauge} we show how the dynamics of a Hopfield model can be related to LMG dynamics through a gauge transformation for the special case of $r=1$, which gives some evidence that the long-time dynamics for this model class can be non-trivial.
Moreover, we use the freedom of non-uniform $\mu_\ell$ described above to match the low-order bond statistics of the SK model at fixed $r$, and numerically demonstrate how the Hopfield models recover the SK model for large $r$ (see \cref{sec:numerics}).

This motivates simulating the ensemble-averaged dynamics of these models at polynomial evolution times and inverse-polynomial precision to explore their behaviour.
As the reduction to permutation-invariant qudit systems allows for quantum simulation of this regime in polynomial time with $\mathcal O(\log N)$ qubits, whereas no classical algorithm meeting the same time and space bounds simultaneously is known, and the short-time classical algorithms no longer applying, such fixed-rank Hopfield dynamics could thus potentially provide a candidate setting for a simultaneous time--space quantum advantage.

\subsection{\texorpdfstring{$r \rightarrow \infty$}{r to infinity}: \texorpdfstring{$p$}{p}-Spin \& Sherrington--Kirkpatrick Models}
\label{sec:summary-gaussian}

Finally, we consider the regime of $H_J$ with $J\neq0$, but in the limit of $r \rightarrow \infty$.

As we show in \cref{sec:cluster-method} the constant time result via cluster expansions also extends to this regime with independent Gaussian couplings.

\begin{theorem}[Informal, classical simulation of $p$-spin and SK dynamics at constant times]\label{thm:sk-classical}
For the $p$-spin models with $p,B,J = \mathcal{O}(1)$, and in particular the SK model, \cref{prob:averaged-dynamics} can be solved for $t=\mathcal O(1)$ to constant error $\epsilon$ by a classical algorithm in $\mathrm{poly}(N)$ time and, simultaneously, $\mathcal O(\log N)$ workspace.
\end{theorem}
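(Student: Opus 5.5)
The plan is to reduce the disorder-averaged quantity to a sum of averages over fixed coupling realisations that the cluster expansion of \cref{thm:informal-classical} can handle, and to control the Gaussian couplings by truncation and derandomisation. By linearity, $\overline F_O(t)=\mathbb{E}_g[\tr(O e^{-iH_pt}\rho(0)e^{iH_pt})]$. First I will Taylor expand $\tr(O\rho(t))$ in nested commutators, i.e.\ as a sum over clusters (multisets of Hamiltonian terms) in the style of \cite{wild2023}, and only then take the Gaussian average term by term. Each term of $H_p$ is $J N^{-(p-1)/2}g_{i_1\cdots i_p}\sigma^z\cdots\sigma^z$ or a transverse field $B\sigma^x_i$. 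After averaging, a cluster contributes a product of Gaussian moments $\mathbb{E}[\prod g^{n_e}]$. This vanishes unless every coupling appears an even number of times, and otherwise it is bounded by Wick's theorem by $(n-1)!!$.

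The second step is convergence. A connected cluster containing $O$'s support with $m$ distinct coupling edges, each used at least twice, contributes a factor $N^{-(p-1)m}$ from the squared normalisation. The number of such clusters grows like $N^{(p-1)m}$, since each new edge attaches at an existing vertex and introduces $p-1$ new sites. The combinatorial weight per order is therefore $\mathcal{O}(1)^n$. Together with the $t^n/n!$ prefactor and the Wick factor $(n-1)!!\le 2^{n}(n/2)!$, the $n$-th order term is bounded by $(Ct)^n\,(n/2)!/n!$. This series converges absolutely for all $t=\mathcal O(1)$ and in fact super-exponentially. The transverse-field terms are $\mathcal O(1)$ single-site terms that attach only at already-occupied sites, so they contribute only constant factors. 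Truncating at order $n_0=\mathcal O(\log(1/\epsilon))=\mathcal O(1)$ therefore gives constant error $\epsilon$.

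Third, the truncated sum has to be computed in $\mathrm{poly}(N)$ time and $\mathcal O(\log N)$ space. Permutation invariance of the average lets me enumerate cluster shapes on $\mathcal O(n_0)=\mathcal O(1)$ abstract sites and weight each shape by a falling factorial in $N$. Alternatively, clusters can be enumerated directly as tuples of $\mathcal O(1)$ site indices, each needing $\mathcal O(\log N)$ bits. For each cluster, the averaged moment is an explicit rational constant determined by the multiplicities. The nested commutator evaluated on the product state $\ket{\phi}^{\otimes N}$ restricted to the $\mathcal O(1)$ sites of the cluster is a constant-size matrix computation. Summing these contributions with an $\mathcal O(\log N)$-bit accumulator at fixed constant precision stays within the stated space bound. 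The computation is deterministic, so no sampling of the disorder is needed.

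I expect the main obstacle to be the convergence step. The unbounded Gaussian couplings prevent a direct application of the bounded-term cluster expansion from \cref{thm:informal-classical}. Moreover, the $N^{-(p-1)/2}$ scaling makes individual terms of order $N^{-(p-1)/2}$ and not $N^{-(p-1)}$, so any naive norm bound diverges. I would first check carefully that averaging before bounding yields exactly one factor $N^{-(p-1)}$ per distinct edge, which cancels the entropy of new sites. I would also check that clusters in which an edge appears only once cancel exactly, including the mixed orderings produced by nested commutators. If this bookkeeping fails, the fallback is to truncate the Gaussians at $|g|\le\mathcal O(\sqrt{\log N})$ and absorb the resulting factors into the choice of $n_0$.
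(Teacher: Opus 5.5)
\emph{Your proposal has a genuine gap in the convergence step.}

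What is sound: averaging term by term kills odd moments, and one factor $N^{-(p-1)}$ per distinct edge does cancel the $N^{p-1}$ placements of its new sites. Your third step is also fine: enumerating index tuples with exact Wick moments in $\mathcal O(\log N)$ bits is essentially \cref{lem:eval}.

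The problem is the $n$-dependence in step two. The $n$-th Taylor coefficient is a sum over \emph{ordered} tuples $(S_1,\ldots,S_n)$, not over clusters. A cluster with $n/2$ distinct couplings, each used twice, arises from $n!/2^{n/2}=(n-1)!!\,(n/2)!$ tuples. For $B\neq0$ these are genuinely different nested commutators, and you give no reason for them to cancel. Your count (clusters times $(n-1)!!$) drops the factor $(n/2)!$ for distributing the distinct edges over the Wick pairs. For such a cluster the Gaussian moment itself is $1$, so the $(n-1)!!$ really belongs to this ordering count. Restoring the missing factor turns your bound into $(Ct)^n((n/2)!)^2/n!\le(Ct)^n$. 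The $1/n!$ is then used up completely, and you get only geometric convergence for $t<1/C$, with $C=\Theta(1)$ depending on $p,B,J,k_O$. That is exactly what \cref{lem:analyticity} gives on a disc around the origin.

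Your super-exponential rate is right only in degenerate commuting cases. At $B=0$ the surviving clusters are stars around the support of $O$, and their number carries an extra $1/(n/2)!$; for example $\overline{\langle\sigma^x_1(t)\rangle}=e^{-2t^2(N-1)/N}$ for SK from $\ket{+}^{\otimes N}$. With $B\neq0$ the clusters branch, the number of attachment points grows with the support, and the number of clusters really is of order $C^{n}N^{(p-1)n/2}$ with no $1/(n/2)!$.

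This is not merely a weak bound. Large-$N$ limits of all-to-all dynamics obey nonlinear mean-field equations whose solutions can have complex-time singularities at finite distance; the orbit $z(t)=\operatorname{sech}t$ at $B=1/2$ in \cref{sec:rt-classical} has poles at $\pm i\pi/2$. Beyond such a radius no $N$-independent truncation of the plain Maclaurin series can work, even though each finite-$N$ function is entire. Your fallback of clipping the Gaussians at $\mathcal O(\sqrt{\log N})$ does not help either: it brings back per-term norms $\sqrt{\log N}/N^{(p-1)/2}$ against an interaction graph of degree $\Theta(N^{p-1})$.

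The missing idea is analytic continuation from a region hugging the real axis.
\begin{itemize}
\item Writing $s=t+i\tau$, the paper keeps the real-time part as a norm-preserving conjugation $\mathcal U_t$ and expands only $e^{-H\tau}Oe^{H\tau}$ in nested commutators.
\item The disorder average is taken by Gaussian integration by parts. Unpaired couplings differentiate $\mathcal U_t$ through Duhamel's formula, and each such insertion costs $2|t|/N^{(k-1)/2}$ once the $q!$ time orderings are combined with the simplex volume $|t|^q/q!$.
\item This gives $|f(s)|\le2\|O\|$ whenever $2e\kappa s\in\mathcal R$. That region is a strip of width $\propto1/(1+|\operatorname{Re}s|)$ along the whole real axis, not a disc.
\item \Cref{lem:continuation} then composes $f$ with the logarithmic map $\phi$, which sends a disc of radius $1+\alpha^{-1}$ into that region with $\phi(1)=1$.
\item The Maclaurin series of $g(z)=f(t\phi(z))$ therefore converges at $z=1$ with an $N$-independent truncation order $M=e^{\mathcal O(1+t^2)}\log(2/\epsilon)$. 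This is constant for constant $t$ and $\epsilon$.
\end{itemize}
Your enumeration step then applies unchanged to this composed series, which involves the same coefficients $a_m$ weighted by $t^m\phi_{n_1}\cdots\phi_{n_m}$.
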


\section{Technical Discussion}
\label{sec:technical-discussion}

\subsection{Compressed Encoding \& Logarithmic-Space Element Access}
\label{sec:compressed-encoding}
Before we move to the results, we first discuss some recurring notions about encoding the symmetric subspace in qubits and thus expressing permutation-invariant time evolution in a `compressed encoding', as well as how matrix entries can be accessed in this setting in logarithmic space and polynomial time.
\subsubsection{Compressed Encoding}
Consider the setting of a permutation symmetric subspace of $N$ sites with local dimension $\chi$.
For occupations $\mathbf n=(n_0,\ldots,n_{\chi-1})$ with $\sum_an_a=N$, let $\mathcal X_{\mathbf n}$ be the set of basis strings with $n_a$ sites in state $a$, and define
\begin{align}
	M_N(\mathbf n)=\frac{N!}{\prod_an_a!},\qquad
	\ket{\mathbf n}_N=\frac1{\sqrt{M_N(\mathbf n)}}\sum_{x\in\mathcal X_{\mathbf n}}\ket x.
  \label{eq:generalised-dicke-states}
\end{align}
These states form an orthonormal basis of the symmetric subspace.
We encode $\mathbf n$ by storing $n_1,\ldots,n_{\chi-1}$ in binary, each in $\lceil\log_2(N+1)\rceil$ qubits, with $n_0=N-\sum_{a\ge1}n_a$ implicit; within each block, qubit $i$ holds bit $i$ of the binary number, bit $0$ being the least significant.
This uses $q=(\chi-1)\lceil\log_2(N+1)\rceil=\mathcal O(\log N)$ qubits, i.e. a register of dimension $Q=2^q\le(2N+2)^{\chi-1}=\mathrm{poly}(N)$.
Of the $Q$ computational-basis labels of this register, those with $\sum_{a\ge1}n_a\le N$ are valid occupations; the remaining labels are unused.
Let $V$ be the partial isometry mapping each valid label $\mathbf n$ to $\ket{\mathbf n}_N$ and annihilating the unused labels.
Throughout, $\mathsf X_i,\mathsf Y_i,\mathsf Z_i$ denote the Pauli operators on register qubit $i$, distinguished from the physical-spin operators $\sigma_i^\alpha$, and $\mathcal P_q=\{I,\mathsf X,\mathsf Y,\mathsf Z\}^{\otimes q}$ is the set of $Q^2$ Pauli strings on the register.
For illustration, for qubits with $\chi=2$ the occupation $\mathbf n=(N-n,n)$ is labelled by the integer $0\le n\le N$, and the encoded states are the Dicke states
\begin{align}
	\ket n_N=\binom Nn^{-1/2}\sum_{|x|=n}\ket x,\qquad V\ket n=\ket n_N.
	\label{eq:compression-dicke}
\end{align}
A permutation-invariant Hamiltonian $H$ preserves the symmetric subspace, onto which $VV^\dagger$ projects, so the compressed matrix $h=V^\dagger HV$ satisfies
\begin{equation}
  HV=VV^\dagger HV=Vh,\qquad e^{-iHt}V=Ve^{-iht}.
  \label{eq:compression-intertwining}
\end{equation}
Defining the compressed state and observable
\begin{equation}
  \varrho(t) = V^\dagger \rho(t) V,\quad o = V^\dagger O V,
  \label{eq:compression-compressed-state}
\end{equation}
and assuming that $\rho(0)$ is supported on the symmetric subspace, so that $\rho(0)=V\varrho(0)V^\dagger$, \cref{eq:compression-intertwining} gives $\rho(t)=V\varrho(t)V^\dagger$ with $\varrho(t)=e^{-iht}\varrho(0)e^{iht}$, and \cref{eq:task-fixed} can be reformulated, for any observable $O$, to
\begin{equation}
F_O(t) = \tr\left(o e^{-iht} \varrho(0) e^{iht}\right) = \tr\left(o \varrho(t)\right).
\end{equation}

\subsubsection{Element Access to Permutation-Invariant Operators in Logarithmic Space}
\label{sec:element-access-PI-observable}

Let $A$ be a permutation-invariant, $k$-local Hermitian operator on $N$ sites of local dimension $\chi$, and let $a=V^\dagger AV$ be its compressed matrix, with entries $a_{\mathbf m\mathbf n}=\bra{\mathbf m}_NA\ket{\mathbf n}_N$ - we discuss what resources are necessary to compute any such entry.

For a single-site operator $B$ on $\mathbb C^\chi$, let
$
	m(B)=\frac1N\sum_{i=1}^NB_i
$
be the corresponding collective operator.
By Schur--Weyl duality, the operators commuting with all site permutations are exactly the polynomials in collective operators, and $k$-locality restricts the degree to at most $k$.
Hence $A=Nf$ for a polynomial $f$ of degree at most $k$ in collective operators, and we assume that it is supplied in this form on the read-only input tape, in the sense of \cref{sec:summary-nondisordered}, as a list of monomials
\begin{equation}
	f =\sum_\mu c_\mu\,m(B_{\mu,1})\cdots m(B_{\mu,\ell_\mu})
	\label{eq:compression-normalized-family}
\end{equation}
with $\ell_\mu \le k$ and $B_{\mu,j}\in\mathcal B$ 
where $\mathcal B$ is a fixed set of at most $\chi^2$ single-site operators of norm at most one, and $\mu$ runs over distinct words in $\mathcal B$, each listed by the indices of its factors in $\mathcal B$ together with its coefficient $c_\mu$.
We assume all numbers are given as $\mathcal O(\log N)$-bit numbers.
Since the words are distinct, the list has at most $\sum_{\ell=0}^k\chi^{2\ell}\le2\chi^{2k}$ entries.
We write $\|f\|_1=\sum_\mu|c_\mu|$, so that the triangle inequality gives $\|a\|\le\|A\|\le N\|f\|_1$.
Collective operators do not commute, the order of the factors is thus part of the input; and only the sum, not the individual monomials, needs to be Hermitian.

\begin{lemma}[Element access to permutation-invariant operators]
	\label{lem:compression-element-access}
	Let $A=Nf$ be supplied as in \cref{eq:compression-normalized-family}.
	Then any entry $a_{\mathbf m\mathbf n}$ can be computed to error $\epsilon$ in time $\chi^{4k}\,\mathrm{poly}\big(\chi,k,\log(N\|f\|_1/\epsilon)\big)$ and workspace $\mathcal O\big(\chi\log N+k\log\chi+\log(k\|f\|_1/\epsilon)\big)$.
	In particular, for $\chi=\mathcal O(1)$, $k=\mathcal O(\log N)$, and $\|f\|_1,\epsilon^{-1}\le\mathrm{poly}(N)$, this is polynomial time and $\mathcal O(\log N)$ workspace.
\end{lemma}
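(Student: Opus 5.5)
We compute $a_{\mathbf m\mathbf n}$ monomial by monomial. By linearity, $a_{\mathbf m\mathbf n}=N\sum_\mu c_\mu\,\bra{\mathbf m}_N m(B_{\mu,1})\cdots m(B_{\mu,\ell_\mu})\ket{\mathbf n}_N$. The list has at most $2\chi^{2k}$ entries and is read from the input tape one at a time, so we keep only a running sum and an index into the list. The task therefore reduces to computing a single monomial element with $\ell\le k$ factors.

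For one monomial we expand each factor in matrix units, $B=\sum_{a,b}B_{ab}\,|a\rangle\langle b|$. On the symmetric subspace, $N\,m(|a\rangle\langle b|)=\sum_i |a\rangle\langle b|_i$ acts on generalised Dicke states by moving one site from level $b$ to level $a$:
\begin{equation}
\sum_i |a\rangle\langle b|_i\,\ket{\mathbf n}_N=\sqrt{(n_a+1-\delta_{ab})\,n_b}\;\ket{\mathbf n+e_a-e_b}_N
\end{equation}
for $a\neq b$, and $n_a\ket{\mathbf n}_N$ for $a=b$. This identity is the only combinatorial input, and it is checked directly from \cref{eq:generalised-dicke-states} by counting the strings in $\mathcal X_{\mathbf n}$ that map to a given string. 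Each monomial element is then a sum over words $(a_1b_1,\dots,a_\ell b_\ell)\in[\chi]^{2\ell}$. Each word contributes the product of coefficients $\prod_j(B_{\mu,j})_{a_jb_j}/N$ times a product of at most $k$ square-root factors. These factors are obtained by applying the moves right to left starting from $\mathbf n$, and the word contributes only if the final occupation equals $\mathbf m$ with all intermediate occupations nonnegative. We enumerate the word by a counter of $2\ell\log\chi\le 2k\log\chi$ bits, and track the current occupation in $\chi\lceil\log_2(N+1)\rceil$ bits. Together with the monomial enumeration, this gives at most $2\chi^{2k}\cdot\chi^{2k}$ terms, matching the $\chi^{4k}$ time factor.

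The main obstacle is controlling precision in small space. Each factor $\sqrt{(n_a+1)n_b}/N\le 1$ and each $|B_{ab}|\le\|B\|\le1$, so every word contributes at most $1$ in modulus. A monomial therefore contributes at most $N|c_\mu|\chi^{2k}$ in total. We plan to compute each square root and product to relative precision $\delta=\epsilon/(N\|f\|_1\chi^{2k}\cdot 2k)$, which needs $\mathcal O(\log(N k\|f\|_1\chi^{2k}/\epsilon))$ bits. We will use Newton iteration or long-division square roots, costing $\mathrm{poly}(\log(\cdot))$ time each, and multiply sequentially so that errors add over the $\le 2k$ multiplications. A cleaner alternative would be to pull the factor $N^{-\ell}$ out, accumulate the integer product $\prod_j (n_{a_j}+1)n_{b_j}$ exactly, and take a single square root at the end. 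Either way the total error is at most $\epsilon$.

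The accumulators (running sum, current word product) need $\mathcal O(\log(\|f\|_1 N\chi^{2k}/\epsilon))$ bits. The $\log N$ and $k\log\chi$ parts are absorbed by the counters, which leaves the claimed workspace $\mathcal O(\chi\log N+k\log\chi+\log(k\|f\|_1/\epsilon))$. The time is $\chi^{4k}$ terms times $\mathrm{poly}(\chi,k,\log(N\|f\|_1/\epsilon))$ arithmetic per term. The final specialisation to $\chi=\mathcal O(1)$, $k=\mathcal O(\log N)$ follows immediately, since $\chi^{4k}=N^{\mathcal O(1)}$.
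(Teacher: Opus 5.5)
Your proposal is correct and matches the paper's proof step for step: you stream the monomials from the input tape and expand each factor via the one-site-move action on generalised Dicke states. This gives at most $\chi^{2\ell}$ index-pair paths, enumerated by a $2\ell\lceil\log_2\chi\rceil$-bit counter and applied right to left, and with per-factor precision $\delta\sim\epsilon/(Nk\chi^{2k}\|f\|_1)$ in fixed point you obtain the same $\chi^{4k}$ time and workspace bounds. Drop your aside about accumulating $\prod_j(n_{a_j}+1)n_{b_j}$ exactly: that integer needs up to about $2k\log_2N$ bits, which breaks the stated workspace (e.g.\ $\mathcal O(\log^2N)$ for $k=\mathcal O(\log N)$), whereas your main sequential fixed-precision scheme stays within it.
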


The proof rests on the action of a collective operator on the occupation basis. With $\mathbf e_a$ the unit occupation vector,
\begin{align}
	m(B)\ket{\mathbf n}_N & =\sum_{a\ne b}B_{ab}\frac{\sqrt{(n_a+1)n_b}}N\ket{\mathbf n+\mathbf e_a-\mathbf e_b}_N\nonumber\\
	                      & \quad+\sum_aB_{aa}\frac{n_a}N\ket{\mathbf n}_N,
	\label{eq:compression-matrix-units}
\end{align}
where impossible occupations give zero.
Indeed, $\sum_iB_i$ changes the state of one site at a time, and its component taking a site from state $b$ to state $a\ne b$, with amplitude $B_{ab}$, maps each string in $\mathcal X_{\mathbf n}$ to $n_b$ strings in $\mathcal X_{\mathbf n'}$, $\mathbf n'=\mathbf n+\mathbf e_a-\mathbf e_b$, each string in $\mathcal X_{\mathbf n'}$ arising $n_a+1$ times; together with $M_N(\mathbf n')/M_N(\mathbf n)=n_b/(n_a+1)$ this gives the square root, while the diagonal part counts the $n_a$ sites in state $a$.
For qubits, with the Dicke states of \cref{eq:compression-dicke} labelled by $n=n_1$, $b_n=\sqrt{(n+1)(N-n)}/N$ and $b_{-1}=b_N=0$, this gives
\begin{align}
	m_x\ket n_N & =b_n\ket{n+1}_N+b_{n-1}\ket{n-1}_N,\nonumber   \\
	m_y\ket n_N & =ib_n\ket{n+1}_N-ib_{n-1}\ket{n-1}_N,\nonumber \\
	m_z\ket n_N & =(1-2n/N)\ket n_N.
	\label{eq:compression-normalized-entries}
\end{align}

\begin{proof}[Proof of \cref{lem:compression-element-access}]
	By \cref{eq:compression-matrix-units}, expanding every factor of a monomial $\mu$ turns $\bra{\mathbf m}_Nm(B_{\mu,1})\cdots m(B_{\mu,\ell_\mu})\ket{\mathbf n}_N$ into a sum over at most $\chi^{2\ell_\mu}$ paths, each a sequence of index pairs $(a_j,b_j)$ selected by a counter of $2\ell_\mu\lceil\log_2\chi\rceil$ bits.
	Applying the factors from right to left, each step updates the occupation and multiplies one term of \cref{eq:compression-matrix-units}, and the path contributes the resulting product $s$ if it ends at $\mathbf m$ and zero otherwise.
	Hence $a_{\mathbf m\mathbf n}=N\sum_\mu c_\mu\sum_{\text{paths}}s$.
	The double loop reads the monomials from the input tape and stores only its position on the tape in $\mathcal O(k\log\chi+\log N)$ bits, the path counter in $\mathcal O(k\log\chi)$ bits, the current occupation in $(\chi-1)\lceil\log_2(N+1)\rceil$ bits, and a constant number of accumulators.
	Every factor of $s$ is a term of \cref{eq:compression-matrix-units}, a product of an entry of $B$, of modulus at most $\|B\|\le1$, and a weight of modulus at most one, so approximating each to error $\delta$ and clipping to the unit disc changes $s$ by at most $\ell_\mu\delta\le k\delta$.
	As the coefficients are read exactly, the computed entry deviates by at most
	\begin{align}
		N\sum_\mu|c_\mu|\,\chi^{2k}k\delta=Nk\chi^{2k}\|f\|_1\delta.
		\label{eq:compression-element-error}
	\end{align}
	The choice $\delta=\epsilon/(Nk\chi^{2k}\|f\|_1)$ makes this at most $\epsilon$ and requires fixed-point arithmetic on $\mathcal O\big(\log(Nk\|f\|_1/\epsilon)+k\log\chi\big)$ bits.
  Each of the at most $2\chi^{2k}$ monomials thus costs at most $\chi^{2k}$ paths, i.e. passes over its entry on the tape, each with $\ell_\mu\le k$ look-ups in $\mathcal B$ and arithmetic operations on numbers of this length, which gives the stated time, and together with the loop state the stated workspace.
\end{proof}

\subsubsection{Element Access to Local Observables in Logarithmic Space}
\label{sec:element-access-local-observable}

The observable $O$ of \cref{prob:averaged-dynamics} is supported on at most $k_O$ sites and in general not permutation invariant, but only its compressed matrix $o=V^\dagger OV$ of \cref{eq:compression-compressed-state} enters.
As the states $\ket{\mathbf n}_N$ are invariant under site permutations, $U_\pi V=V$, so $o$ is unchanged if $O$ is replaced by any relabelled copy $U_\pi^\dagger OU_\pi$; for instance, $\sigma^z_1$ and $m_z$ have the same compressed matrix.
We may therefore take $O=O_{\mathrm{loc}}\otimes I$ with $O_{\mathrm{loc}}$ acting on the first $k_O$ sites and assume that $O_{\mathrm{loc}}$ is supplied on the read-only input tape as its $\chi^{k_O}\times\chi^{k_O}$ matrix with $\mathcal O(\log N)$-bit entries.

\begin{lemma}[Element access to local observables]
	\label{lem:local-element-access}
	Let $O$ be supplied as above.
	Then any entry $o_{\mathbf m\mathbf n}$ can be computed to error $\epsilon$ in time $\chi^{2k_O}\,\mathrm{poly}\big(\chi,k_O,\log(N/\epsilon)\big)$ and workspace $\mathcal O\big(\chi\log N+k_O\log\chi+\log(1/\epsilon)\big)$.
	In particular, for $\chi=\mathcal O(1)$, $k_O=\mathcal O(\log N)$, and $\epsilon^{-1}\le\mathrm{poly}(N)$, this is polynomial time and $\mathcal O(\log N)$ workspace.
\end{lemma}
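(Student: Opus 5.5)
Since $O=O_{\mathrm{loc}}\otimes I$ acts on the first $k_O$ sites, I will split each string $x\in\mathcal X_{\mathbf n}$ as $x=(y,z)$, with $y\in[\chi]^{k_O}$ the first $k_O$ symbols and $z$ the remaining $N-k_O$. Let $\mathbf c(y)$ be the occupation vector of $y$. The strings in $\mathcal X_{\mathbf n}$ with prefix $y$ are exactly $y$ followed by a string in $\mathcal X_{\mathbf n-\mathbf c(y)}$ on $N-k_O$ sites. This gives the Schmidt-type decomposition
\begin{equation}
	\ket{\mathbf n}_N=\sum_{y}\sqrt{\frac{M_{N-k_O}(\mathbf n-\mathbf c(y))}{M_N(\mathbf n)}}\,\ket y\otimes\ket{\mathbf n-\mathbf c(y)}_{N-k_O},
\end{equation}
where terms with a negative occupation vanish. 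Inserting this into $o_{\mathbf m\mathbf n}=\bra{\mathbf m}_N(O_{\mathrm{loc}}\otimes I)\ket{\mathbf n}_N$ and using orthonormality of the generalised Dicke states on the remaining $N-k_O$ sites, a pair $(y',y)$ contributes only if $\mathbf m-\mathbf c(y')=\mathbf n-\mathbf c(y)$. Hence
\begin{equation}
	o_{\mathbf m\mathbf n}=\sum_{y,y'}[O_{\mathrm{loc}}]_{y'y}\,w_{\mathbf m}(y')\,w_{\mathbf n}(y)\,\delta_{\mathbf m-\mathbf c(y'),\,\mathbf n-\mathbf c(y)},
\end{equation}
with weights $w_{\mathbf n}(y)=\sqrt{M_{N-k_O}(\mathbf n-\mathbf c(y))/M_N(\mathbf n)}$.

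The algorithm is a double loop over $(y,y')$, using a counter of $2k_O\lceil\log_2\chi\rceil$ bits, which gives $\chi^{2k_O}$ iterations. In each iteration I will compute $\mathbf c(y)$ and $\mathbf c(y')$ by scanning the counter digits; storing them takes $\mathcal O(\chi\log k_O)$ bits, or I can instead compare $\mathbf m-\mathbf c(y')$ with $\mathbf n-\mathbf c(y)$ one component $a$ at a time, recounting each time. I will then test the Kronecker delta and read the entry $[O_{\mathrm{loc}}]_{y'y}$ from the input tape; locating it only requires the counter as its address. Besides the counter, the stored state is the input occupations, $\mathcal O(\chi\log N)$ bits, and one fixed-point accumulator.

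The main obstacle is evaluating the weight $w_{\mathbf n}(y)$ in small space without forming factorials. I will write it as a telescoping product:
\begin{equation}
	w_{\mathbf n}(y)^2=\frac{(N-k_O)!}{N!}\prod_a\frac{n_a!}{(n_a-c_a(y))!}=\prod_{j=0}^{k_O-1}\frac{n_{y_{j+1}}-\#\{i\le j:y_i=y_{j+1}\}}{N-j}.
\end{equation}
The second form is the probability of drawing the sequence $y$ without replacement from an urn with composition $\mathbf n$. It is a product of $k_O$ factors, each in $[0,1]$, and each factor needs only a counter recomputed by rescanning the digits of $y$. Computing each factor, and finally the square root, to precision $\delta$ and clipping to $[0,1]$ changes the weight by at most $\mathcal O(k_O\delta)$, so the total error is $\mathcal O(k_O\delta\sum_{y,y'}|[O_{\mathrm{loc}}]_{y'y}|)$.

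Bounding entries crudely by $\|O\|$, this sum is at most $\chi^{2k_O}\|O\|$. Since the problem normalises by $\|O\|$, I will choose $\delta=\epsilon/(k_O\chi^{2k_O}\|O\|)$; if $\|O\|$ is not to be assumed $\mathcal O(1)$, I will use the maximal input entry instead, which is polynomially bounded by the $\mathcal O(\log N)$-bit input assumption. This requires arithmetic on $\mathcal O(\log(N/\epsilon)+k_O\log\chi)$ bits. Each iteration costs $\mathrm{poly}(\chi,k_O,\log(N/\epsilon))$ time, so the total time is $\chi^{2k_O}\,\mathrm{poly}(\cdot)$, and the workspace matches the stated $\mathcal O(\chi\log N+k_O\log\chi+\log(1/\epsilon))$.
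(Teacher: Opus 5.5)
Your proposal is correct and is essentially the paper's proof. It uses the same decomposition of $\ket{\mathbf n}_N$ by the first $k_O$ sites, the same weights written as a product of $k_O$ without-replacement factors in $[0,1]$, and the same $\chi^{2k_O}$-term loop with clipped fixed-point weights whose precision is set by the largest input entry of $O_{\mathrm{loc}}$.

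One step needs a small repair. If you take the square root only after forming $w^2$ to error $k_O\delta$, the error in $w$ is bounded only by $\sqrt{k_O\delta}$, not $\mathcal O(k_O\delta)$, because the square root is not Lipschitz near $w=0$. You can either take square roots of the exactly known rational factors individually, or compute $w^2$ to precision $\delta^2$; the latter only doubles the bit length and leaves the stated bounds unchanged.
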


\begin{proof}
	Collecting the strings in \cref{eq:generalised-dicke-states} by their first $k_O$ sites $x$ gives
	\begin{align}
		\ket{\mathbf n}_N & =\sum_xw_{\mathbf n}(x)\ket x\otimes\ket{\mathbf n-\mathbf c(x)}_{N-k_O},\nonumber \\
		w_{\mathbf n}(x)^2 & =\frac{M_{N-k_O}(\mathbf n-\mathbf c(x))}{M_N(\mathbf n)},
		\label{eq:local-weights}
	\end{align}
	where $x$ runs over the $\chi^{k_O}$ basis states of $k_O$ sites and $\mathbf c(x)$ is the occupation of $x$.
	The ratio $w_{\mathbf n}(x)^2$ is the probability that the first $k_O$ sites of a random string in $\mathcal X_{\mathbf n}$ read $x$, a product of $k_O$ factors (remaining sites in state $x_j$)/(remaining sites), each at most one.
	By orthonormality of the $\ket{\cdot}_{N-k_O}$,
	\begin{equation}
		o_{\mathbf m\mathbf n}=\sum_{x,y}w_{\mathbf m}(y)\,(O_{\mathrm{loc}})_{yx}\,w_{\mathbf n}(x),
		\label{eq:local-entries}
	\end{equation}
	where only pairs with $\mathbf m-\mathbf c(y)=\mathbf n-\mathbf c(x)$ contribute.
	These are at most $\chi^{2k_O}$ terms, each an entry of $O_{\mathrm{loc}}$ times two weights in $[0,1]$; let $\|O_{\mathrm{loc}}\|_{\max}\le\mathrm{poly}(N)$ be the largest modulus of these entries.
	Summing the terms one by one as in the proof of \cref{lem:compression-element-access}, with every weight computed to precision $\epsilon/(2\chi^{2k_O}\|O_{\mathrm{loc}}\|_{\max})$ and clipped to $[0,1]$, changes each term by at most $\epsilon/\chi^{2k_O}$ and hence the entry by at most $\epsilon$.
	The loop stores the pair $(x,y)$ in $2k_O\lceil\log_2\chi\rceil$ bits, the occupations in $\mathcal O(\chi\log N)$ bits, and a constant number of fixed-point numbers of $\mathcal O(k_O\log\chi+\log(N/\epsilon))$ bits, and spends $\mathrm{poly}(\chi,k_O,\log(N/\epsilon))$ operations per term, which gives the stated time and workspace.
\end{proof}

\subsection{Compressed Quantum Simulation}
\label{sec:compressed-simulation}
We now consider the problem of solving the dynamics estimation problem of \cref{eq:task-fixed} for a permutation-invariant Hamiltonian on $N$ sites of local dimension $\chi$ from a symmetric product state $\ket{\psi_0}=\ket\phi^{\otimes N}$ via a quantum algorithm that requires, simultaneously, $N^{\mathcal O(\chi)}$ time, only $\mathcal O(\chi\log N)$ qubits and $\mathcal O(\chi\log N)$ bits of classical workspace.

The starting point is that the dynamics never leave the symmetric subspace, whose dimension is at most $(N+1)^{\chi-1}$, so the evolved state can be described with $\mathcal O(\chi\log N)$ qubits.
For a full construction, care must be taken to only use logarithmic classical space to emit the gates, however.

Every Hermitian $Q\times Q$ matrix $a$ expands in the Pauli strings $\mathcal P_q$ of \cref{sec:compressed-encoding} as
\begin{equation}
	a=\sum_{P\in\mathcal P_q}\alpha_PP,
  ~
	\alpha_P=\frac{\tr(Pa)}{Q}\in\mathbb R,
  ~
	\sum_P\alpha_P^2 =\frac{\tr(a^2)}{Q}\le\|a\|^2.
	\label{eq:compression-pauli}
\end{equation}
The compressed matrices $h$ and $o$ are of polynomial size and cannot be stored, but their Pauli coefficients can be computed on the fly from their entries.

\begin{lemma}[Pauli expansion of compressed operators]
	\label{lem:compression-pauli}
	Let $A$ be a Hermitian operator on $N$ sites of local dimension $\chi$, and let its compressed matrix $a=V^\dagger AV$ have entries $a_{\mathbf m\mathbf n}$ computable to error $\epsilon$ in time $T$ and workspace $W$.
	Then each Pauli coefficient $\alpha_P$ of $a$ in \cref{eq:compression-pauli} can be computed to error $\epsilon$ in time $N^{\mathcal O(\chi)}\big(T+\log(\|a\|/\epsilon)\big)$ and workspace $W+\mathcal O\big(\chi\log N+\log(\|a\|/\epsilon)\big)$, and $\sum_P|\alpha_P|\le Q\|a\|$.
\end{lemma}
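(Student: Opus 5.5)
The plan is to compute $\alpha_P=\tr(Pa)/Q=\frac1Q\sum_{\mathbf m,\mathbf n}P_{\mathbf n\mathbf m}a_{\mathbf m\mathbf n}$ directly, as a double loop over register labels. The key observation is that a Pauli string $P=\bigotimes_i P_i$ is a signed or phased permutation matrix. For each column label $\mathbf m$ there is exactly one row label $\mathbf n=\mathbf m\oplus x(P)$ with $P_{\mathbf n\mathbf m}\neq0$, where $x(P)$ is the bit mask of positions carrying $\mathsf X$ or $\mathsf Y$. That entry is a product of single-qubit factors in $\{1,-1,i,-i\}$, determined by the bits of $\mathbf m$ and the positions of $\mathsf Y,\mathsf Z$ in $P$.

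The sum therefore reduces to a single loop over the $Q\le(2N+2)^{\chi-1}=N^{\mathcal O(\chi)}$ labels $\mathbf m$. Each step proceeds as follows:
\begin{enumerate}
\item Compute $\mathbf n=\mathbf m\oplus x(P)$ bitwise.
\item Compute the phase $P_{\mathbf n\mathbf m}$ by scanning the $q$ tensor factors. This needs $\mathcal O(\log q)$ bits for the position counter and two bits for the accumulated power of $i$.
\item Skip the term if either $\mathbf m$ or $\mathbf n$ is an unused label. The corresponding rows and columns of $a=V^\dagger AV$ vanish, because $V$ annihilates unused labels.
\item Otherwise, call the entry oracle for $a_{\mathbf m\mathbf n}$, multiply it by the phase, and add the result to a running accumulator.
\end{enumerate}
At the end, divide by $Q$. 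Since $a$ is Hermitian and $P$ is Hermitian, the true $\alpha_P$ is real, so we output the real part. The workspace is that of the oracle ($W$), reused across iterations. On top of this we store the counter $\mathbf m$ and the label $\mathbf n$, each $q=\mathcal O(\chi\log N)$ bits, the phase bits, and the accumulator. The time is $Q\cdot(T+\text{arithmetic})=N^{\mathcal O(\chi)}(T+\log(\|a\|/\epsilon))$.

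For the error, we call each entry with precision $\epsilon$. The resulting error in the sum is at most $Q\epsilon$, and dividing by $Q$ gives error $\epsilon$ in $\alpha_P$. One may want a slightly smaller per-entry error, say $\epsilon/2$, to leave room for rounding in the accumulator and the final division. This requires fixed-point numbers with $\mathcal O(\log(\|a\|/\epsilon)+\chi\log N)$ bits. The integer part must hold partial sums of magnitude at most $Q\|a\|$, because each $|a_{\mathbf m\mathbf n}|\le\|a\|$. The fractional part must resolve $\epsilon$. This fits the stated workspace. The main care point is exactly this bookkeeping: making sure the accumulator size is logarithmic and that rounding does not exceed the budget. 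Halving $\epsilon$ only changes constants.

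Finally, the $\ell_1$ bound follows from Cauchy--Schwarz together with \cref{eq:compression-pauli}. With $|\mathcal P_q|=Q^2$,
\[
\sum_P|\alpha_P|\le\Big(Q^2\sum_P\alpha_P^2\Big)^{1/2}\le Q\|a\|.
\]
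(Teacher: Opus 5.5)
Your proposal is correct and follows essentially the same route as the paper. You use that each Pauli string is a phased permutation matrix, so $Q\alpha_P$ is a sum of $Q$ phase-weighted entries of $a$ (unused labels contributing zero), accumulated in a running sum of $\mathcal O(\chi\log N+\log(\|a\|/\epsilon))$ bits, and you get $\sum_P|\alpha_P|\le Q\|a\|$ from Cauchy--Schwarz with \cref{eq:compression-pauli}; your extra care about accumulator rounding and taking the real part only makes explicit bookkeeping the paper leaves implicit.
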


\begin{proof}
	Entries of $a$ indexed by an unused register label vanish, since $V$ annihilates such labels; all other entries are computed as assumed.
	Each row of a Pauli string $P$ has a single nonzero entry, a phase in $\{\pm1,\pm i\}$: for a register label $\mathbf n$, the entry $\bra{\mathbf n}P\ket{\mathbf n'}$ is nonzero for exactly one label $\mathbf n'$, obtained from $\mathbf n$ by flipping the bits on which $P$ acts as $\mathsf X$ or $\mathsf Y$, and its value is the product of the corresponding single-qubit phases.
	Hence $Q\alpha_P=\tr(Pa)$ is a sum of $Q$ entries of $a$, each multiplied by such a phase, and approximating every entry to error $\epsilon$ changes $\alpha_P$ by at most $\epsilon$, so the precision of the entries carries over to $\alpha_P$.
	The expansion itself is never stored: the strings $P$ are enumerated one at a time, each $\alpha_P$ is accumulated as a running sum over the $Q$ entries of $a$, and only the current $P$, the entry index $\mathbf n$, the running sum, and the workspace of the entry computation are kept; the first two take $\mathcal O(\chi\log N)$ bits and the running sum, of modulus at most $Q\|a\|$, $\mathcal O(\chi\log N+\log(\|a\|/\epsilon))$ bits.
	The $Q\le(2N+2)^{\chi-1}=N^{\mathcal O(\chi)}$ entries per coefficient, each followed by one addition to the running sum, give the stated time.
	Finally, by the Cauchy--Schwarz inequality and \cref{eq:compression-pauli}, $\sum_P|\alpha_P|\le Q\big(\sum_P\alpha_P^2\big)^{1/2}\le Q\|a\|$.
\end{proof}

For a permutation-invariant Hamiltonian $H=Nf$ supplied as in \cref{lem:compression-element-access}, that lemma provides the entries of $h$, and $\|h\|\le N\|f\|_1$; for a local observable $O$, \cref{lem:local-element-access} provides those of $o$.

\begin{lemma}[Simulation of permutation-invariant Hamiltonians]
	\label{prop:compression-simulation}
	Let $H=Nf$ be a permutation-invariant $k$-local Hamiltonian of local dimension $\chi$ supplied as in \cref{lem:compression-element-access}, let $t>0$ be a logarithmic-bit rational, and let $\epsilon>0$.
	Then there is a uniform circuit $U_{\mathrm{sim}}$ over a fixed finite universal gate set, generated in time $\mathrm{poly}(t,\|f\|_1,1/\epsilon)\,N^{\mathcal O(\chi)}\chi^{\mathcal O(k)}$ and $\mathcal O\big(\chi\log N+k\log\chi+\log(t+\|f\|_1+1/\epsilon)\big)$ classical workspace, acting on the $q$ register qubits alone, such that for some phase $\theta$
	\begin{align}
		\left\|U_{\mathrm{sim}}-e^{i\theta}e^{-iht}\right\|\le\epsilon.
	\end{align}
\end{lemma}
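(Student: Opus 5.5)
We will simulate $e^{-iht}$ by first-order Trotterisation of the Pauli expansion of $h$ from \cref{lem:compression-pauli}, with every Pauli coefficient computed on the fly by the classical controller. Write $h=\sum_{P\in\mathcal P_q}\alpha_PP$, with $\Lambda:=\sum_P|\alpha_P|\le Q\|h\|\le QN\|f\|_1=N^{\mathcal O(\chi)}\|f\|_1$. The identity component $\alpha_I I$ contributes only the global phase $e^{i\theta}$, $\theta=-\alpha_It$, so we drop it. We choose $r$ steps and approximate
\begin{equation*}
e^{-iht}\approx\Big(\prod_{P\ne I}e^{-i\tilde\alpha_PPt/r}\Big)^r,
\end{equation*}
where the product runs over the $Q^2$ Pauli strings in a fixed lexicographic order and $\tilde\alpha_P$ are the computed approximations.

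\emph{Error budget.} There are three contributions.
\begin{itemize}
\item The standard first-order product-formula bound gives a Trotter error at most $\Lambda^2t^2/r$, so $r=\lceil 3\Lambda^2t^2/\epsilon\rceil=\mathrm{poly}(t,\|f\|_1,1/\epsilon)N^{\mathcal O(\chi)}$ suffices for $\epsilon/3$.
\item Coefficient errors satisfy $\|e^{-i\tilde\alpha_PPt/r}-e^{-i\alpha_PPt/r}\|\le|\tilde\alpha_P-\alpha_P|t/r$, summed over $rQ^2$ factors. Computing each $\alpha_P$ to precision $\delta=\epsilon/(3tQ^2)$ via \cref{lem:compression-pauli} and \cref{lem:compression-element-access} therefore costs another $\epsilon/3$. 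It needs only $\mathcal O(\chi\log N+k\log\chi+\log(t+\|f\|_1+1/\epsilon))$ bits, by those lemmas with $\|a\|\le N\|f\|_1$.
\item Compiling each Pauli rotation into the fixed gate set contributes the last $\epsilon/3$. A rotation $e^{-i\beta P}$ is implemented as a basis change by single-qubit Cliffords ($H$, $HS^\dagger$) on the support of $P$, a CNOT ladder onto one target qubit, a single $R_z(2\beta)$, and the inverse ladder. This keeps the circuit on the $q$ register qubits alone with no ancillas. The only non-Clifford piece, $R_z(2\beta)$, is approximated to error $\epsilon/(3rQ^2)$ by Solovay--Kitaev in $\mathrm{polylog}(rQ^2/\epsilon)$ gates.
\end{itemize}

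\emph{Uniformity in logarithmic space.} The controller runs nested loops: the Trotter step counter ($\log r$ bits), the current Pauli string ($2q=\mathcal O(\chi\log N)$ bits), the coefficient computation (reusing workspace, recomputed each time rather than stored), and the gate emission for the rotation. Gate count and time are $rQ^2\cdot\mathrm{polylog}\cdot$ (cost of one coefficient), where one coefficient costs $N^{\mathcal O(\chi)}\chi^{\mathcal O(k)}\mathrm{poly}\log$. This gives the claimed time bound.

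\emph{Main obstacle.} The expected obstacle is performing Solovay--Kitaev in logarithmic classical space. The standard recursive algorithm stores group-commutator decompositions and nets, whose sizes are polylogarithmic in $1/\epsilon'$ and not obviously $\mathcal O(\log)$. I would first try to avoid generic Solovay--Kitaev. Write the angle $2\beta$ in binary to $b=\mathcal O(\log(rQ^2/\epsilon))$ bits and implement $R_z(2\beta)=\prod_jR_z(2^{-j}\pi)^{\beta_j}$. Each fixed rotation $R_z(2^{-j}\pi)$ then needs an approximation whose description can be emitted in space $\mathcal O(\log(1/\epsilon'))$. One way is a fixed-depth-recursion Solovay--Kitaev with constant-size base net, where the recursion depth is $\mathcal O(\log\log(1/\epsilon'))$ and each level stores $\mathcal O(\log(1/\epsilon'))$ bits. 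Alternatively, one can invoke known space-efficient compilers (e.g.\ the uniformity results used in defining $\Lclass$ à la Watrous/Fefferman--Lin), which show that gates with $\mathcal O(\log)$-bit angles are emittable in logarithmic space. I would rely on the latter as a black box if the direct construction becomes cumbersome. A minor point is that the bit-by-bit arithmetic on $\tilde\alpha_P$ is streamed into the angle bits without storing more than one fixed-point number.
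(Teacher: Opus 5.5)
Your proposal follows the paper's proof essentially step for step: Pauli coefficients recomputed on the fly via \cref{lem:compression-pauli}, a first-order product formula with the error split equally among coefficient rounding, Trotterisation and gate compilation, Pauli rotations realised by basis changes and a CNOT ladder on the $q$ register qubits alone, and the identity string absorbed into the phase (with $\theta=\alpha_It$ rather than $-\alpha_It$, which is immaterial). The paper likewise treats single-qubit compilation as a black box, citing the logarithmic-space Solovay--Kitaev implementation of van Melkebeek and Watson. Your fallback as described, with $\mathcal O(\log(1/\epsilon'))$ bits at each of $\mathcal O(\log\log(1/\epsilon'))$ levels, would only give $\mathcal O(\log(1/\epsilon')\log\log(1/\epsilon'))$ space, so it is the black box (where the precision needed per level shrinks geometrically) that delivers the stated bound.
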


\begin{proof}[Proof of \cref{prop:compression-simulation}]
	By \cref{lem:compression-pauli}, with the entries of $h$ computed by \cref{lem:compression-element-access} and $\|h\|\le N\|f\|_1$, each Pauli coefficient $\alpha_P$ of $h$ is computable to error $\delta$ in time $N^{\mathcal O(\chi)}\chi^{4k}\,\mathrm{poly}(\chi,k,\log(N\|f\|_1/\delta))$ and workspace $\mathcal O(\chi\log N+k\log\chi+\log(k\|f\|_1/\delta))$, with $\sum_P|\alpha_P|\le QN\|f\|_1$.
	Round each coefficient to $\widetilde\alpha_P$ with error at most $\delta=\epsilon/(3Q^2t)$ and set $\widetilde h=\sum_P\widetilde\alpha_PP$. Then $\|h-\widetilde h\|\le Q^2\delta$, so $\|e^{-iht}-e^{-i\widetilde ht}\|\le tQ^2\delta=\epsilon/3$, and $t\sum_P|\widetilde\alpha_P|\le tQN\|f\|_1+\epsilon/3$.
	Split $t$ into $n$ steps of length $v=t/n$. For Hermitian $A_1,\ldots,A_M$, comparing $e^{-iv\sum_\ell A_\ell}$ with $\prod_\ell e^{-ivA_\ell}$ one factor at a time gives the first-order product formula
	\begin{align}
	& \Big\|e^{-iv\sum_\ell A_\ell}-\prod_\ell e^{-ivA_\ell}\Big\|\nonumber\\
	& \quad\le v^2\sum_{\ell<\ell'}\|A_\ell\|\,\|A_{\ell'}\|
	\le\frac{v^2}2\Big(\sum_\ell\|A_\ell\|\Big)^2.
	\label{eq:compression-trotter}
	\end{align}
	Applied to the terms $\widetilde\alpha_PP$ of $\widetilde h$, the error per step is at most $v^2\big(\sum_P|\widetilde\alpha_P|\big)^2/2$, so $n=\lceil3(tQN\|f\|_1+\epsilon/3)^2/(2\epsilon)\rceil$ steps give total product error at most $t^2\big(\sum_P|\widetilde\alpha_P|\big)^2/(2n)\le\epsilon/3$.
	Each factor $e^{-iv\widetilde\alpha_PP}$ is a Pauli rotation: single-qubit basis changes turn $P$ into a product of $\mathsf Z$ operators, a ladder of at most $q$ controlled-NOT gates collects their parity on one qubit, a single-qubit rotation $e^{-iv\widetilde\alpha_P\mathsf Z}$ acts there, and the ladder and basis changes are undone; the identity string contributes only a phase.
	The circuit is therefore a stream of $nQ^2$ Pauli rotations of $\mathcal O(q)$ gates each, and the controller stores only the step counter, the current string $P$, and $\widetilde\alpha_P$, using $\mathcal O(\chi\log N+\log(t+\|f\|_1+1/\epsilon))$ bits, besides the workspace of \cref{lem:compression-pauli}.
	Finally, each Pauli rotation is compiled into the fixed gate set to error $\epsilon/(3nQ^2)$ in time polylogarithmic and workspace logarithmic in $nQ^2/\epsilon$ \cite{vanmelkebeek2012}; the compilation errors telescope to at most $\epsilon/3$, and the compilation phases combine into $\theta$.
	Rounding, product formula and compilation thus contribute $\epsilon/3$ each, for a total error of at most $\epsilon$.
	With $Q\le(2N+2)^{\chi-1}=N^{\mathcal O(\chi)}$ and $\log(1/\delta)=\mathcal O(\chi\log N+\log(t+1/\epsilon))$, the $nQ^2$ coefficients take time $\mathrm{poly}(t,\|f\|_1,1/\epsilon)\,N^{\mathcal O(\chi)}\chi^{\mathcal O(k)}$ and every workspace is $\mathcal O\big(\chi\log N+k\log\chi+\log(t+\|f\|_1+1/\epsilon)\big)$ bits.
\end{proof}

A piecewise-constant schedule, in which $H$ takes values $H_1,\ldots,H_S$ of the form \cref{eq:compression-normalized-family} on $S=\mathrm{poly}(N)$ consecutive intervals with logarithmic-bit rational endpoints, is implemented to error $\epsilon$ by applying the lemma to each interval with error $\epsilon/S$.

\begin{restatedtheorem}{thm:informal-quantum}[Logarithmic-workspace PI dynamics estimation]
	\label{thm:compression-dynamics}
	Let $H=Nf$ be a $k$-local, permutation-invariant Hamiltonian of local dimension $\chi$ supplied as in \cref{lem:compression-element-access}, with $\|f\|_1\le\mathrm{poly}(N)$, or a piecewise-constant schedule of $S\le\mathrm{poly}(N)$ such Hamiltonians, and let $t\le\mathrm{poly}(N)$ be a logarithmic-bit rational.
	Let $\rho(0)=(\ket\phi\bra\phi)^{\otimes N}$ be a symmetric product state whose single-site amplitudes are given to inverse-polynomial precision, and let $O$ be an observable supported on at most $k_O$ sites with $\|O\|\ge1/\mathrm{poly}(N)$, supplied as in \cref{lem:local-element-access}.
	Then for $0<\epsilon<1$ with $\epsilon^{-1}\le\mathrm{poly}(N)$, a quantum algorithm solves \cref{prob:averaged-dynamics}, i.e.\ it outputs $\widetilde F$ with
	\begin{align}
		\Pr\!\left[|\widetilde F-F_O(t)|\le\epsilon\|O\|\right]\ge\frac23,
	\end{align}
	using, simultaneously, $\mathrm{poly}(t,1/\epsilon)\,N^{\mathcal O(\chi)}\chi^{\mathcal O(k+k_O)}$ time, $\mathcal O(\chi\log N)$ qubits and $\mathcal O(\chi\log N+(k+k_O)\log\chi+\log(t+1/\epsilon))$ bits of classical workspace.
	In particular, for $\chi=\mathcal O(1)$ and $k,k_O=\mathcal O(\log N)$, this is polynomial time and $\mathcal O(\log N)$ qubits and workspace.
\end{restatedtheorem}

\begin{proof}
	The initial state is prepared by evolving the all-zero register for unit time under a one-local permutation-invariant Hamiltonian.
	Up to a global phase, $\ket\phi=\cos\vartheta\ket0+\sin\vartheta\ket v$ with $\vartheta\in[0,\pi/2]$ and a unit vector $\ket v\perp\ket0$, and the single-site generator $B=i\vartheta(\ket v\bra0-\ket0\bra v)$ rotates $\ket0$ into $e^{-iB}\ket0=\ket\phi$.
	As the all-zero register encodes $\ket0^{\otimes N}=V\ket{0^q}$, \cref{eq:compression-intertwining} gives
	\begin{align}
		\ket{\psi_0} & =\big(e^{-iB}\big)^{\otimes N}\ket0^{\otimes N}=e^{-iNm(B)}V\ket{0^q}\nonumber \\
		             & =Ve^{-ib}\ket{0^q}=V\ket{\phi_0},
	\end{align}
	with $b=V^\dagger Nm(B)V$ and $\ket{\phi_0}=e^{-ib}\ket{0^q}$.
	For $\vartheta>0$, $Nm(B)=N\vartheta\,m(B/\vartheta)$ is of the form \cref{eq:compression-normalized-family} with the single monomial $m(B/\vartheta)$, $\mathcal B=\{B/\vartheta\}$ and $\|f\|_1=\vartheta\le\pi/2$, where the entries of $B/\vartheta$ are $\pm i$ times amplitudes of $\ket v$, computed from those of $\ket\phi$ whenever \cref{lem:compression-element-access} reads them.
	Hence \cref{prop:compression-simulation} prepares $\ket{\phi_0}$ as one extra interval of unit length before the schedule.

	The compressed observable $o=V^\dagger OV$ is a dense $Q\times Q$ matrix and cannot be measured directly, so we expand it in Pauli strings and measure a random one.
	By \cref{eq:compression-pauli}, $o=\sum_P\alpha_PP$ with $\sum_P\alpha_P^2\le\|o\|^2\le\|O\|^2$, and by \cref{lem:compression-pauli}, applied to $o$ with its entries computed by \cref{lem:local-element-access}, each $\alpha_P$ can be computed to error $\delta=\epsilon\|O\|/(4Q^2)$; let $\widetilde\alpha_P$ be the result.
	One repetition of the algorithm draws $P\in\mathcal P_q$ uniformly at random, prepares $\ket{\phi_0}$ and evolves it by \cref{prop:compression-simulation}, measures $P$ by single-qubit basis changes followed by a computational-basis measurement, with outcome $X_P=\pm1$ the parity on the support of $P$, and outputs $Y=Q^2\widetilde\alpha_PX_P$.
	With $x_P=\mathbb E[X_P]$ the expected outcome of the implemented circuit,
	\begin{align}
		\mathbb E[Y] & =\sum_P\widetilde\alpha_Px_P,\nonumber\\
		\mathbb E[Y^2] & =Q^2\sum_P\widetilde\alpha_P^2\le Q^2\big(\|O\|+Q\delta\big)^2\le4Q^2\|O\|^2,
		\label{eq:compression-pauli-sampling}
	\end{align}
	by the triangle inequality in $\ell^2$ for $\widetilde\alpha=\alpha+(\widetilde\alpha-\alpha)$.

	For an exact circuit, $x_P=\tr(P\varrho(t))$, and hence $\sum_P\alpha_Px_P=\tr(o\varrho(t))=F_O(t)$ by \cref{sec:compressed-encoding}.
	The error thus has three sources: the rounded coefficients, the imperfect circuit, and the finite number of repetitions.
	Implement the preparation, the evolution (at error $\eta/(3S)$ per interval for a schedule), and the basis change ($q$ single-qubit gates at error $\eta/(3q)$ each) each to error $\eta/3$, with $\eta=\epsilon/(16Q)$.
	The state vector before the computational-basis measurement is then within $\eta$ of the ideal one up to a phase, so $|x_P-\tr(P\varrho(t))|\le2\eta$.
	With $\sum_P|\widetilde\alpha_P|\le Q\big(\sum_P\widetilde\alpha_P^2\big)^{1/2}\le2Q\|O\|$ and $|\tr(P\varrho(t))|\le1$,
	\begin{align}
		\big|\mathbb E[Y]-F_O(t)\big| & \le\sum_P|\widetilde\alpha_P-\alpha_P|\nonumber\\
		                             & \quad+\sum_P|\widetilde\alpha_P|\,\big|x_P-\tr(P\varrho(t))\big|\nonumber\\
		                             & \le Q^2\delta+4Q\|O\|\eta=\frac{\epsilon\|O\|}2.
	\end{align}
	Averaging $n=\lceil48Q^2/\epsilon^2\rceil$ independent repetitions gives $\widetilde F$ with $\Pr\big[|\widetilde F-\mathbb E[Y]|\ge\epsilon\|O\|/2\big]\le4\,\mathbb E[Y^2]/(n\epsilon^2\|O\|^2)\le1/3$ by Chebyshev's inequality and \cref{eq:compression-pauli-sampling}, and hence $|\widetilde F-F_O(t)|\le\epsilon\|O\|$ with probability at least $2/3$.

	Since $\|f\|_1,\|O\|^{-1}\le\mathrm{poly}(N)$, both $1/\eta$ and $1/\delta$ are $N^{\mathcal O(\chi)}/\epsilon$.
	Hence, by \cref{prop:compression-simulation,lem:compression-pauli,lem:local-element-access}, the circuits for the preparation and for each of the $S\le\mathrm{poly}(N)$ intervals, as well as the coefficient $\widetilde\alpha_P$, are generated in time $\mathrm{poly}(t,1/\epsilon)\,N^{\mathcal O(\chi)}\chi^{\mathcal O(k+k_O)}$ and workspace $\mathcal O(\chi\log N+(k+k_O)\log\chi+\log(t+1/\epsilon))$, where the preparation computes the entries of $B/\vartheta$ from the amplitudes of $\ket\phi$ in logarithmic workspace.
	The $n=N^{\mathcal O(\chi)}/\epsilon^2$ repetitions thus take time $\mathrm{poly}(t,1/\epsilon)\,N^{\mathcal O(\chi)}\chi^{\mathcal O(k+k_O)}$.
	They run one after another on the same reset register of $q=\mathcal O(\chi\log N)$ qubits and in the same classical workspace, where the controller additionally keeps the repetition counter, $P$, and the running sum of the outputs $Y$, all of $\mathcal O(\chi\log N+\log(1/\epsilon))$ bits.
\end{proof}

\subsection{Hardness of Permutation-Invariant Hamiltonian Simulation}
\label{sec:hardness}
We now show the complementary direction: simulating permutation-invariant dynamics is at least as hard as running a quantum circuit on logarithmically many qubits. Concretely, we use the qubit encoding of \cref{sec:compressed-encoding} to embed a circuit on $q_{\mathrm c}$ qubits in the symmetric subspace of $N\ge2^{q_{\mathrm c}}-1$ spins, and compile it, in polynomial time and logarithmic classical workspace, into a piecewise-constant schedule of permutation-invariant two-local Hamiltonians of the form \cref{eq:compression-normalized-family} whose evolution reproduces the circuit on the encoded states.

We use the notation of \cref{sec:compressed-encoding} with $\chi=2$, so that $V$ maps the binary label $0\le n\le N$ of the $q=\lceil\log_2(N+1)\rceil$ register qubits to the Dicke state $\ket n_N$ of \cref{eq:compression-dicke}.
A circuit on $q_{\mathrm c}$ qubits acts on the $q_{\mathrm c}$ least significant register qubits, the remaining ones being fixed to zero, and hence on the labels $a<d=2^{q_{\mathrm c}}$; choosing $N\ge d-1$, in particular $N=d-1$, ensures that all these labels are valid occupations.
Without loss of generality we assume the circuit is given over the universal (see \cref{app:gate-universality}) gate set
\begin{equation}
	\left\{A_i=e^{-i\pi\mathsf X_i/4},
	B_i=e^{-i\pi\mathsf Y_i/8},
	C_{ij} =e^{-i\pi\mathsf X_i\mathsf X_j/4}	\right\}
	\label{eq:hardness-gates}
\end{equation}
and their inverses.

\begin{lemma}[Circuit-to-control compilation]
	\label{lem:circuit-to-control}
	Given a circuit $U$ of $L\ge1$ gates over
	\cref{eq:hardness-gates} on $q_{\mathrm c}$ qubits, an integer $N\ge2^{q_{\mathrm c}}-1$, and an error
	tolerance $0<\eta\le1$, a classical compiler outputs
	bounded piecewise-constant controls of the form of 2-local permutation-invariant qubit Hamiltonian
  \begin{equation}
    \begin{gathered}
      H(t)=N\big(h_x(t)m_x+h_y(t)m_y+h_z(t)m_z
      +h_2(t)m_z^2\big),\\
      |h_\alpha(t)|\le1,\qquad \alpha\in\{x,y,z,2\},
    \end{gathered}
    \label{eq:hardness-controls}
  \end{equation}
  with $h_2=1/4$ throughout,
	whose time-ordered evolution $U_{\mathrm{phys}}$ satisfies
	\begin{equation}
		\|U_{\mathrm{phys}}V-VU\|\le\eta.
		\label{eq:hardness-intertwining}
	\end{equation}
	The compiler runs in time $\mathcal O(LN^2\,\mathrm{polylog}(NL/\eta))$ with
	$\mathcal O(\log(NL/\eta))$ classical workspace.
	The schedule has at most $n^{\mathrm{PR}}=Ld^2$ segments,
	total duration $\mathcal O(N^2(n^{\mathrm{PR}})^2/\eta)$, and
	logarithmic-bit parameters.
\end{lemma}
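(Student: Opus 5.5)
The plan is to compile $U$ gate by gate into a product of \emph{two-level rotations} between Dicke labels, implement each one by a resonantly driven segment of \cref{eq:hardness-controls}, and stream the segment parameters out of a logarithmic-space counter. No $d\times d$ matrix is ever stored.

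\emph{Step 1: gates as two-level rotations.} Each gate in \cref{eq:hardness-gates} acts on the labels $a<d$ as a direct sum of $2\times2$ blocks.
\begin{itemize}
\item $A_i$ and $B_i$ pair $a$ with $a\oplus2^i$.
\item $C_{ij}$ pairs $a$ with $a\oplus2^i\oplus2^j$.
\end{itemize}
Every block is a fixed rotation $e^{-i\theta\sigma}$ with $\theta\in\{\pm\pi/4,\pm\pi/8\}$ and $\sigma$ an $x$- or $y$-type Pauli on the pair. Since the blocks commute, $U$ becomes a sequence of at most $Ld/2$ two-level rotations between labels $a<a'$. Each of these is conjugated by a chain of at most $d$ nearest-neighbour swaps $n\leftrightarrow n+1$, which are $\pi/2$-rotations up to a diagonal phase of $\pm1$ or $\pm i$ that we track. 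This gives at most $Ld^2=n^{\mathrm{PR}}$ elementary rotations, each between adjacent labels $n,n+1$. The angle and phase of each one are computable from the loop counters (gate index, block index, chain position) in $\mathcal O(\log(NL))$ bits.

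\emph{Step 2: one adjacent rotation per segment.} We keep $h_2=1/4$ and use $h_z$ as a tunable detuning. By \cref{eq:compression-normalized-entries}, the diagonal energies are
\[
E_n=N\big(h_z(1-2n/N)+(1-2n/N)^2/4\big),
\]
so the gap $E_{n+1}-E_n$ is affine in $n$ and its slope in $n$ is $2/N$.
\begin{itemize}
\item Choose $h_z$ (of modulus at most one) so that the pair $(n,n+1)$ is exactly degenerate.
\item All other adjacent transitions are then detuned by at least $2/N$.
\item Switch on a weak drive $h_xm_x+h_ym_y$ of strength $\Omega$. By \cref{eq:compression-normalized-entries} it couples only neighbouring labels, and the ratio $h_y/h_x$ fixes the phase of the rotation.
\item The segment duration is set so that the resonant Rabi angle $\sim\Omega N b_n t$ equals the target angle.
\end{itemize}
We bound the off-resonant leakage by standard first-order perturbation in the interaction picture, where the error scales like the coupling divided by the detuning. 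Choosing $\Omega$ polynomially small makes this at most $\eta/n^{\mathrm{PR}}$ per segment. The segment durations are then $\mathcal O(N^2 n^{\mathrm{PR}}/\eta)$, so the total duration is $\mathcal O(N^2(n^{\mathrm{PR}})^2/\eta)$, and by the telescoping bound the total error is at most $\eta$. Rounding $h_\alpha$ and the durations to logarithmic-bit rationals adds a further error that we absorb into $\eta$.

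\emph{Step 3 (main obstacle): the diagonal frame.} The always-on term $Nm_z^2/4$, together with the detunings $h_z$, imprints label-dependent phases $e^{-iE_nt}$. These phases do not commute with the later rotations, and we cannot invert them with negative $h_2$. I would first work in the interaction picture of the diagonal part $H_0(t)=N(h_z(t)m_z+m_z^2/4)$. There each segment implements the intended rotation conjugated by the accumulated diagonal frame. That frame is explicitly known, so we pre-compensate by adjusting the drive phase $h_y/h_x$ of each segment. At the end we pad with idle evolution so that the frame returns to the identity. This works because $e^{-iNm_z^2T/4}$ has eigenphases $(N-2n)^2T/(4N)$, which are all multiples of $2\pi$ for $T=8\pi N$. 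The $h_z$ contribution is linear in $n$, so it can likewise be cancelled by an idle period with suitably signed $h_z$. The residual global phase is irrelevant for \cref{eq:hardness-intertwining} up to the usual phase convention; otherwise we fix it by choosing $T$.

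The compensating phase depends only on $n$ and on the accumulated integral of $h_z$. It is a sum of $\mathcal O(n^{\mathrm{PR}})$ known terms, which we recompute by a rescan over the earlier segments in logarithmic space. This rescan is what gives the time bound $\mathcal O(LN^2\,\mathrm{polylog})$ with $\mathcal O(\log(NL/\eta))$ workspace. The restriction to the encoded labels $a<d\le N+1$ ensures every label used is a valid occupation. Hence $U_{\mathrm{phys}}V$ agrees with $VU$ to error $\eta$.
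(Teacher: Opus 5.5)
Your Steps 1 and 2 are the paper's construction: the pairing of each gate into $d/2$ commuting pair rotations, routing by signed nearest-neighbour swaps (at most $Ld^2$ adjacent rotations), and a resonant pulse with $h_2=1/4$ and $h_z=(2\ell+1-N)/(2N)$, so that every other transition is detuned by at least $2/N$. You diverge in Step 3. The paper never lets a diagonal frame accumulate. It fixes every pulse length to $\tau=8\pi NR$ and encodes the angle in the drive amplitude $\theta/(8\pi NR\sqrt{(\ell+1)(N-\ell)})$, not in the duration. Since $4N\mu_n\in\mathbb Z$ for every eigenvalue $\mu_n$ of the diagonal part $Q_\ell$, this gives $e^{-i\tau Q_\ell}=I$ exactly after each pulse. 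This is your own $T=8\pi N$ observation, applied pulse by pulse rather than once at the end. It buys three things at once. There is no phase bookkeeping, so each pulse depends only on its own pair and the compiler is stateless. There is no residual global phase. And every off-resonant transition completes an integer number of cycles, so $F(0)=F(\tau)=0$ and the integration by parts leaves only the $g^2\tau N$ term.

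Your frame-tracking route can be made to work, but two steps need repair.

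(i) Recomputing the compensating phase by rescanning all earlier segments costs $\Theta(n^{\mathrm{PR}})$ per segment, i.e.\ $\Theta((n^{\mathrm{PR}})^2)=\Theta(L^2d^4)$ overall. That does not give the claimed $\mathcal O(LN^2\,\mathrm{polylog}(NL/\eta))$. The fix is that every diagonal evolution has the form $e^{-itN(m_z^2/4+h_zm_z)}$. Hence the accumulated frame is $e^{-iTNm_z^2/4-iZNm_z}$, determined by just two running sums, $T=\sum_jt_j$ and $Z=\sum_jt_jh_z^{(j)}$, taken over the emitted rounded values. Carrying these in $\mathcal O(\log(NL/\eta))$ bits gives polylogarithmic time per segment. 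The final idle then simply pads $T$ to a multiple of $8\pi N$ and $Z$ to a multiple of $2\pi$. This removes the frame exactly, global phase included, which you need because the intertwining bound is phase sensitive.

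(ii) ``Coupling over detuning'' only accounts for the boundary term $|g|\,\|F(t)\|\le2N|g|$, with $g=N\Omega$. This term no longer vanishes at your durations. You must also bound the secular term $4Ng^2t=4N|g\theta|/b_\ell$ from the integration by parts, using $gtb_\ell=\theta$. Because $b_\ell\ge N^{-1/2}$, it exceeds the boundary term by a factor of up to order $\sqrt N$. It is this term that forces segment durations of order $N^2n^{\mathrm{PR}}/\eta$. Your stated duration is right, but the first-order estimate alone would suggest durations shorter by $\sqrt N$, which do not suffice.
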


\paragraph{Selective pulses}
We first implement a rotation on one chosen pair of Dicke states with neighbouring occupations, leaving all other Dicke states unchanged. For $0\le a<b\le N$, let $S^x_{a,b}$ and $S^y_{a,b}$ act on the ordered basis $(\ket{a}_N,\ket{b}_N)$ as $\begin{psmallmatrix}0&1\\1&0\end{psmallmatrix}$ and $\begin{psmallmatrix}0&-i\\i&0\end{psmallmatrix}$, and as zero on all other Dicke states, and let $R_\alpha^{a,b}(\theta)=e^{-i\theta S^\alpha_{a,b}}$ for $\alpha\in\{x,y\}$, that is,
\begin{align}
  R_x^{a,b}(\theta)\big|_{(a,b)}
  &=\begin{pmatrix}\cos\theta&-i\sin\theta\\-i\sin\theta&\cos\theta\end{pmatrix},\nonumber\\
  R_y^{a,b}(\theta)\big|_{(a,b)}
  &=\begin{pmatrix}\cos\theta&-\sin\theta\\\sin\theta&\cos\theta\end{pmatrix},
  \label{eq:hardness-pair-operators}
\end{align}
and identity on all other Dicke states.

\begin{lemma}[Selective pulse]
  \label{lem:selective-pulse}
  Let $N,R\ge1$ be integers, $\ell\in\{0,\ldots,N-1\}$, $\alpha\in\{x,y\}$, and $\theta\in[-\pi,\pi]$. The constant Hamiltonian
  \begin{align}
    H_{\ell,\alpha,\theta}
    &=N\bigg(\frac{m_z^2}{4}+\frac{2\ell+1-N}{2N}m_z\nonumber\\
    &\quad+\frac{\theta}{8\pi NR\sqrt{(\ell+1)(N-\ell)}}m_\alpha\bigg),
    \label{eq:hardness-pulse}
  \end{align}
  applied for time $\tau=8\pi NR$, has control amplitudes bounded by one and satisfies
  \begin{equation}
    \left\|e^{-i\tau H_{\ell,\alpha,\theta}}-R_\alpha^{\ell,\ell+1}(\theta)\right\|
    \le\frac{\theta^2N}{2\pi R}.
    \label{eq:hardness-pulse-error}
  \end{equation}
  The norm is the operator norm restricted to the symmetric subspace.
\end{lemma}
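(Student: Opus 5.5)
The plan is to split $H_{\ell,\alpha,\theta}=H_0+W$. The diagonal part is $H_0=N\big(m_z^2/4+\tfrac{2\ell+1-N}{2N}m_z\big)$. The weak drive is $W=\frac{\theta}{8\pi NR\sqrt{(\ell+1)(N-\ell)}}\,Nm_\alpha$. I will treat $W$ in the interaction picture with respect to $H_0$, and control every error by the symmetric-subspace matrix elements from \cref{eq:compression-normalized-entries}. The amplitude bounds are immediate: $h_2=1/4$; $|h_z|=|2\ell+1-N|/(2N)\le1/2$; and $|h_\alpha|\le|\theta|/(8\pi NR)\le1$ since $|\theta|\le\pi$.

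\emph{Step 1: spectrum of $H_0$.} Using $m_z\ket n_N=(1-2n/N)\ket n_N$, a direct computation gives $E_n=(N-2n)(4\ell+2-N-2n)/(4N)$, which equals $(n-\ell)(n-\ell-1)/N$ plus a constant. Three consequences follow. First, $E_\ell=E_{\ell+1}$, so the targeted pair is exactly resonant. Second, every other transition $n\leftrightarrow n+1$ driven by $m_\alpha$ has detuning $E_{n+1}-E_n=2(n-\ell)/N$, of modulus at least $2/N$. Third, $\tau E_n=8\pi R(n-\ell)(n-\ell-1)+\mathrm{const}$, and $(n-\ell)(n-\ell-1)$ is even, so $e^{-i\tau H_0}$ is a global phase on the symmetric subspace. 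That phase is immaterial and can be absorbed, or removed by subtracting the constant. Hence $e^{-i\tau H}=e^{-i\tau H_0}\,\mathcal T e^{-i\int_0^\tau W_I(s)\,ds}$, with $W_I(s)=e^{isH_0}We^{-isH_0}$.

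\emph{Step 2: resonant part.} Split $W_I(s)=W_{\mathrm{res}}+W_{\mathrm{off}}(s)$. By \cref{eq:compression-normalized-entries}, the $(\ell,\ell+1)$ entry of $Nm_\alpha$ is $Nb_\ell=\sqrt{(\ell+1)(N-\ell)}$ times the appropriate $\pm i$ phase for $m_y$. So $W_{\mathrm{res}}=\frac{\theta}{8\pi NR}S^\alpha_{\ell,\ell+1}$ is time independent, and $e^{-i\tau W_{\mathrm{res}}}=R_\alpha^{\ell,\ell+1}(\theta)$ exactly. The off-resonant part is a sum of terms $w_n e^{\pm i s\,2(n-\ell)/N}\ket{n+1}\bra n+\mathrm{h.c.}$. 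Each term completes an integer number of periods over $[0,\tau]$, because $\tau\cdot 2(n-\ell)/N=16\pi R(n-\ell)$.

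\emph{Step 3: bounding the off-resonant error.} This is the main obstacle. I would compare the interaction-picture evolution $U_I(\tau)$ with $e^{-i\tau W_{\mathrm{res}}}$ in the frame rotating with $W_{\mathrm{res}}$, and then integrate by parts. Write $W_{\mathrm{off}}(s)=\dot F(s)$ with $F(s)=\int_0^s W_{\mathrm{off}}$. Then $F(0)=F(\tau)=0$, and dividing each entry by its detuning gives $\|F(s)\|\le N\|W_{\mathrm{off}}\|$. The Duhamel formula then bounds the error by terms of order $\int_0^\tau\|F\|\,\|W\|\,ds\lesssim \tau N\|W\|^2$. It remains to bound $\|W\|$. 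The entries of $Nm_\alpha$ are at most $N/2$, and the prefactor is divided by $\sqrt{(\ell+1)(N-\ell)}\ge\sqrt N$. Together with $\tau=8\pi NR$, this should give $\tau N\|W\|^2=\mathcal O(\theta^2N/R)$, matching the form of \cref{eq:hardness-pulse-error}. I expect the delicate points to be two. One is handling the rotating-frame conjugation by $W_{\mathrm{res}}$ without losing the exact periodicity. Here I would use the fact that $W_{\mathrm{res}}$ commutes with $H_0$ and contributes only the bounded factor $|\theta|$, or alternatively apply the integration by parts directly to the full Dyson series. The other is tracking the constants so that they produce exactly $1/(2\pi)$; I would absorb any slack there using $|\theta|\le\pi$.
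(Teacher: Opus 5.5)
Your plan is the paper's proof. The paper also writes $H_{\ell,\alpha,\theta}=Q_\ell+g\,m_\alpha$, uses the exact resonance and the integer cycle counts, and integrates $\frac{d}{dt}\big(e^{igtA}W(t)\big)=-ig\,e^{igtA}(C-A)W$ by parts against $F(t)=\int_0^t(C-A)$, where $F(0)=F(\tau)=0$. In this form your first ``delicate point'' does not arise: $e^{igtA}$ stays on the left and is only differentiated (it commutes with $A$). So $F$ is never conjugated, and its vanishing at the endpoints is untouched.

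Two loose ends remain.

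\emph{The phase.} The statement is phase-exact, so you may neither absorb a global phase nor shift $H_{\ell,\alpha,\theta}$ by a constant. Neither is needed: your own formula gives $\tau E_n=2\pi R\,(N-2n)(4\ell+2-N-2n)\in2\pi\mathbb Z$, so $e^{-i\tau H_0}=I$ exactly.

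\emph{The constant.} The constant $1/(2\pi)$ cannot be recovered by ``absorbing slack with $|\theta|\le\pi$'', because both sides scale as $\theta^2$. It comes out exactly from the bookkeeping instead. Set $g=\theta/(8\pi R\sqrt{(\ell+1)(N-\ell)})$. The entries of $F$ are then at most $N|g|$, so the row-sum bound gives $\|F\|\le2N|g|$. Use $\|m_\alpha\|\le1$ rather than entry bounds; the entries of $Nm_\alpha$ reach $(N+1)/2$, not $N/2$. The remainder is then at most $\tau\|F\|\cdot2|g|=4g^2\tau N$, which is $\le\theta^2N/(2\pi R)$ by $(\ell+1)(N-\ell)\ge N$.
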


The proof is given in \cref{sec:proofs}; its mechanism is resonance.
The diagonal part $Q_\ell=\frac N4m_z^2+\frac{2\ell+1-N}2m_z$ of $H_{\ell,\alpha,\theta}$ has eigenvalues $\mu_n$ on $\ket n_N$ with $\mu_{n+1}-\mu_n=2(n-\ell)/N$ and $\tau\mu_n\in2\pi\mathbb Z$, so the selected transition $\ell\leftrightarrow\ell+1$ is resonant, while every other transition is detuned by at least $2/N$ and completes an integer number of full cycles over the pulse.

\paragraph{Pairing and routing}
A logical gate mixes $d/2$ disjoint pairs of Dicke states, whereas each pair rotation acts on a single pair. Each gate in \cref{eq:hardness-gates} flips the bits in a mask $M$ and rotates about an axis $\alpha$ by an angle $\theta$,
\begin{equation}
  (M,\alpha,\theta)=
  \begin{cases}
    (2^i,\;x,\;\pi/4) & \text{for }A_i,\\
    (2^i,\;y,\;\pi/8) & \text{for }B_i,\\
    (2^i+2^j,\;x,\;\pi/4) & \text{for }C_{ij},
  \end{cases}
  \label{eq:hardness-pairing-rule}
\end{equation}
with $\theta$ negated for the inverse gate. The mask pairs each label $a$ with $a\oplus M$, where $\oplus$ denotes bitwise XOR.

\begin{lemma}[Pairing]
  \label{lem:pairing}
  For a gate $G$ with parameters $(M,\alpha,\theta)$ as in \cref{eq:hardness-pairing-rule},
  \begin{equation}
    VG=\Bigg(\prod_{\substack{0\le a<d\\a<a\oplus M}}
       R_\alpha^{a,a\oplus M}(\theta)\Bigg)V.
    \label{eq:hardness-pairing}
  \end{equation}
  The $d/2$ factors act on disjoint pairs and commute.
\end{lemma}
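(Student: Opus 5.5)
Both sides of \cref{eq:hardness-pairing} are operators from the register space to the physical space. The plan is to check that they agree on every computational-basis label $\ket a$ with $a<d$. By linearity and the convention that the logical circuit acts only on labels $a<d$, this gives the operator identity on the relevant subspace. The argument needs two observations, followed by a direct computation.

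\emph{First, the pairs are disjoint and stay inside the valid range.} XOR with a fixed mask $M$ is an involution $a\mapsto a\oplus M$ without fixed points, since $M\neq0$. It therefore partitions $\{0,\ldots,d-1\}$ into exactly $d/2$ unordered pairs $\{a,a\oplus M\}$. The condition $a<a\oplus M$ selects one representative from each pair. Because $M<d$ (its bits lie among the $q_{\mathrm c}$ least significant qubits), both $a$ and $a\oplus M$ are less than $d\le N+1$, so both are valid Dicke labels and $V$ maps them to orthonormal states $\ket a_N,\ket{a\oplus M}_N$. Each operator $S^\alpha_{a,b}$ is supported on $\mathrm{span}\{\ket a_N,\ket b_N\}$. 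Distinct pairs therefore have orthogonal supports, so their generators commute. The product then equals $\exp\big(-i\theta\sum_{a<a\oplus M}S^\alpha_{a,a\oplus M}\big)$, and it acts block-diagonally, as $R^{a,a\oplus M}_\alpha(\theta)$ on each pair and as the identity elsewhere.

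\emph{Second, each logical gate is a rotation $e^{-i\theta P}$ with $P^2=I$, and $P$ is a signed permutation on basis labels.} For $A_i$ and $C_{ij}$, $P$ is $\mathsf X_i$ or $\mathsf X_i\mathsf X_j$, which satisfies $P\ket a=\ket{a\oplus M}$. Hence $G$ restricted to $\mathrm{span}\{\ket a,\ket{a\oplus M}\}$ is $\cos\theta\,I-i\sin\theta\,\sigma^x$ in the ordered basis $(a,a\oplus M)$. Pushing this through $V$ gives exactly $R_x^{a,a\oplus M}(\theta)$ from \cref{eq:hardness-pair-operators}.

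For $B_i$ we have $P=\mathsf Y_i$. With $a<a\oplus 2^i$, bit $i$ of $a$ is $0$, so $\mathsf Y_i\ket a=i\ket{a\oplus M}$ and $\mathsf Y_i\ket{a\oplus M}=-i\ket a$. In the ordered basis $(a,a\oplus M)$ this is the matrix $\begin{psmallmatrix}0&-i\\i&0\end{psmallmatrix}$, which matches $S^y_{a,a\oplus M}$. The inverse gates simply replace $\theta$ by $-\theta$.

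The \textbf{main point needing care} is this orientation and sign check. The convention $a<a\oplus M$ has to correspond to the ordering that makes the signs come out right. For single-bit masks this holds, because the smaller label carries a $0$ in the flipped bit. For $C_{ij}$ the operator is $\mathsf X_i\mathsf X_j$, which carries no phase, so the ordering is irrelevant there.

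Combining the two observations, $VG\ket a$ equals the product applied to $V\ket a$ for every $a<d$, which proves the lemma.
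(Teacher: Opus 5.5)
Your proposal is correct and follows essentially the same route as the paper. Both show that each generator $\mathsf X_i$, $\mathsf Y_i$, $\mathsf X_i\mathsf X_j$ is block diagonal on the pairs $\{a,a\oplus M\}$, with blocks equal to those of $S^\alpha_{a,a\oplus M}$ (using $a_i=0$ to fix the orientation of the $\mathsf Y_i$ block), then exponentiate block by block and use disjointness of the pairs for commutation and for the absence of a phase. Your explicit checks that $a\oplus M<d\le N+1$, and that the identity holds on the span of labels $a<d$ (higher register qubits fixed to zero), make precise what the paper leaves implicit.
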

\begin{proof}
  The generators satisfy $\mathsf X_i\ket{a}=\ket{a\oplus2^i}$, $\mathsf X_i\mathsf X_j\ket{a}=\ket{a\oplus(2^i+2^j)}$, and $\mathsf Y_i\ket{a}=i(-1)^{a_i}\ket{a\oplus2^i}$, where $a_i$ is bit $i$ of $a$. Each generator is therefore block diagonal with $2\times2$ blocks on the pairs $\{a,a\oplus M\}$. On the ordered pair $(\ket a,\ket{a\oplus M})$ with $a<a\oplus M$, the block is $\begin{psmallmatrix}0&1\\1&0\end{psmallmatrix}$ for $A_i$ and $C_{ij}$, and, since $a<a\oplus2^i$ means $a_i=0$, $\begin{psmallmatrix}0&-i\\i&0\end{psmallmatrix}$ for $B_i$; these are the blocks of $S^\alpha_{a,a\oplus M}$ on $(\ket a_N,\ket{a\oplus M}_N)$. Exponentiating block by block, $G$ acts on each ordered pair as $R_\alpha^{a,a\oplus M}(\theta)=e^{-i\theta S^\alpha_{a,a\oplus M}}$ acts on $(\ket a_N,\ket{a\oplus M}_N)$. Since the pairs are disjoint, the blocks commute and the factorization introduces no phase.
\end{proof}

The paired occupations need not be adjacent. To rotate a pair with nonadjacent occupations $a<b$, move $\ket b_N$ to $\ket{a+1}_N$ without disturbing $\ket a_N$, apply the adjacent rotation, and undo the moves. Each move is a signed swap
\begin{gather}
  W_j=R_y^{j,j+1}(-\pi/2),\\
  W_j\ket{j}_N=-\ket{j+1}_N,\qquad
  W_j\ket{j+1}_N=\ket{j}_N.
\end{gather}

\begin{lemma}[Routing]
  \label{lem:routing}
  For $0\le a<b<d$, let $P_{a,b}=W_{a+1}W_{a+2}\cdots W_{b-1}$. Then
  \begin{equation}
    R_\alpha^{a,b}(\theta)
    =P_{a,b}^\dagger R_\alpha^{a,a+1}(\theta)P_{a,b},
    \label{eq:hardness-routing}
  \end{equation}
  a product of $2(b-a-1)+1<2d$ adjacent pair rotations.
\end{lemma}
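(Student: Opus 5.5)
The plan is to verify the conjugation identity directly on basis vectors, by tracking how $P_{a,b}$ moves the two relevant Dicke states and confirming that it moves nothing else into the pair's span. If $a+1=b$, then $P_{a,b}$ is the empty product, the identity, and there is nothing to prove. So assume $b\ge a+2$.

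The first step is to compute the action of $P_{a,b}=W_{a+1}\cdots W_{b-1}$ on each $\ket n_N$.

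1. The rightmost factor $W_{b-1}$ acts first and sends $\ket b_N\mapsto\ket{b-1}_N$.
2. Each subsequent $W_j$ maps $\ket{j+1}_N\mapsto\ket j_N$, so $\ket b_N$ is carried down step by step to $\ket{a+1}_N$ with overall sign $+1$.
3. $\ket a_N$ is untouched, since every $W_j$ with $j\ge a+1$ acts only on $\ket j_N,\ket{j+1}_N$.
4. States with $n>b$ or $n<a$ are likewise fixed.

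That leaves the intermediate states $\ket n_N$ with $a<n<b$. Such a state is untouched by $W_{b-1},\dots,W_{n+1}$ and then hit by $W_n$, giving $-\ket{n+1}_N$. After that it is fixed by $W_{n-1},\dots$, because those factors act on labels $\le n$. So the net effect is $\ket n_N\mapsto-\ket{n+1}_N$. The factor ordering must be checked carefully here; this bookkeeping is the only real obstacle.

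In summary, $P_{a,b}$ is a unitary on the symmetric subspace (each $W_j$ is unitary) with
\[
P_{a,b}\ket a_N=\ket a_N,\qquad P_{a,b}\ket b_N=\ket{a+1}_N,
\]
and it maps the orthogonal complement of $\mathrm{span}\{\ket a_N,\ket b_N\}$ onto the orthogonal complement of $\mathrm{span}\{\ket a_N,\ket{a+1}_N\}$. The intermediate states land on labels $a+2,\dots,b$, and all other states are fixed.

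Next I conjugate. Since $R_\alpha^{a,a+1}(\theta)$ acts as the identity outside $\mathrm{span}\{\ket a_N,\ket{a+1}_N\}$, the operator $P_{a,b}^\dagger R_\alpha^{a,a+1}(\theta)P_{a,b}$ acts as the identity outside $\mathrm{span}\{\ket a_N,\ket b_N\}$. On the ordered basis $(\ket a_N,\ket b_N)$, which $P_{a,b}$ maps to the ordered basis $(\ket a_N,\ket{a+1}_N)$ with no signs, the matrix is exactly the $2\times2$ block of $R_\alpha^{a,a+1}(\theta)$ in \cref{eq:hardness-pair-operators}. That block is by definition the block of $R_\alpha^{a,b}(\theta)$, which proves \cref{eq:hardness-routing}.

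The absence of signs on the selected pair matters: a sign on $\ket b_N$ would turn $S^y$ into $-S^y$, that is, $\theta\to-\theta$. The signs produced on intermediate states are harmless because they cancel under conjugation.

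Finally, the count. $P_{a,b}$ has $b-a-1$ factors, and so does $P_{a,b}^\dagger=W_{b-1}^\dagger\cdots W_{a+1}^\dagger$. Each $W_j^\dagger=R_y^{j,j+1}(\pi/2)$ is itself an adjacent pair rotation. Adding the central rotation gives $2(b-a-1)+1$ adjacent pair rotations, and this is $<2d$ because $b-a\le d-1$.
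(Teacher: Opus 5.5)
Your proposal is correct and is essentially the paper's proof. Both track $P_{a,b}$ on Dicke states, showing that $\ket{b}_N\mapsto\ket{a+1}_N$ with positive sign, that $\ket{a}_N$ is fixed, and that every other state goes to a signed Dicke state outside $\{a,a+1\}$, so the conjugated rotation reproduces the $(a,a+1)$ block on $(\ket{a}_N,\ket{b}_N)$ and acts as the identity elsewhere. Your explicit bookkeeping of the intermediate labels ($\ket{n}_N\mapsto-\ket{n+1}_N$ for $a<n<b$) and of the gate count just fills in details the paper leaves implicit; as a minor aside, a stray sign on $\ket{b}_N$ would flip $S^x$ as well as $S^y$, since $\mathrm{diag}(1,-1)$ anticommutes with both.
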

\begin{proof}
  Operators act right to left, so $P_{a,b}$ applies $W_{b-1},W_{b-2},\ldots,W_{a+1}$ in turn. It maps $\ket{b}_N$ to $\ket{a+1}_N$ with positive sign and fixes $\ket{a}_N$, which none of the factors touch. Hence the right-hand side of \cref{eq:hardness-routing} acts on $(\ket{a}_N,\ket{b}_N)$ as $R_\alpha^{a,a+1}(\theta)$ acts on $(\ket{a}_N,\ket{a+1}_N)$. Every other Dicke state is mapped by $P_{a,b}$ to a signed Dicke state with occupation outside $\{a,a+1\}$, left unchanged by the central rotation, and restored by $P_{a,b}^\dagger$.
\end{proof}

\Cref{lem:pairing} for each gate and \cref{lem:routing} for each of its $d/2$ pairs give $VU=\widetilde UV$, where $\widetilde U$ is a product of at most $n^{\mathrm{PR}}=Ld^2$ adjacent pair rotations, and the compiler realises each of them by the pulse of \cref{lem:selective-pulse} with
\begin{equation}
	R=\left\lceil\frac{4Nn^{\mathrm{PR}}}{\eta}\right\rceil.
	\label{eq:hardness-resource-choice}
\end{equation}
The error accounting and the resource count that complete the proof of \cref{lem:circuit-to-control} are given in \cref{sec:proofs}.

We now consider the consequences of the above for classical simulation.

\begin{restatedtheorem}{cor:informal-classical-hardness}[Conditional classical hardness]
	\label{thm:classical-hardness}
	Fix a constant $0<\epsilon<1/3$, and let $\mathcal A$ be a classical algorithm that solves \cref{prob:averaged-dynamics} to accuracy $\epsilon$ (for $J=0$, $O=\sigma^z_1$ and $\rho(0)=\ket0\bra0^{\otimes N}$) with $H_{J=0}$ a piecewise-constant schedule of (two-local) permutation-invariant Hamiltonians and $t$ its total duration.
 	\begin{enumerate}
		\item[(a)] If $\mathcal A$ is deterministic and runs in time $\mathrm{poly}(N,t)$ and, simultaneously, space $\mathrm{polylog}(N,t)$, then $\Lclass\subseteq\SC$.
		\item[(b)] If $\mathcal A$ is randomised, succeeds with probability at least $2/3$, and runs in time $\mathrm{poly}(N,t)$ and, simultaneously, space $\mathcal O(\log(Nt))$, then $\Lclass=\BPL$.
	\end{enumerate}
\end{restatedtheorem}

\begin{proof}
	Let $\mathcal L\in\Lclass$, let $x$ be an input of length $n$, and let $\gamma=1/3-\epsilon>0$.
	Since $\ket0^{\otimes N}$ and every Hamiltonian of the schedule are permutation invariant, the evolved state lies in the symmetric subspace, where $\sigma^z_1$ and $m_z$ have the same compressed matrix (\cref{sec:element-access-local-observable}); hence $\mathcal A$ estimates $F_{\sigma^z_1}(t)=F_{m_z}(t)$, with $\|\sigma^z_1\|=\|m_z\|=1$.
	From $x$ we construct a schedule whose magnetisation $F_{m_z}(t)$ lies within $\gamma/2$ of the polarisation $1-2p_x$ of the output qubit of a quantum circuit deciding $\mathcal L$; the sign of the estimate of $\mathcal A$ then decides whether $x\in\mathcal L$.

	By \citet{fefferman2021}, intermediate measurements can be removed from $\Lclass$ computations within the same resources.
	Hence a deterministic machine running in polynomial time and $\mathcal O(\log n)$ space emits, on input $x$, the $\mathrm{poly}(n)$ gates of a circuit $U_x$ over a fixed finite gate set on $\mathcal O(\log n)$ qubits.
	Measuring the output qubit of $U_x$ applied to the all-zero state gives one with probability $p_x$, where $p_x\ge2/3$ if $x\in\mathcal L$ and $p_x\le1/3$ otherwise, so $1-2p_x\le-1/3$ if $x\in\mathcal L$ and $1-2p_x\ge1/3$ otherwise.

	For a register of $q_{\mathrm c}$ qubits we take $N=2^{q_{\mathrm c}}-1$, so that $V$ maps the register unitarily onto the symmetric subspace, the label $a$ going to the Dicke state $\ket a_N$ with $a$ spins in state $\ket1$.
	Thus $m_z=1-2a/N$ on $\ket a_N$; writing $a=\sum_i2^ia_i$ in binary and using $\sum_{i<q_{\mathrm c}}2^i=N$, this is $\sum_i(2^i/N)(1-2a_i)$, that is,
	\begin{equation}
		V^\dagger m_zV=\sum_{i=0}^{q_{\mathrm c}-1}\frac{2^i}{N}\,\mathsf Z_i.
	\end{equation}
	The most significant qubit carries only about half of the weight, so we copy the output qubit.

	Let $c=\lceil\log_2(8/\gamma)\rceil$, a constant, and let $U$ be $U_x$ followed by $c-1$ controlled-NOT gates that copy its output qubit onto fresh qubits.
	We order the $q_{\mathrm c}=\mathcal O(\log n)$ qubits of $U$ such that the $c$ qubits carrying the output bit are the most significant, so that their total weight is $w\ge1-2^{-c}$.
	In $\ket\varphi=VU\ket{0^{q_{\mathrm c}}}$ each of them has $\langle\mathsf Z\rangle=1-2p_x$ and every other qubit has $|\langle\mathsf Z\rangle|\le1$, hence
	\begin{equation}
		\big|\bra\varphi m_z\ket\varphi-(1-2p_x)\big|\le2(1-w)\le2^{1-c}\le\gamma/4.
	\end{equation}

	It remains to realise $U$ as a schedule.
	Compiling each gate of $U$ into the gate set \cref{eq:hardness-gates} to error $\gamma/16$ divided by the number of gates, in polynomial time and logarithmic space~\cite{vanmelkebeek2012}, gives a circuit $U'$ of $L=\mathrm{poly}(n)$ gates with $\|U'-U\|\le\gamma/16$.
	\Cref{lem:circuit-to-control} applied to $U'$, $N$, and $\eta=\gamma/16$ yields a schedule of the form \cref{eq:hardness-controls} with $S,t=\mathrm{poly}(n)$ whose evolution $U_{\mathrm{phys}}$ satisfies $\|U_{\mathrm{phys}}V-VU'\|\le\gamma/16$.
	The initial state is $\ket0^{\otimes N}=\ket0_N=V\ket{0^{q_{\mathrm c}}}$, so the evolved state is within $\gamma/8$ of $\ket\varphi$; as expectation values in unit vectors differ by at most $2\|m_z\|=2$ times the vector distance, $|F_{m_z}(t)-\bra\varphi m_z\ket\varphi|\le\gamma/4$.

	Altogether, $|\widetilde F-(1-2p_x)|\le\epsilon+\gamma/2<1/3$, so $\widetilde F<0$ if and only if $x\in\mathcal L$.

	The instance $(N,\text{schedule})$ is produced from $x$ by composing the machines above, each running in polynomial time and $\mathcal O(\log n)$ space; we run $\mathcal A$ on it and recompute each input symbol whenever $\mathcal A$ reads it, which keeps both bounds.
	As $N,t,S=\mathrm{poly}(n)$, in case (a) the procedure is deterministic and runs in polynomial time and polylogarithmic space, so $\mathcal L\in\SC$; in case (b) it errs with probability at most $1/3$ and runs in polynomial time and $\mathcal O(\log n)$ space, so $\mathcal L\in\BPL$, and $\Lclass=\BPL$ follows as $\BPL\subseteq\Lclass$ trivially.
\end{proof}

\subsection{Classical Sampling for \texorpdfstring{$\mathcal{O}(\log N/N)$}{O(log N/N)} Times}
\label{sec:classical-sampling}

We now turn from estimating expectation values to sampling: for a permutation-invariant Hamiltonian $H=Nf$ on $N$ sites of local dimension $\chi$, supplied as in \cref{lem:compression-element-access}, and a symmetric initial state $\rho(0)$, draw a string $x\in\{0,\ldots,\chi-1\}^N$ from the computational-basis measurement distribution of the evolved state,
\begin{equation}
  P(x)=\bra{x}e^{-iHt}\rho(0)e^{iHt}\ket{x}.
  \label{eq:sampling-task}
\end{equation}
We show that for times $t=\mathcal O(\log N/N)$ the task is solved by a classical algorithm in $\mathcal O(\chi\log N)$ space.

Since $\rho(0)$ is supported on the symmetric subspace, \cref{eq:compression-intertwining} gives $\rho(t)=V\varrho(t)V^\dagger$ with $\varrho(t)=e^{-iht}\varrho(0)e^{iht}$, and as $\braket{x|\mathbf n}_N=M_N(\mathbf n)^{-1/2}$ for $x\in\mathcal X_{\mathbf n}$ and zero otherwise, a string $x\in\mathcal X_{\mathbf n}$ has
$
  P(x)=\varrho(t)_{\mathbf n\mathbf n} / M_N(\mathbf n),
$
cf. \cref{eq:generalised-dicke-states}.
The measurement thus factorises into two classical stages: draw an occupation $\mathbf n$ with probability $\varrho(t)_{\mathbf n\mathbf n}$, then a uniformly random string in $\mathcal X_{\mathbf n}$.
The accuracy of a sampler with output distribution $\widetilde P$ is measured by the total variation distance $d_{\mathrm{TV}}(P,\widetilde P)=\frac12\sum_x|P(x)-\widetilde P(x)|$.

\begin{restatedtheorem}{thm:informal-sampling}[Classical sampling of permutation-invariant dynamics]
\label{thm:classical-sampling}
Let $H=Nf$ be a $k$-local, permutation-invariant Hamiltonian of local dimension $\chi$ supplied as in \cref{lem:compression-element-access}, with $k$ and $\|f\|_1$ independent of $N$, let $t=\mathcal O(\log N/N)$ and $\epsilon^{-1}\le\mathrm{poly}(N)$, and let $\rho(0)$ be supported on the symmetric subspace with a compressed matrix $\varrho(0)=V^\dagger\rho(0)V$ whose entries are computable to inverse-polynomial precision in polynomial time and logarithmic space.
Then a classical algorithm using random bits outputs, symbol by symbol, a string whose distribution is within total variation distance $\epsilon$ of $P$, in time $N^{\mathcal O(\chi)}$ and, simultaneously, $\mathcal O(\chi\log N)$ workspace.
\end{restatedtheorem}

\begin{proof}
By \cref{eq:compression-matrix-units}, every collective operator changes each occupation $n_a$ by at most one, so each monomial of $f$ changes it by at most $k$, and $h$ is banded: $h_{\mathbf m\mathbf n}=0$ for $\|\mathbf m-\mathbf n\|_\infty>k$.
Truncate the exponential series,
\begin{equation}
  U_L=\sum_{\ell=0}^L\frac{(-it)^\ell}{\ell!}h^\ell,\qquad
  \|e^{-iht}-U_L\|\le\frac{\|th\|^{L+1}}{(L+1)!}\eqqcolon\delta.
  \label{eq:sampling-truncation}
\end{equation}
Since $\|th\|\le tN\|f\|_1=\mathcal O(\log N)$ by \cref{lem:compression-element-access} and $\epsilon^{-1}\le\mathrm{poly}(N)$, an order $L=\mathcal O(\log N)$ achieves $\delta\le\epsilon/8$.
An entry $(h^\ell)_{\mathbf m\mathbf n}$ is the sum over paths of occupations $\mathbf n=\mathbf n_0,\mathbf n_1,\ldots,\mathbf n_\ell=\mathbf m$ with $\|\mathbf n_j-\mathbf n_{j-1}\|_\infty\le k$ of the products $h_{\mathbf n_\ell\mathbf n_{\ell-1}}\cdots h_{\mathbf n_1\mathbf n_0}$.
A path is fixed by its steps in the $\chi-1$ stored occupations $n_1,\ldots,n_{\chi-1}$, and paths leaving the valid occupations contribute zero.
For $\ell\le L$ there are thus at most $(2k+1)^{(\chi-1)L}=N^{\mathcal O(\chi)}$ such paths, each specified by $\mathcal O(\chi\log N)$ bits, so $(U_L)_{\mathbf m\mathbf n}$ is computed by enumerating them and accumulating, with the factor $-it/j$ absorbed at the $j$th multiplication.
The entries of $h$ are computed on the fly in time $N^{\mathcal O(\chi)}$ and workspace $\mathcal O(\chi\log N)$ by \cref{lem:compression-element-access}; every partial product is bounded by $\|th\|^j/j!\le e^{\|th\|}=\mathrm{poly}(N)$, and rounding errors are amplified by at most $N^{\mathcal O(\chi)}$ factors, so arithmetic on $\mathcal O(\chi\log N)$ bits yields accuracy $N^{-c\chi}$ for any constant $c$.
The same applies, with the entries of $\varrho(0)$ computed to precision $N^{-c\chi}$ by assumption, to the approximate weights
\begin{equation}
  p_{\mathbf n}=\big(U_L\varrho(0)U_L^\dagger\big)_{\mathbf n\mathbf n}
  =\sum_{\mathbf a,\mathbf b}(U_L)_{\mathbf n\mathbf a}\,\varrho(0)_{\mathbf a\mathbf b}\,\overline{(U_L)_{\mathbf n\mathbf b}},
  \label{eq:sampling-weights}
\end{equation}
a sum of at most $Q^2$ terms, with $Q\le(2N+2)^{\chi-1}$ the register dimension of \cref{sec:compressed-encoding}, that are recomputed rather than stored.
Writing $U_L=e^{-iht}+D$ with $\|D\|\le\delta$, the sum of the absolute diagonal entries of a matrix is at most its trace norm, and $\|AXB\|_1\le\|A\|\|X\|_1\|B\|$ with $\|\varrho\|_1=1$ gives
\begin{equation}
  \sum_{\mathbf n}|p_{\mathbf n}-\varrho(t)_{\mathbf n\mathbf n}|\le\|U_L\varrho(0)U_L^\dagger-\varrho(t)\|_1\le2\delta+\delta^2.
  \label{eq:sampling-weight-error}
\end{equation}
Let $w_{\mathbf n}\ge0$ be $p_{\mathbf n}$ computed to accuracy $\epsilon/(8Q)$ and clipped at zero; as $\varrho(t)_{\mathbf n\mathbf n}\ge0$, clipping does not increase the error, and $\sum_{\mathbf n}|w_{\mathbf n}-\varrho(t)_{\mathbf n\mathbf n}|\le\epsilon/2$.
Since $\sum_{\mathbf n}\varrho(t)_{\mathbf n\mathbf n}=1$, normalising the weights changes them by at most another $\epsilon/2$ in total, so the distribution $w_{\mathbf n}/\sum_{\mathbf m}w_{\mathbf m}$ is within total variation distance $\epsilon/2$ of $\varrho(t)_{\mathbf n\mathbf n}$.

For the first stage, the weights are never stored: scan the valid occupations $\mathbf n$ in lexicographic order of $(n_1,\ldots,n_{\chi-1})$, keeping a running total $S$ and a candidate $\mathbf n^\ast$: on reaching $w_{\mathbf n}>0$, set $\mathbf n^\ast=\mathbf n$ with probability $w_{\mathbf n}/(S+w_{\mathbf n})$, and then replace $S$ by $S+w_{\mathbf n}$.
If $\mathbf n^\ast=\mathbf m$ held with probability $w_{\mathbf m}/S$ before this step, it holds with probability $\frac{w_{\mathbf m}}{S}\big(1-\frac{w_{\mathbf n}}{S+w_{\mathbf n}}\big)=\frac{w_{\mathbf m}}{S+w_{\mathbf n}}$ afterwards, so by induction the final candidate equals $\mathbf n$ with probability $w_{\mathbf n}/\sum_{\mathbf m}w_{\mathbf m}$.
Beyond the workspace for one weight, only $\mathbf n$, $\mathbf n^\ast$, $w_{\mathbf n}$, and $S$ are kept, $\mathcal O(\chi\log N)$ bits each.

For the second stage, given $\mathbf n^\ast=\mathbf n$, output the symbols $x_1,\ldots,x_N$ in turn, choosing $x_i=a$ with probability equal to the number of sites still to be put in state $a$ divided by the number of positions left.
For each $a$ the numerators run through $n_a,n_a-1,\ldots,1$ and the denominators through $N,N-1,\ldots,1$, so every string in $\mathcal X_{\mathbf n}$ is output with probability $\prod_an_a!/N!=1/M_N(\mathbf n)$.
This produces a uniformly random string in $\mathcal X_{\mathbf n}$ and stores only the $\chi$ remaining counts.

Both stages together make at most $Q+N$ random choices.
Realising each with $\mathcal O(\log(Q/\epsilon))=\mathcal O(\chi\log N)$ random bits to accuracy $\epsilon/(2(Q+N))$ perturbs the output distribution by at most $\epsilon/2$ in total variation, giving total error $\epsilon$ in time $N^{\mathcal O(\chi)}$ and $\mathcal O(\chi\log N)$ workspace.
\end{proof}

\subsection{Classical Dynamics for \texorpdfstring{$\mathcal{O}(1)$}{O(1)} Times via Cluster Expansion}
\label{sec:cluster-method}

We now leave the symmetric subspace and consider $k$-body all-to-all Hamiltonians on $N$ sites of local dimension $\chi\ge2$ whose terms on up to $k$ sites are arbitrary bounded operators, not necessarily permutation invariant, either fixed or with independent Gaussian couplings.
We show that at constant times $t=\mathcal O(1)$, expectation values of local observables in product states can be estimated classically to constant accuracy in polynomial time and logarithmic space.

\begin{restatedtheorem}{thm:informal-classical}[Classical simulation of all-to-all dynamics at constant times]\label{thm:classical}
Let $k\ge2$ and $k_O\ge1$, and set $\kappa\coloneqq\max\{k,k_O\}$.
Consider $N$ sites of local dimension $\chi$, a product state $\rho(0)=\bigotimes_{i=1}^N\rho_i$, an observable $O$ supported on at most $k_O$ sites, and the Hamiltonian
\begin{equation}\label{eq:classical-hamiltonian}
  H=\sum_{1\le|T|\le k}\frac{h_T}{N^{|T|-1}}+\frac1{N^{(k-1)/2}}\sum_{|T|=k}J_Th'_T,
\end{equation}
where $T$ ranges over subsets of $\{1,\ldots,N\}$, $J_T\overset{\mathrm{i.i.d.}}{\sim}\mathcal N(0,1)$, and $h_T$, $h'_T$ are Hermitian operators on $T$ of operator norm at most $\mathcal J$; the $h'_T$ may vanish, in which case $H$ is deterministic.
Then
\[
  \overline F_O(t)=\mathbb E_J\big[\Tr\big(Oe^{-iHt}\rho(0)e^{iHt}\big)\big]
\]
can be computed to accuracy $\epsilon\|O\|$, $\epsilon\in(0,1)$, by a deterministic classical algorithm in time $(\chi N)^{\mathcal O(kM+k_O)}$ and, simultaneously, in workspace $\mathcal O((kM+k_O)\log(\chi N))$ bits, where $M=\exp\big(\mathcal O(1+\kappa^2(k+1)^{2k}\mathcal J^2t^2)\big)\log(2/\epsilon)$, provided $M\le\mathrm{poly}(N)$ and $t$ and the matrix elements of the $\rho_i$, $h_T$, $h'_T$ and $O$ are available as fixed-point numbers to any required precision.
In particular, for fixed $\chi$, $k$ and $k_O$, $\mathcal J=\mathcal O(1)$, $\epsilon=\Theta(1)$ and $t=\mathcal O(1)$, $M=\mathcal O(1)$ and the computation requires $\mathrm{poly}(N)$ time and $\mathcal O(\log N)$ space.
\end{restatedtheorem}

For $k,k_O=\mathcal O(1)$, \cref{thm:classical} contains \cref{prob:averaged-dynamics} both for $J=0$, as stated in \cref{sec:summary-nondisordered}, and for $r\to\infty$, which proves \cref{thm:sk-classical}.
First, $H_{J=0}=Nf(m_x,m_y,m_z)$ of \cref{eq:hamiltonian} is of the form \cref{eq:classical-hamiltonian} with $h'_T=0$ up to an additive constant: a monomial of degree $\ell\le k$ gives $Nm_{\alpha_1}\cdots m_{\alpha_\ell}=N^{1-\ell}\sum_{i_1,\ldots,i_\ell}\sigma^{\alpha_1}_{i_1}\cdots\sigma^{\alpha_\ell}_{i_\ell}$, and the at most $\ell^\ell\le k^k$ index tuples with a given index set $T$ contribute an operator on $T$ of norm at most $N^{1-|T|}$ each; this gives $\|h_T\|\le k^k\|f\|_1$, with $\|f\|_1$ the coefficient norm of \cref{eq:compression-normalized-family}.
In the limit $r\to\infty$, the disordered part of \cref{eq:hamiltonian} adds $h'_T=-J\sigma^z_{i_1}\cdots\sigma^z_{i_k}$ for $T=\{i_1,\ldots,i_k\}$ with independent Gaussian $J_T$ (\cref{sec:setup}); this includes the $k$-spin models of \cref{eq:models-p-spin} at any transverse field $B$, in particular the SK model \cref{eq:models-sk} for $k=2$.
Hence $\mathcal J=\max\{k^k\|f\|_1,|J|\}$, which is $\mathcal O(1)$ for $\|f\|_1,|J|=\mathcal O(1)$.
Second, \cref{prob:averaged-dynamics} has the product initial state $(\ket\phi\bra\phi)^{\otimes N}$ and an observable supported on at most $k_O$ sites.

The proof of \cref{thm:classical} follows ideas of \citet{wild2023}, and has three ingredients.
First, a cluster expansion shows that the expectation at complex time $s$ is bounded on a region around the real axis whose scale does not shrink with $N$.
Second, analytic continuation from this region expresses the expectation at time $t$ as a truncated Maclaurin series of a composed function.
Third, the truncated series is a finite sum of explicit terms, which can be enumerated by brute force in time $(\chi N)^{\mathcal O(kM+k_O)}$ and space $\mathcal O((kM+k_O)\log(\chi N))$ (\cref{lem:eval}).
Compared to \citet{wild2023}, three changes are needed.
Their analysis assumes an interaction graph whose degree is independent of the system size, so that the convergence radius $1/(2e\Delta)$ of their expansion is a constant; here the degree is $\Delta=\Theta(N^{k-1})$.
For Gaussian couplings, the disorder average is moreover taken inside the expansion by Gaussian integration by parts (\cref{lem:analyticity}), where coinciding labels are paired and the $N^{-(k-1)/2}$ normalisation of the couplings compensates the resulting sums.
Finally, \citet{wild2023} bound only the running time, whereas here the truncated series has to be evaluated in logarithmic space (\cref{lem:eval}).

\subsubsection{Cluster Expansion}
\label{sec:clusterexp}

\paragraph{Reduction to $k$-body terms}
The terms on fewer than $k$ sites can be absorbed into the deterministic $k$-body terms.
Since every set $T$ with $|T|=j\le k$ lies in $\binom{N-j}{k-j}$ sets $S$ with $|S|=k$,
\begin{equation}\label{eq:absorb}
  \begin{gathered}
    \sum_{1\le|T|\le k}\frac{h_T}{N^{|T|-1}}=\frac1{N^{k-1}}\sum_{|S|=k}\tilde h_S,\\
    \tilde h_S=\sum_{\emptyset\ne T\subseteq S}\frac{N^{k-|T|}}{\binom{N-|T|}{k-|T|}}\,h_T
  \end{gathered}
\end{equation}
for $N\ge k$, where $\tilde h_S$ is again Hermitian and supported on $S$, and its matrix elements are computed from those of the $h_T$ in logarithmic space.
Since $\binom{N-j}{k-j}\ge\big(\frac{N-j}{k-j}\big)^{k-j}\ge(N/k)^{k-j}$, $\|\tilde h_S\|\le\sum_{j=1}^k\binom kjk^{k-j}\mathcal J\le(k+1)^k\mathcal J$; for $k=2$, \cref{eq:absorb} reads $\tilde h_{ij}=h_{ij}+\frac N{N-1}(h_{\{i\}}+h_{\{j\}})$.
Nothing below depends on the $k$-body terms beyond their supports and a norm bound.
Replacing $H$ by $H/((k+1)^k\mathcal J)$ and $t$ by $(k+1)^k\mathcal Jt$, we thus assume from now on that $h_T=0$ for $|T|<k$ and $\|h_T\|,\|h'_T\|\le1$.

\paragraph{Expansion in nested commutators}

For a fixed realization, write
\begin{equation}\label{eq:HJ}
  H=\sum_{|S|=k}\left(\frac{h^0_S}{N^{k-1}}+\frac{J_S}{N^{(k-1)/2}}h^1_S\right),
\end{equation}
where $S$ ranges over the $k$-element subsets of $\{1,\ldots,N\}$, and the \emph{type} $\sigma\in\{0,1\}$ of $h^\sigma_S$ distinguishes the deterministic term $h^0_S=h_S$ from the Gaussian term $h^1_S=h'_S$.

For an operator $X$, let $\operatorname{ad}_X$ denote the linear map $Y\mapsto[X,Y]$ on operators, so that $\operatorname{ad}_X^m(O)=[X,[X,\ldots[X,O]]]$ is the $m$-fold nested commutator, and abbreviate $\operatorname{ad}^\sigma_S\coloneqq\operatorname{ad}_{h^\sigma_S}$.
The Heisenberg-evolved observable is the exponential of this map applied to $O$,
\begin{equation}\label{eq:hadamard}
  \begin{aligned}
    e^{iHs}Oe^{-iHs}
    &=e^{is\operatorname{ad}_H}(O)
     =\sum_{m\ge0}\frac{(is)^m}{m!}\operatorname{ad}_H^m(O).
  \end{aligned}
\end{equation}
Since $\operatorname{ad}_H=\sum_{|S|=k}(N^{-(k-1)}\operatorname{ad}^0_S+N^{-(k-1)/2}J_S\operatorname{ad}^1_S)$, its $m$-th power expands over $m$-tuples of sets $S_i$ with types $\sigma_i$,
\begin{equation}\label{eq:ad-expansion}
  \operatorname{ad}_H^m(O)
  =\sum_{\sigma\in\{0,1\}^m}\sum_{S_1,\ldots,S_m}
  \frac{\prod_{i:\sigma_i=1}J_{S_i}}{N^{(k-1)(d+g/2)}}
  \operatorname{ad}^{\sigma_1}_{S_1}\cdots\operatorname{ad}^{\sigma_m}_{S_m}(O),
\end{equation}
where $d$ and $g=m-d$ are the numbers of deterministic and Gaussian types in $\sigma$.
For $s\in\mathbb C$, let
\begin{equation}\label{eq:Os}
  f(s)=\overline F_O(s)=\mathbb E_J\big[\Tr\big(\rho(0)e^{iHs}Oe^{-iHs}\big)\big]
\end{equation}
be the expectation value of \cref{thm:classical} at complex time $s$.
Taking the trace against $\rho(0)$ and the expectation over $J$ term by term, $f$ has the Maclaurin expansion
\begin{equation}\label{eq:taylor}
  \begin{aligned}
    f(s)&=\sum_{m\ge0}a_ms^m,\quad
    w_\sigma=N^{-(k-1)(d+g/2)}\,\mathbb E_J\Big[\prod_{i:\sigma_i=1}J_{S_i}\Big],\\
    a_m&=\frac{i^m}{m!}\sum_{\sigma,S_1,\ldots,S_m}w_\sigma\,
    \Tr\big(\rho(0)\operatorname{ad}^{\sigma_1}_{S_1}\cdots\operatorname{ad}^{\sigma_m}_{S_m}(O)\big).
  \end{aligned}
\end{equation}
At $m=0$, the tuple sum contains the single empty tuple, scalar empty products equal $1$, and the empty composition of commutators is the identity map; hence $a_0=\Tr(\rho(0)O)$.

Most tuples do not contribute: the nested commutator is nonzero only if the sets $S_1,\ldots,S_m$ are linked to $O$ through overlapping supports.
Organising the sum by which tuples survive is the cluster expansion.

\paragraph{Clusters}
A multiset is an unordered collection of elements with repetitions; for a multiset $\mathbf W$ we write $\mathbf W^\partial$ for its underlying set, $\mu_{\mathbf W}(u)$ for the multiplicity of $u$, and $\mathbf W!\coloneqq\prod_{u\in\mathbf W^\partial}\mu_{\mathbf W}(u)!$.
The \emph{interaction graph} for the sets $S$ and an observable $O$ has one vertex for every set $S$ and one vertex for $O$, two vertices being adjacent if and only if their supports intersect.
A \emph{cluster} is a nonempty multiset $\mathbf W$ of vertices; it is \emph{connected} if the subgraph induced by $\mathbf W^\partial$ is connected, and a cluster of Hamiltonian vertices is \emph{completely connected to $O$} if $\mathbf W\cup\{O\}$ is connected.
We write $\mathcal G^O_m$ for the set of clusters of $m$ Hamiltonian vertices that are completely connected to $O$.
For empty tuples, use the convention $\mathcal G^O_0=\{\emptyset\}$.
For the models of \cref{eq:HJ} and an observable on at most $k_O$ sites, the interaction graph has maximum degree at most $\kappa N^{k-1}$ with $\kappa=\max\{k,k_O\}$: every site lies in $\binom{N-1}{k-1}\le N^{k-1}$ sets $S$, so a set $S$ intersects at most $k\binom{N-1}{k-1}-1$ other sets and possibly the observable vertex, while the observable vertex meets at most $k_O\binom{N-1}{k-1}$ sets.

\begin{lemma}[{\cite[Lemma~2]{wild2023}}]\label{lem:connected}
  If the multiset $\{S_1,\ldots,S_m\}$ is not completely connected to $O$, then
$\operatorname{ad}^{\sigma_1}_{S_1}\cdots\operatorname{ad}^{\sigma_m}_{S_m}(O)=0$ for all $\sigma\in\{0,1\}^m$.
\end{lemma}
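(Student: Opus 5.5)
The plan is to argue by contraposition, by induction on $m$, tracking the support of the partially nested commutator. Set $X_m\coloneqq O$ and $X_{j-1}\coloneqq\operatorname{ad}^{\sigma_j}_{S_j}(X_j)$ for $j=m,\ldots,1$, so that the full expression is $X_0=\operatorname{ad}^{\sigma_1}_{S_1}\cdots\operatorname{ad}^{\sigma_m}_{S_m}(O)$. The innermost commutator acts first; this ordering is what the argument relies on.

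The first step is the elementary support fact. If $X$ is supported on a set $R$, then $[h^\sigma_S,X]=0$ whenever $S\cap R=\emptyset$, because operators on disjoint tensor factors commute. If $S\cap R\neq\emptyset$, then $[h^\sigma_S,X]$ is supported on $S\cup R$. By induction, $X_j$ is either zero or supported on $\operatorname{supp}(O)\cup\bigcup_{i>j}S_i$. Moreover, if $X_j\neq0$, then the vertices $\{O,S_{j+1},\ldots,S_m\}$ form a connected subgraph of the interaction graph. For the inductive step, suppose $X_{j-1}\neq0$. Then $X_j\neq0$, and $S_j$ must meet $\operatorname{supp}(X_j)$, which is contained in $\operatorname{supp}(O)\cup\bigcup_{i>j}S_i$. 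So $S_j$ is adjacent to $O$ or to some $S_i$ with $i>j$, and adjoining it preserves connectivity. This holds for every type $\sigma_j\in\{0,1\}$, since only the support of $h^{\sigma_j}_{S_j}$ is used.

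The second step is to conclude. If $X_0\neq0$, then the set $\{O\}\cup\{S_1,\ldots,S_m\}^\partial$ induces a connected subgraph. Repeated sets are a single vertex, so multiplicities do not matter. This is exactly the statement that the multiset $\{S_1,\ldots,S_m\}$ is completely connected to $O$, and the lemma is the contrapositive.

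I do not expect a real obstacle. The only care needed is the base case, and checking that "completely connected" is defined through the underlying set $\mathbf W^\partial$, so that repeated sets cause no problem. The base case $m=0$ is vacuous, since by the convention $\mathcal G^O_0=\{\emptyset\}$ the empty multiset counts as completely connected.
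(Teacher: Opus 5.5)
Your proof is correct. You track the accumulated support $\operatorname{supp}(O)\cup\bigcup_{i>j}S_i$ from the innermost commutator outward and note that each new $S_j$ must meet it, or coincide with an earlier set, for the commutator to survive; this is the standard argument. The paper gives no proof of its own and imports the lemma from \citet{wild2023}, so your write-up supplies the details in exactly the per-ordering, per-type form that is used in \cref{eq:taylor} and \cref{lem:phiP}.
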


For $m\ge1$, only tuples whose multiset $\mathbf W$ belongs to $\mathcal G_m^O$ contribute to \cref{eq:taylor}; each such multiset has $m!/\mathbf W!$ distinct orderings.

\begin{lemma}[Cluster count; sharpening of Proposition 3.6 of \citet{haah2021}]\label{lem:clusters}
  Consider a graph $(V,\mathcal E)$ of maximum degree at most $\Delta$, for some $\Delta\ge2$. For any vertex $v\in V$ and any $w\ge1$, the number of connected clusters $\mathbf W$ with $v\in\mathbf W$ and $|\mathbf W|=w$ is at most $(1+e(\Delta-1))^{w-1}\le(e\Delta)^{w-1}$.
  In particular, if the interaction graph of $H$ and $O$ has maximum degree at most $\Delta$, for some $\Delta\ge2$, then $|\mathcal G^O_m|\le(1+e(\Delta-1))^m\le(e\Delta)^m$.
\end{lemma}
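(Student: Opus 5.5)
We count connected clusters through an injection into rooted trees with multiplicities, giving a generating-function bound per vertex of $1+e(\Delta-1)$. \textbf{Reduction to sets.} A connected cluster $\mathbf W$ of size $w$ containing $v$ is specified by its underlying connected set $U=\mathbf W^\partial\ni v$ with $|U|=u\le w$, together with multiplicities $\mu(x)\ge1$ summing to $w$. For fixed $U$ there are $\binom{w-1}{u-1}$ such multiplicity assignments. So the count equals
\[
\sum_{u=1}^{w}\binom{w-1}{u-1}\,c_u(v),
\]
where $c_u(v)$ is the number of connected vertex sets of size $u$ containing $v$.

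\textbf{Bounding connected sets.} I would not use the bound $c_u\le(e\Delta)^{u-1}$ directly, since that yields only $(1+e\Delta)^{w-1}$. Instead I use the sharper classical count from spanning trees. Root a spanning tree of $U$ at $v$. The root has at most $\Delta$ children and every other vertex at most $\Delta-1$. Encoding the tree by a BFS exploration in which each vertex chooses a subset of its available neighbours gives
\[
c_u(v)\le\frac{\Delta}{\Delta-1}\binom{(\Delta-1)u+1}{u-1}\frac{1}{(\Delta-1)u+1}\cdot(\text{const})\le\big(e(\Delta-1)\big)^{u-1}.
\]
The first inequality is a Fuss--Catalan bound for $(\Delta-1)$-ary trees. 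The second follows from $\binom{n}{k}\le(en/k)^k$ with care about the root's extra degree, absorbing $\Delta/(\Delta-1)$ against the lower-order terms. This is the main obstacle: the constants have to be tracked so that the root's extra child costs nothing. If the direct estimate fails I would switch to the Galvin--Kahn / Bollobás bound $c_u(v)\le\binom{\Delta u}{u-1}/u\cdot$ and verify $(e(\Delta-1))^{u-1}$ separately for $u=1,2$.

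\textbf{Combining.} The binomial theorem then gives
\[
\sum_{u=1}^{w}\binom{w-1}{u-1}\big(e(\Delta-1)\big)^{u-1}=\big(1+e(\Delta-1)\big)^{w-1}\le(e\Delta)^{w-1},
\]
since $1+e(\Delta-1)\le e\Delta$.

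\textbf{Consequence for $\mathcal G^O_m$.} Each $\mathbf W\in\mathcal G^O_m$ maps injectively to the connected cluster $\mathbf W\cup\{O\}$. This cluster has size $m+1$, contains $O$, and lives in the interaction graph of degree at most $\Delta$. Applying the first statement with $v=O$ and $w=m+1$ gives $|\mathcal G^O_m|\le(1+e(\Delta-1))^m$. The case $m=0$ is consistent with $\mathcal G^O_0=\{\emptyset\}$.
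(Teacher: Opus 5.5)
Your architecture matches the paper's. You split a cluster into its underlying connected set $U\ni v$ and one of $\binom{w-1}{u-1}$ multiplicity assignments, and compare connected sets with rooted subtrees of the infinite $\Delta$-regular tree (root with $\Delta$ children, every other node with $\Delta-1$). You then sum with the binomial theorem and finish with $v=O$, $w=m+1$.

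What is missing is the one step that makes this a sharpening of Haah et al., namely $c_u(v)\le(e(\Delta-1))^{u-1}$. The routes you offer for it do not work as stated. The count your encoding injects into is
\[
D_u=\frac{\Delta}{(\Delta-1)u+1}\binom{(\Delta-1)u+1}{u-1}.
\]
Your display matches this only through the unspecified ``(const)''; without it, your expression is $D_u/(\Delta-1)$, an undercount. Bounding $D_u$ via $\binom nk\le(en/k)^k$, as you suggest, gives
\[
D_u\le(e(\Delta-1))^{u-1}\cdot\frac{\Delta}{(\Delta-1)u+1}\Bigl(1+\frac{\Delta}{(\Delta-1)(u-1)}\Bigr)^{u-1}.
\]
The prefactor here is not at most one. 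It equals $\Delta/(\Delta-1)$ at $u=2$, and for $\Delta=2$ it exceeds one for every $2\le u\le 11$, so the root's extra child is not absorbed.

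Your fallback fails outright. $\binom{\Delta u}{u-1}/u$ counts rooted subtrees of the tree in which every node has $\Delta$ children. Its growth rate is $\Delta\bigl(1+\frac1{\Delta-1}\bigr)^{\Delta-1}>e\sqrt{\Delta(\Delta-1)}>e(\Delta-1)$, using $(1+1/x)^{x+1/2}>e$. So it exceeds $(e(\Delta-1))^{u-1}$ for all large $u$ and every $\Delta\ge2$. For $\Delta=2$ these are the Catalan numbers, already $429>e^6$ at $u=7$, and checking $u=1,2$ cannot repair that.

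The fix is to keep the $1/K$ prefactor rather than discard it. Take $\ell=u-1\ge1$ and $K=(\Delta-1)u+1$, so that $K-1=(\ell+1)(\Delta-1)$. Use the identity $\frac1K\binom K\ell=\frac1\ell\binom{K-1}{\ell-1}$ and the crude bound $\binom{K-1}{\ell-1}\le(K-1)^{\ell-1}/(\ell-1)!$:
\[
D_u=\frac{\Delta}{\ell}\binom{(\ell+1)(\Delta-1)}{\ell-1}\le\frac{\Delta}{\Delta-1}\,\frac{(\ell+1)^{\ell-1}}{\ell!}\,(\Delta-1)^{\ell}\le\big(e(\Delta-1)\big)^{\ell}.
\]
The last step pays for the root's extra child with the spare factor $\ell+1$, via $\frac{\Delta}{\Delta-1}\le2\le\ell+1$, and then uses $(\ell+1)^\ell\le e^\ell\,\ell!$. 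Together with $D_1=1$ and $c_u(v)\le D_u$, this is exactly the bound you need, and the rest of your argument goes through unchanged. This is how the paper proves it.
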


The proof is given in \cref{app:cluster-count}.

\paragraph{Technical cluster expansion lemma}
For Gaussian couplings we will need sums of nested commutators in which some of the labels are forced to coincide.
For $2p\le m$, let $\mathcal P^m_{2p}$ denote the set of partial pairings of $\{1,\ldots,m\}$ consisting of $p$ disjoint unordered pairs, so that
\[
  |\mathcal P^m_{2p}|
  =\binom m{2p}(2p-1)!!
  =\frac{m!}{2^pp!(m-2p)!}.
\]
In particular, $\mathcal P^m_0=\{\emptyset\}$, with $(-1)!!=1$.
A pairing $P$ imposes equality of the corresponding labels through the product
$\prod_{\{i,j\}\in P}\delta_{S_i,S_j}$, whose value is $1$ for $P=\emptyset$.

\begin{lemma}[Nested-commutator sums]\label{lem:phiP}
  Suppose the interaction graph for the sets $S$ and $O$ has maximum degree at most $\Delta$, for some $\Delta\ge2$, and $\|h^\sigma_S\|\le1$ for all $S$ and $\sigma$.
  Then for all $m\ge2p$, $P\in\mathcal P^m_{2p}$ and $\sigma\in\{0,1\}^m$,
  \[
    \begin{aligned}
      &\sum_{S_1,\ldots,S_m}
      \big\|\operatorname{ad}^{\sigma_1}_{S_1}\cdots\operatorname{ad}^{\sigma_m}_{S_m}(O)\big\|
      \prod_{\{i,j\}\in P}\delta_{S_i,S_j} \\
      &\quad\le(m-p)!\,2^m(e\Delta)^{m-p}\|O\|.
    \end{aligned}
  \]
\end{lemma}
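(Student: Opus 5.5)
We bound the norm of each nested commutator crudely by $2^m\|O\|$, using $\|\operatorname{ad}_X(Y)\|\le2\|X\|\|Y\|$ and $\|h^\sigma_S\|\le1$. This factor is independent of the labels and of $\sigma$. The problem then reduces to counting the label tuples $(S_1,\ldots,S_m)$ that satisfy two conditions: they respect the pairing constraints, and the nested commutator does not vanish. By \cref{lem:connected}, a nonvanishing term requires the multiset $\{S_1,\ldots,S_m\}$ to be completely connected to $O$. The required count is therefore at most $(m-p)!\,(e\Delta)^{m-p}$ admissible tuples, and the estimate follows once this is shown.

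For the count, we first eliminate the pairing. Each pair $\{i,j\}\in P$ forces $S_i=S_j$, so an admissible tuple is determined by an assignment of sets to the $m-p$ classes. These classes are the $p$ pairs together with the $m-2p$ unpaired indices. Call this reduced tuple $(T_1,\ldots,T_{m-p})$, one label per class. Since each paired label occurs twice, the underlying set of $\{S_1,\ldots,S_m\}$ coincides with that of $\{T_1,\ldots,T_{m-p}\}$. Connectedness depends only on underlying sets, so the reduced multiset $\{T_1,\ldots,T_{m-p}\}$ must still be completely connected to $O$. The number of admissible tuples is thus at most the number of ordered $(m-p)$-tuples whose multiset lies in $\mathcal G^O_{m-p}$.

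We then bound the ordered tuples by $(m-p)!$ times the number of multisets, since each multiset has at most $(m-p)!$ orderings. This loses a factor $\mathbf W!$ relative to the exact count $(m-p)!/\mathbf W!$, which is harmless in the upper bound. The multisets are counted by \cref{lem:clusters}, applied with the observable vertex $v=O$ and $w=m-p+1$. The multiset $\mathbf W\cup\{O\}$ is connected, contains $O$ and has size $m-p+1$. Distinct $\mathbf W\in\mathcal G^O_{m-p}$ give distinct $\mathbf W\cup\{O\}$, because the observable vertex is distinct from all Hamiltonian vertices. This gives $|\mathcal G^O_{m-p}|\le(e\Delta)^{m-p}$. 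Multiplying the three factors yields $(m-p)!\,2^m(e\Delta)^{m-p}\|O\|$.

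The main point to verify is the reduction step, namely that merging paired labels preserves connectivity to $O$ and that the map from admissible $m$-tuples to reduced $(m-p)$-tuples is injective. Injectivity holds because the $S_i$ are read off from the $T$'s through the fixed pairing $P$. The boundary cases also need care. For $m=2p$ every index is paired. For $m=p=0$ the sum reduces to $\|O\|$, consistent with $\mathcal G^O_0=\{\emptyset\}$.
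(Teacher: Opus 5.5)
Your proof is correct and follows the paper's argument essentially step for step. Both use the uniform bound $2^m\|O\|$, fuse the paired labels into $m-p$ independent ones with the same underlying set, bound the orderings of each multiset by $(m-p)!/\mathbf W!\le(m-p)!$, and invoke \cref{lem:clusters} for $|\mathcal G^O_{m-p}|$. Your explicit check that $\mathbf W\mapsto\mathbf W\cup\{O\}$ is injective just rederives the ``in particular'' clause of \cref{lem:clusters}, which the paper cites directly.
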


\begin{proof}
Due to the Kronecker-delta product, only $m-p$ of the labels are independent: each of the $p$ pairs fuses two labels into one shared value, while the $m-2p$ unpaired labels are free.
Every term satisfies $\|\operatorname{ad}^{\sigma_1}_{S_1}\cdots\operatorname{ad}^{\sigma_m}_{S_m}(O)\|\le2^m\|O\|$, and by \cref{lem:connected} only tuples whose multiset is completely connected to $O$ contribute.
A surviving tuple is determined by the multiset $\mathbf W$ of its $m-p$ independent values, which has the same underlying set as $\{S_1,\ldots,S_m\}$ and is therefore itself completely connected to $O$, i.e.\ $\mathbf W\in\mathcal G^O_{m-p}$; each such $\mathbf W$ arises from $(m-p)!/\mathbf W!\le(m-p)!$ orderings.
Hence the sum is at most $|\mathcal G^O_{m-p}|\,(m-p)!\,2^m\|O\|$, and \cref{lem:clusters} gives the claim. The case $m=p=0$ follows directly from the empty-tuple convention.
\end{proof}

\subsubsection{Analyticity of \texorpdfstring{$f(s)$}{f(s)}}
\label{sec:analyticity}

The cluster expansion bounds $f(s)$ on a region around the real axis. Let
\begin{equation}\label{eq:regions}
  \mathcal R\coloneqq
  \big\{x+iy\in\mathbb C\,\big|\,|y|(1+|x|)\le\tfrac14\big\},
\end{equation}
a region that narrows like $1/|x|$ away from the origin, see \cref{fig:regions}.

\begin{figure}[t]
  \centering
  \begin{tikzpicture}[x=0.1\columnwidth,y=5cm]
    \fill[blue!12] plot[domain=-4:4,samples=101] (\x,{0.25/(1+abs(\x))}) -- plot[domain=4:-4,samples=101] (\x,{-0.25/(1+abs(\x))}) -- cycle;
    \draw[blue,thick] plot[domain=-4:4,samples=101] (\x,{0.25/(1+abs(\x))});
    \draw[blue,thick] plot[domain=-4:4,samples=101] (\x,{-0.25/(1+abs(\x))});
    \draw[->] (-4.4,0) -- (4.4,0) node[right] {$x$};
    \draw[->] (0,-0.33) -- (0,0.38) node[above] {$y$};
    \node[blue] at (1.6,0.07) {$\mathcal R$};
    \draw (0.06,0.25) -- (-0.06,0.25) node[left] {$\tfrac14$};
  \end{tikzpicture}
  \caption{The region $\mathcal R$ (shaded) in the dimensionless coordinates $x+iy$ of \cref{eq:regions}. The corresponding scaled region on which $f(s)$ is bounded is given in \cref{lem:analyticity}.}
  \label{fig:regions}
\end{figure}

\begin{lemma}[Analyticity of $f(s)$]\label{lem:analyticity}
  Let $H$ be of the form \cref{eq:HJ} with $\|h^\sigma_S\|\le1$, and let $\rho(0)$, $O$ and the $J_S$ be as in \cref{thm:classical}. Then $f(s)$ of \cref{eq:Os} is an entire function of $s\in\mathbb C$ with $|f(s)|\le2\|O\|$ whenever $2e\kappa s\in\mathcal R$, where $\kappa=\max\{k,k_O\}$.
\end{lemma}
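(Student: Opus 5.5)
The plan is to establish entireness directly from finite dimensionality. The bound on the region then comes from expanding $f$ in the imaginary part of $s$ around a real base point $x$. Real-time evolution is controlled by unitarity, and the disorder average is handled by Gaussian integration by parts, which is where the factor $(1+|x|)$ enters.

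\emph{Entireness.} For each fixed $J$, $s\mapsto\Tr(\rho(0)e^{iHs}Oe^{-iHs})$ is entire, since $H$ is a finite matrix, and its modulus is at most $\|O\|e^{2|s|\|H\|}$. The norm $\|H\|$ is bounded by a constant plus $\sum_S|J_S|N^{-(k-1)/2}$, a finite sum of absolute Gaussians. Hence $\mathbb E_J e^{2R\|H\|}<\infty$ for every $R$. Dominated convergence, or Morera's theorem together with Fubini, then shows that $f$ is entire. This step needs no $N$-independent bounds.

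\emph{Expansion around the real axis.} For $s=x+iy$, I write $e^{iHs}Oe^{-iHs}=e^{iHx}\big(e^{-y\operatorname{ad}_H}(O)\big)e^{-iHx}$ and expand in $y$:
\[
f(x+iy)=\sum_{m\ge0}\frac{(-y)^m}{m!}\,\mathbb E_J\Big[\Tr\big(\rho_x\,\operatorname{ad}_H^m(O)\big)\Big],\qquad \rho_x=e^{-iHx}\rho(0)e^{iHx}.
\]
Inserting \cref{eq:ad-expansion}, the deterministic labels contribute $N^{-(k-1)}$ each. I then remove the Gaussian factors $J_{S_i}$ one at a time by Gaussian integration by parts, $\mathbb E[J_S G(J)]=\mathbb E[\partial_{J_S}G(J)]$. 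Each $J_{S_i}$ is therefore either paired with another label $S_j$ of the tuple (Wick contraction, giving $\delta_{S_i,S_j}$), or it differentiates the unitaries inside $\rho_x$. By Duhamel's formula, $\partial_{J_S}e^{-iHx}=-i\int_0^x e^{-iH(x-u)}N^{-(k-1)/2}h^1_S e^{-iHu}\,du$, which has norm at most $|x|N^{-(k-1)/2}$. Such a derivative is bounded by unitarity and inserts one extra commutator-type factor $h^1_S$. That factor must overlap the support of the current nested commutator, or else the trace against the product state is left with a vanishing connected piece; I expect this to need an argument analogous to \cref{lem:connected}. In both cases the label $S_i$ is either fused with another label or carries a factor $|x|$ together with the same $N^{-(k-1)/2}$ normalisation.

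\emph{Counting.} For the pairings I use \cref{lem:phiP} with $\Delta=\kappa N^{k-1}$. A term with $d$ deterministic labels, $p$ Wick pairs and $g-2p$ derivative insertions carries the normalisation $N^{-(k-1)(d+g/2)}$ and the prefactor $|x|^{g-2p}$. The $N$-powers combine with $(e\Delta)^{m-p}$ and the extra $N^{-(k-1)/2}$ factors into $(e\kappa)^{m}$ up to constants. The combinatorics $|\mathcal P^m_{2p}|\,(m-p)!/m!$ should be at most $2^m$, which I verify using the identity for $|\mathcal P^m_{2p}|$. Summing over $\sigma\in\{0,1\}^m$ and over $p$, I aim for a bound $\big|\mathbb E\Tr(\rho_x\operatorname{ad}_H^m(O))\big|\le m!\,\big(c\,e\kappa(1+|x|)\big)^m\|O\|$ with an absolute constant $c$. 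Then $|f(x+iy)|\le\|O\|\sum_m\big(2e\kappa|y|(1+|x|)\big)^m\cdot(\text{const})^m$. Choosing constants so that the ratio is at most $1/2$ whenever $2e\kappa s\in\mathcal R$ gives the geometric bound $2\|O\|$.

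\emph{Main obstacle.} I expect the main obstacle to be the integration-by-parts step. The derivatives also hit previously generated Duhamel integrands, so iterated derivatives proliferate. I must show that each Gaussian label is consumed exactly once, costing either a pairing or a single factor $|x|$, without an extra factorial. I would organise this as a recursion on the number of remaining $J$ factors, keeping an explicit list of the inserted $h^1_S$. If the factorial does not cancel exactly, I would first try to absorb it into $(m-p)!$ by enlarging the cluster that \cref{lem:clusters} counts to include the inserted labels.
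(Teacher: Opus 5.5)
You follow the paper's route: the expansion in the imaginary part around real-time evolution, Gaussian integration by parts (each Gaussian label becomes a Wick pairing or a Duhamel insertion), \cref{lem:phiP} with $\Delta=\kappa N^{k-1}$, and cancellation of the powers of $N$. However, the step you flag as the main obstacle is the crux of the proof, and you leave it open.

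The missing observation is that neither $Y_{\boldsymbol S}=\operatorname{ad}^{\sigma_1}_{S_1}\cdots\operatorname{ad}^{\sigma_m}_{S_m}(O)$ nor the inserted operators $N^{-(k-1)/2}h^1_{S_i}$ depend on $J$. Each derivative therefore acts only on one of the real-time evolution segments and splits it in two, so after $j$ derivatives there are $j+1$ segments. The product rule then yields exactly $q!$ terms, one per time ordering of the labelled insertions. Each term is an integral over the ordered simplex $0\le u_1\le\cdots\le u_q\le|x|$, of volume $|x|^q/q!$, whose integrand is bounded uniformly in $J$ by $(2N^{-(k-1)/2})^q\|Y_{\boldsymbol S}\|$. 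The $q!$ thus cancels exactly, giving $(2|x|N^{-(k-1)/2})^q\|Y_{\boldsymbol S}\|$.

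Your fallback cannot rescue a leftover factorial. For $p=0$, $q=m$, an extra $m!$ survives both the $1/m!$ and the $(m-p)!$ of the cluster count, and the majorant series in $y$ diverges for every $y\neq0$. The overlap argument you anticipate is also neither needed nor available. The inserted $h^1_{S_i}$ carries a label that already occurs in the nested commutator, so its sum is already restricted by \cref{lem:connected} applied to $Y_{\boldsymbol S}$. After conjugation by the all-to-all evolution there is no support constraint to exploit; you simply bound everything by $\|Y_{\boldsymbol S}\|$.

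Second, the statement fixes both the scale $2e\kappa$ and the region $\mathcal R$, so there are no constants left to choose, and your counting is too crude for them. Set $X=2e\kappa|x|$ and $Y=2e\kappa|y|$. The ratio of the geometric series that the argument actually delivers is at most $Y+XY+Y^2/2\le 9/32$ on $\mathcal R$, against the $1/2$ needed for $2\|O\|$, so there is little slack. Your separate bounds each cost up to a factor $2$ in the ratio: the sum over $\sigma$ by $2^m$ times the worst type, and $|\mathcal P^m_{2p}|(m-p)!/m!$ by $2^m$. On top of the $2^m$ from commutator norms inside \cref{lem:phiP}, they give a ratio of order $8e\kappa|y|(1+2|x|)$, which on $\mathcal R$ is only bounded by $1$. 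You would therefore prove the lemma only on a shrunken region.

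The paper keeps the combinatorics exact. With $q=g-2p$ and pairings taken within the Gaussian index set $G$ rather than $\{1,\ldots,m\}$, one has $\binom md|\mathcal P^G_{2p}|(m-p)!/m!=\frac{(d+p+q)!}{d!\,p!\,q!}2^{-p}$. The $m$-th term is then at most $\sum_{d+2p+q=m}\frac{(d+p+q)!}{d!\,p!\,q!}Y^d(Y^2/2)^p(XY)^q\|O\|$. Regrouping by $n=d+p+q$, the multinomial theorem sums the whole series to $\|O\|/(1-Y-XY-Y^2/2)\le2\|O\|$.
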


\begin{proof}
\emph{Setup.} Write $s=t+i\tau$ with $t,\tau\in\mathbb R$ and let $\mathcal U_t(Y)\coloneqq e^{iHt}Ye^{-iHt}$ denote the real-time Heisenberg evolution, which is an isometry of the operator norm.
Since $H$ commutes with itself, \cref{eq:hadamard} at imaginary time $i\tau$ gives
\begin{align*}
  e^{iHs}Oe^{-iHs}
  &=\mathcal U_t(e^{-H\tau}Oe^{H\tau})=\sum_{m\ge0}\frac{(-\tau)^m}{m!}
  \mathcal U_t(\operatorname{ad}_H^m(O)),
\end{align*}
with $\operatorname{ad}_H^m(O)$ expanded as in \cref{eq:ad-expansion}.
For every fixed $J$ the left-hand side is a matrix-valued entire function of $s$, bounded on compact sets by $\|O\|e^{2|\tau|\|H\|}$; since $\|H\|\le N+N^{-(k-1)/2}\sum_{|S|=k}|J_S|$ has finite exponential moments, the fixed, disorder-independent inputs imply that $f(s)$ is entire and the expectation may be taken term by term.

\emph{Bound.} Fix types $\sigma\in\{0,1\}^m$ with $d$ deterministic and $g$ Gaussian entries, and let $G\coloneqq\{i\mid\sigma_i=1\}$.
For a tuple $\boldsymbol S=(S_1,\ldots,S_m)$, the contribution of $\sigma$ to the $m$-th term is
\[
  \begin{gathered}
    \frac{(-\tau)^m}{m!\,N^{(k-1)(d+g/2)}}
    \sum_{S_1,\ldots,S_m}
    \mathbb E_J\left[\prod_{i\in G}J_{S_i}\,F_{\boldsymbol S}(J)\right],\\
    F_{\boldsymbol S}(J)\coloneqq
    \Tr\big(\rho(0)\mathcal U_t
    (\operatorname{ad}^{\sigma_1}_{S_1}\cdots\operatorname{ad}^{\sigma_m}_{S_m}(O))\big),
  \end{gathered}
\]
where $F_{\boldsymbol S}$ depends on $J$ only through $\mathcal U_t$.
We evaluate the expectation by Gaussian integration by parts, $\mathbb E[J_SG(J)]=\mathbb E[\partial_SG(J)]$ with $\partial_S\coloneqq\partial/\partial J_S$, applied to each factor $J_{S_i}$, $i\in G$, in turn: every factor is either paired with another factor $J_{S_j}$, which produces $\delta_{S_i,S_j}$, or it differentiates $F_{\boldsymbol S}$.
Summing over all ways to do so,
\begin{equation}\label{eq:ibp}
  \begin{aligned}
    \mathbb E_J\Big[\prod_{i\in G}J_{S_i}\,F_{\boldsymbol S}(J)\Big]
    &=\sum_{p=0}^{\lfloor g/2\rfloor}\sum_{P\in\mathcal P^G_{2p}}
    \Big(\prod_{\{i,j\}\in P}\delta_{S_i,S_j}\Big)\\
    &\qquad\times\mathbb E_J\Big[
    \Big(\prod_{i\in G\setminus\bigcup P}\partial_{S_i}\Big)
    F_{\boldsymbol S}(J)\Big],
  \end{aligned}
\end{equation}
where $\mathcal P^G_{2p}\subseteq\mathcal P^m_{2p}$ denotes the partial pairings of $G$ into $p$ pairs and $\bigcup P$ the set of paired indices, so $i\in G\setminus\bigcup P$ runs over the $q\coloneqq g-2p$ unpaired Gaussian indices; for a $J$-independent function this is Wick's theorem.
Write $Y_{\boldsymbol S}\coloneqq\operatorname{ad}^{\sigma_1}_{S_1}\cdots\operatorname{ad}^{\sigma_m}_{S_m}(O)$.

By \cref{eq:HJ}, $H$ is linear in each $J_S$ with $\partial_SH=N^{-(k-1)/2}h^1_S$, so $\partial_S\partial_{S'}H=0$.
Differentiating $e^{\pm iHt}$ gives Duhamel's formula
\begin{equation}\label{eq:duhamel}
  \partial_S\,\mathcal U_t(Z)
  =i\int_0^t\mathcal U_u\big([\partial_SH,\mathcal U_{t-u}(Z)]\big)\,du
\end{equation}
for $J$-independent $Z$: the derivative inserts one commutator with $\partial_SH$ at an intermediate time $u$ and splits the evolution into two real-time segments.
We apply the $q$ derivatives $\partial_{S_i}$, $i\in G\setminus\bigcup P$, one after the other.
Since neither $Y_{\boldsymbol S}$ nor the inserted commutators depend on $J$, each derivative acts through \cref{eq:duhamel} on one of the evolution segments already present, and after $j$ derivatives there are $j+1$ segments.
The product rule thus produces $q!$ terms, one for each time ordering of the $q$ labelled insertions.
Each term is an integral over the ordered times $0\le u_1\le\cdots\le u_q\le t$ (reversed for $t<0$), a simplex of volume $|t|^q/q!$, whose integrand consists of $q$ commutators with operators $N^{-(k-1)/2}h^1_{S_i}$ interleaved with real-time evolutions.
Real-time conjugation preserves the operator norm and $\|[h^1_A,Z]\|\le2\|Z\|$, so every integrand is bounded by $(2/N^{(k-1)/2})^q\|Y_{\boldsymbol S}\|$ uniformly in $J$.
The $q!$ terms and the simplex volume $|t|^q/q!$ combine to
\[
  \big\|
  \big[\big(\prod_{i\in G\setminus\bigcup P}\partial_{S_i}\big)
  \mathcal U_t\big](Y_{\boldsymbol S})\big\|
  \le\left(\frac{2|t|}{N^{(k-1)/2}}\right)^q
  \|Y_{\boldsymbol S}\|.
\]
If several unpaired indices carry the same set $S$, each derivative is still a separate labelled insertion, so the count $q!$ is unchanged; for $q=0$ the bound is the isometry property itself.

Set $x=2e\kappa|t|$ and $y=2e\kappa|\tau|$.
Using $|\Tr(\rho(0)Z)|\le\|Z\|$, \cref{lem:phiP} with $\Delta=\kappa N^{k-1}\ge2$ and types $\sigma$, the $\binom md$ choices of $\sigma$, and $|\mathcal P^G_{2p}|=g!/(2^pp!(g-2p)!)$, the $m$-th term is therefore bounded in absolute value by
\begin{align*}
  &\sum_{d+g=m}\binom md\frac{|\tau|^m}{m!\,N^{(k-1)(d+g/2)}}
  \sum_{p=0}^{\lfloor g/2\rfloor}
  |\mathcal P^G_{2p}|
  \left(\frac{2|t|}{N^{(k-1)/2}}\right)^{g-2p}\\
  &\qquad\times(m-p)!\,2^m(e\kappa N^{k-1})^{m-p}\|O\|\\
  &\quad\le\sum_{d+2p+q=m}
  \frac{(d+p+q)!}{d!\,p!\,q!}\,
  y^d\,(y^2/2)^p\,(xy)^q\,\|O\|.
\end{align*}
Here the factorials give the multinomial coefficient times $2^{-p}$; together with this $2^{-p}$, the interaction-strength factors are at most $y$ per deterministic index, $y^2/2$ per pair and $xy$ per unpaired Gaussian index, and the powers of $N$ cancel: $(N^{k-1})^{-d-g/2-q/2+m-p}=1$.
On the prescribed region, $y(1+x)\le1/4$, so $y+xy+y^2/2\le1/4+1/32=9/32$.
The nonnegative majorant can therefore be summed exactly: regrouping by $n=d+p+q=m-p$ and applying the multinomial theorem gives
\[
  \begin{aligned}
    &\sum_{m\ge0}\sum_{d+2p+q=m}
      \frac{(d+p+q)!}{d!\,p!\,q!}\,y^d(y^2/2)^p(xy)^q\\
    &\quad=\sum_{n\ge0}(y+xy+y^2/2)^n
      =\frac{1}{1-y-xy-y^2/2}.
  \end{aligned}
\]
Thus $|f(t+i\tau)|\le\|O\|/(1-y-xy-y^2/2)\le2\|O\|$.
Without deterministic terms ($d=0$) this is the binomial theorem; the deterministic terms only add $y$ to the ratio of the geometric series, which the $1$ in the condition $y(1+x)\le1/4$ defining $\mathcal R$ accommodates.
\end{proof}

\subsubsection{Analytic Continuation \& Truncated Series Evaluation}
\label{sec:algorithm}

For the continuation argument, let $f$ denote an arbitrary entire function bounded on $\gamma^{-1}\mathcal R$, with a generic scale $\gamma>0$.
For $\gamma t$ small, $f(t)$ can be computed from the truncated Maclaurin series of $f$ itself, since the disc of radius $\gamma^{-1}/5$ lies inside this region; this is analogous to the short-time algorithm of \citet{wild2023}.
To reach arbitrary $t>0$ we follow their Theorem 6 and compose $f$ with an analytic map $\phi$ satisfying $\phi(0)=0$ and $\phi(1)=1$, such that $t\phi$ maps a disc $\mathcal D_R=\{z\in\mathbb C\mid|z|\le R\}$ of radius $R>1$ into the region on which $f$ is bounded: the Maclaurin series of $g(z)=f(t\phi(z))$ then converges at $z=1$ at a rate set by $R$.

\begin{lemma}[Analytic continuation]\label{lem:continuation}
  Let $f:\mathbb C\to\mathbb C$ be entire and $\gamma,c>0$, and suppose that $|f(s)|\le c$ whenever $\gamma s\in\mathcal R$.
  For $t>0$ and $\epsilon\in(0,1)$, define
  \[
    \alpha=\exp(2\pi\gamma t(1+2\gamma t)),\qquad
    \phi(z)=-\frac{\log(1-(1-\alpha^{-1})z)}{\log\alpha},
  \]
  using the principal branch of the logarithm.
  Then $g(z)\coloneqq f(t\phi(z))$ satisfies
  \[
    \left|f(t)-\sum_{j=0}^{M}\frac{g^{(j)}(0)}{j!}\right|
    \le\epsilon c,\qquad
    M=\left\lceil(\alpha+1)\log\frac{\alpha}{\epsilon}\right\rceil.
  \]
  The map has $\phi(0)=0$, $\phi(1)=1$, and explicit Taylor coefficients
  \[
    \phi(z)=\sum_{n\ge1}\phi_nz^n,\qquad
    \phi_n=\frac{(1-\alpha^{-1})^n}{n\log\alpha},
    \qquad0<\phi_n\le1.
  \]
\end{lemma}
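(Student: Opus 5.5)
The plan follows Theorem 6 of \citet{wild2023}: show that $t\phi$ maps a disc of radius $R>1$ into $\gamma^{-1}\mathcal R$, then bound the Taylor tail of $g$ at $z=1$ by Cauchy estimates.

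\emph{Step 1: elementary properties of $\phi$.} Put $\beta=1-\alpha^{-1}\in(0,1)$. Expanding $-\log(1-\beta z)=\sum_{n\ge1}\beta^nz^n/n$ gives the stated $\phi_n$. Since $\alpha>1$, we have $\log\alpha>0$, hence $\phi_n>0$. The bound $\phi_n\le\beta^n/\log\alpha\le1$ requires $\log\alpha\ge\beta$, i.e.\ $\log\alpha\ge1-\alpha^{-1}$, which holds for all $\alpha\ge1$. The values $\phi(0)=0$ and $\phi(1)=-\log(\alpha^{-1})/\log\alpha=1$ follow directly. Because $\phi$ is analytic for $|z|<1/\beta$, the function $g$ is analytic on any disc $\mathcal D_R$ with $1<R<1/\beta$.

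\emph{Step 2: the disc maps into the bounded region.} This is the main obstacle. I would choose a radius such as $R=1+1/\alpha$, or more generally $R$ with $\beta R<1$ and $R-1\gtrsim1/(\alpha+1)$. For $|z|\le R$, write $w=1-\beta z$, so that $\Re w\ge1-\beta R>0$. Then $t\phi(z)=-t\log w/\log\alpha$ and, on the principal branch,
\[
  \Im(t\phi)=-\frac{t\arg w}{\log\alpha},\qquad |\arg w|<\frac\pi2 .
\]
This gives $|\Im(t\phi)|\le \pi t/(2\log\alpha)$. For the real part, $|w|\ge1-\beta R$ gives $\Re(t\phi)\le t\log(1/(1-\beta R))/\log\alpha$, and $|w|\le1+\beta R\le 3$ gives $\Re(t\phi)\ge -t\log 3/\log\alpha$. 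With $\log\alpha=2\pi\gamma t(1+2\gamma t)$ and $1-\beta R$ of order $\alpha^{-1}$, we get $\log(1/(1-\beta R))\approx\log\alpha+O(1)$. Hence $\gamma|\Re(t\phi)|\lesssim\gamma t+O(1)$ and $\gamma|\Im(t\phi)|\le 1/(4(1+2\gamma t))$. The condition $|y|(1+|x|)\le1/4$ then needs $(1+|x|)\le(1+2\gamma t)$ up to the $O(1)$ slack. The factor $(1+2\gamma t)$ in $\alpha$ is there exactly to absorb this. I expect the constants to require care, and I would tune $R$ (e.g.\ $R=(1+\alpha^{-1}\cdot\text{const})/\beta$) until the inequality closes.

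\emph{Step 3: tail bound.} By Step 2, $|g|\le c$ on $\mathcal D_R$. Cauchy's estimate then gives $|g^{(j)}(0)/j!|\le cR^{-j}$. Since $1<R$ and $g$ is analytic on $\mathcal D_R$, the Taylor series converges at $z=1$ to $g(1)=f(t)$. The tail therefore satisfies
\[
  \Big|f(t)-\sum_{j\le M}\frac{g^{(j)}(0)}{j!}\Big|\le\frac{cR^{-M-1}}{1-R^{-1}} .
\]
With $R-1\ge1/\alpha$ we have $\log R\ge1/(\alpha+1)$ and $1/(1-R^{-1})\le\alpha+1$. So the tail is at most $c(\alpha+1)e^{-M/(\alpha+1)}$, which is at most $\epsilon c$ once $M\ge(\alpha+1)\log((\alpha+1)/\epsilon)$. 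The stated $M=\lceil(\alpha+1)\log(\alpha/\epsilon)\rceil$ matches this up to adjusting $R$ slightly, e.g.\ absorbing the $+1$ via $R-1=1/\alpha$ exactly and using $1/(1-R^{-1})=\alpha+1$ together with $R^{-1}$ from the $M+1$ power.
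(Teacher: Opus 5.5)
Your strategy is the paper's: map $\mathcal D_R$ with $R=1+\alpha^{-1}$ into $\gamma^{-1}\mathcal R$ by $t\phi$, then use Cauchy estimates. Steps 1 and 3 are fine; keeping the factor $R^{-1}$, the tail is $cR^{-M}/(R-1)=c\alpha R^{-M}\le c\alpha e^{-M/(\alpha+1)}\le\epsilon c$.

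The gap is Step 2. It is the actual content of the lemma, you leave it open, and the estimates you sketch cannot close it.

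First, for $R=1+\alpha^{-1}$ one has $1-\beta R=\alpha^{-2}$, not something of order $\alpha^{-1}$. So the upper bound on $\operatorname{Re}(\gamma t\phi)$ is $2\gamma t$, not $\gamma t+\mathcal O(1)$.

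Second, and decisively, there is no $\mathcal O(1)$ slack. Since $|\operatorname{Im}(\gamma t\phi)|\le\pi\gamma t/(2\log\alpha)=1/(4(1+2\gamma t))$, the condition $|y|(1+|x|)\le1/4$ needs $|\operatorname{Re}(\gamma t\phi)|\le2\gamma t$ exactly. Your bound $|w|\le3$ only gives $-\operatorname{Re}(\gamma t\phi)\le\log3/(2\pi(1+2\gamma t))$, which stays near $0.17$ as $\gamma t\to0$. At $\gamma t=0.01$, for example, the product of your two bounds is about $0.29>1/4$.

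Third, your proposed retuning $R=(1+\mathrm{const}\cdot\alpha^{-1})/\beta$ is not admissible. It gives $\beta R>1$, so the branch point $z=1/\beta$ of $\phi$ lies inside $\mathcal D_R$.

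The missing observation is that for $|z|\le1+\alpha^{-1}$ the argument $w=1-\beta z$ lies in the annulus
\[
  \alpha^{-2}=1-\beta R\le|w|\le1+\beta R=2-\alpha^{-2}\le\alpha^2,
\]
where the last step is $\alpha^2+\alpha^{-2}\ge2$. Hence $|\log|w||\le2\log\alpha$, i.e.\ $|\operatorname{Re}\phi|\le2$ uniformly in $\gamma t$, including the limit $\alpha\to1$. Separately, $\operatorname{Re}w\ge\alpha^{-2}>0$ gives $|\operatorname{Im}\phi|\le\pi/(2\log\alpha)$.

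Then $|\operatorname{Re}(\gamma t\phi)|\le2\gamma t$, and
\[
  |\operatorname{Im}(\gamma t\phi)|\,(1+|\operatorname{Re}(\gamma t\phi)|)\le\frac{1+2\gamma t}{4(1+2\gamma t)}=\frac14 .
\]
The factor $(1+2\gamma t)$ in $\log\alpha$ is there precisely to match the $2$ coming from $|w|\in[\alpha^{-2},\alpha^2]$.
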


\begin{proof}
We use the logarithmic map from the proof of Theorem 6 of \citet{wild2023}, written here in terms of $\alpha$.
Their parametrization is $\phi(z)=\log(1-z/r)/\log(1-1/r)$; setting $r=\alpha/(\alpha-1)$ gives the formula in the statement.

Set $R=1+\alpha^{-1}>1$.
For $|z|\le R$, the argument of the logarithm obeys
\[
  \begin{gathered}
    \operatorname{Re}(1-(1-\alpha^{-1})z)
    \ge\alpha^{-2}>0,\\
    \alpha^{-2}\le\big|1-(1-\alpha^{-1})z\big|
    \le2-\alpha^{-2}\le\alpha^2.
  \end{gathered}
\]
The last inequality follows from $\alpha^2+\alpha^{-2}\ge2$.
Thus $\phi$ and $g$ are analytic on a neighbourhood of $\mathcal D_R$, with $g(1)=f(t)$.
The modulus bounds and the fact that the logarithm's argument lies in the right half-plane give, respectively,
\begin{align}
  |\operatorname{Re}\phi(z)|&\le2,\label{eq:phi-re}\\
  |\operatorname{Im}\phi(z)|&\le\frac{\pi}{2\log\alpha}.\label{eq:phi-im}
\end{align}
Our choice $\log\alpha=2\pi\gamma t(1+2\gamma t)$ therefore gives
\[
  |\operatorname{Re}(\gamma t\phi(z))|\le2\gamma t,\qquad
  |\operatorname{Im}(\gamma t\phi(z))|
  \le\frac1{4(1+2\gamma t)}.
\]
Their combination implies
\[
  |\operatorname{Im}(\gamma t\phi(z))|
  (1+|\operatorname{Re}(\gamma t\phi(z))|)\le1/4,
\]
so $\gamma t\phi(z)\in\mathcal R$ and $|g(z)|\le c$ throughout $\mathcal D_R$.

The standard Cauchy estimate bounds the Taylor tail of $g$ at $z=1$ by $c/(R^M(R-1))$.
Since $R-1=\alpha^{-1}$ and $\log R=\log(1+\alpha^{-1})\ge1/(\alpha+1)$, our choice of $M$ gives
\[
  \begin{aligned}
    \left|f(t)-\sum_{j=0}^{M}\frac{g^{(j)}(0)}{j!}\right|
    \le\frac{\alpha c}{R^M}
    \le\alpha c\exp\!\left(-\frac{M}{\alpha+1}\right)
    \le\epsilon c.
  \end{aligned}
\]
Finally, expanding the logarithm gives the stated coefficients of $\phi$.
The inequality $\log\alpha\ge1-\alpha^{-1}$ implies
$0<\phi_n\le(1-\alpha^{-1})^{n-1}/n\le1$.
\end{proof}

The truncated series of \cref{lem:continuation} is a finite sum of explicit terms; we now bound the time and space needed to enumerate and accumulate them.

\begin{lemma}[Evaluating the truncated series]\label{lem:eval}
  In the setting of \cref{thm:classical}, let $f(s)$ be as in \cref{eq:Os}.
  Let $\phi(z)=\sum_{n\ge1}\phi_nz^n$ with $|\phi_n|\le1$, the $\phi_n$ available as fixed-point numbers, and let $t,M\le\mathrm{poly}(N)$.
  Then the order-$M$ Maclaurin truncation of $g(z)=f(t\phi(z))$ at $z=1$ can be computed to absolute precision $\delta\|O\|$ in time $(\chi N)^{\mathcal O(kM+k_O)}\mathrm{polylog}(1/\delta)$ and, simultaneously, in workspace $\mathcal O((kM+k_O)\log(\chi N)+\log(1/\delta))$ bits.
\end{lemma}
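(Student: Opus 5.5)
The plan is to expand the truncated series into a finite sum of explicit terms and accumulate them one at a time, storing only loop counters and a running sum.

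\emph{Step 1: reduce to the coefficients $a_m$.} Substitute $s=t\phi(z)$ into $f(s)=\sum_m a_ms^m$ of \cref{eq:taylor}. Since $\phi_0=0$, the coefficient of $z^j$ only involves $m\le j$, so the order-$M$ truncation at $z=1$ is
\[
  \sum_{m=0}^{M}a_m t^m C_m,\qquad C_m=\sum_{j=m}^{M}\ \sum_{\substack{n_1+\cdots+n_m=j\\ n_i\ge1}}\ \prod_{i=1}^m\phi_{n_i}.
\]
Each $C_m$ can be computed by enumerating compositions, in $\mathcal O(M\log M)$ bits and $2^{\mathcal O(M)}$ time. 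Alternatively it can be computed by a dynamic program over $(m,j)$ that is recomputed rather than stored. From $|\phi_n|\le1$ we get $|C_m|\le 2^M$.

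Next I expand $a_m$ as in \cref{eq:taylor}, as a sum over types $\sigma\in\{0,1\}^m$ and ordered tuples $(S_1,\ldots,S_m)$ of $k$-sets. There are at most $2^mN^{km}$ such index tuples, and they are enumerated by a counter of $\mathcal O(kM\log N)$ bits. The weight $w_\sigma$ is explicit by Wick's theorem. The expectation $\mathbb E[\prod_{i:\sigma_i=1}J_{S_i}]$ is $\prod_S(\mu_S-1)!!$ over the distinct sets $S$ among the Gaussian labels, and it vanishes if some multiplicity $\mu_S$ is odd. I compute this by scanning, for each $i$, the other indices to count multiplicities, which needs only $\mathcal O(\log M)$ extra bits. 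The prefactor $N^{-(k-1)(d+g/2)}$ is a known rational.

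\emph{Step 2: evaluate one trace in small space.} Write the nested commutator as $\operatorname{ad}^{\sigma_1}_{S_1}\cdots\operatorname{ad}^{\sigma_m}_{S_m}(O)=\sum_{\pm}\pm\,X_L\cdots X_1\,O\,Y_1\cdots Y_R$ over $2^m$ left/right choices, where each $X_i,Y_i$ is some $h^{\sigma}_{S}$. The trace against the product state $\rho(0)$ only involves the union of supports $U=\operatorname{supp}O\cup S_1\cup\cdots\cup S_m$, with $|U|\le kM+k_O$. Outside $U$ the partial trace contributes a factor $1$.

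I do not store a $\chi^{|U|}$-dimensional basis string together with all intermediate sums, because nested recursion would cost $\mathcal O(M)$ strings and hence $\mathcal O(M(kM+k_O)\log\chi)$ space. Instead, the matrix element is written as a sum over paths of basis strings $x_0,x_1,\ldots,x_{2m+1}$ on $U$. Consecutive strings differ only on the support of the operator applied at that step, which has at most $k$ sites (or $k_O$ for $O$).

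A path is therefore specified by the initial string $x_0$, using $|U|\log\chi$ bits, plus the local new values at each step, using $\mathcal O(mk\log\chi+k_O\log\chi)$ bits. Its weight is a product of single entries of $h_S$, $O$ and the $\rho_i$, each read on demand from the input. The closing entry $\langle x_{\mathrm{end}}|\rho(0)|x_0\rangle$ is a product of single-site entries of the $\rho_i$.

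This gives workspace $\mathcal O((kM+k_O)\log(\chi N))$ and time $\chi^{\mathcal O(kM+k_O)}$ per tuple. Multiplied by the $N^{\mathcal O(kM)}$ tuples, $2^{\mathcal O(M)}$ type and sign choices, and the $M$ values of $m$, the total time is $(\chi N)^{\mathcal O(kM+k_O)}$.

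\emph{Step 3: precision (expected main obstacle).} All numbers must be held on $\mathcal O((kM+k_O)\log(\chi N)+\log(1/\delta))$ bits, while the total number of terms is $T=(\chi N)^{\mathcal O(kM+k_O)}$. Each term has modulus at most
\[
  \mathrm{poly}(t)^M\,2^{\mathcal O(M)}\,M!!\,N^{\mathcal O(kM)}\,\|O\|,
\]
which is $(\chi N)^{\mathcal O(kM)}\|O\|$ because $t,M\le\mathrm{poly}(N)$. I therefore compute every factor (the input entries, $\phi_n$, $t^m$, the Wick weights) to relative precision $\delta/(T\cdot(\chi N)^{\mathcal O(kM)})$. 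Each product has at most $\mathcal O(M)$ factors, so the relative errors add up to at most $\mathcal O(M)$ times this precision. Summing the absolute errors over all $T$ terms then gives total error $\delta\|O\|$.

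The required bit length is $\log(1/\delta)+\mathcal O((kM+k_O)\log(\chi N))$, which matches the stated workspace. The factor $\mathrm{polylog}(1/\delta)$ in the time comes from doing arithmetic at this bit length. Beyond this bookkeeping, the main care point is to check that the running accumulator never needs more bits than this bound.
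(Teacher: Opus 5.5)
Your proposal is correct and takes essentially the same route as the paper. Both substitute $s=t\phi(z)$ and swap sums, evaluate the Wick weights combinatorially, expand the nested commutator into $2^m$ signed ordered products, and brute-force enumerate an explicit sum of $(\chi N)^{\mathcal O(kM+k_O)}$ bounded terms into a single fixed-point accumulator. Your basis-string paths on $U$ are just the consistent choices among the paper's matrix-unit components, whose site-wise products factorise the trace against $\rho(0)$. One small correction: the entries of the $h^\sigma_S$ in \cref{eq:HJ} are not read directly from the input but computed on demand from the original $h_T$ via \cref{eq:absorb} and the rescaling, at a cost of only a $2^{\mathcal O(k)}$ factor per entry.
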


The proof is given in \cref{sec:proofs}.

\begin{proof}[Proof of \cref{thm:classical}]
By \cref{eq:absorb}, after replacing $H$ by $H/((k+1)^k\mathcal J)$ and $t$ by $t'\coloneqq(k+1)^k\mathcal Jt$, $H$ is of the form \cref{eq:HJ} with $\|h^\sigma_S\|\le1$.
By \cref{lem:analyticity}, $f(s)$ is entire and bounded by $2\|O\|$ whenever $2e\kappa s\in\mathcal R$.
For $t'>0$, applying \cref{lem:continuation} at time $t'$ with $\gamma=2e\kappa$ and accuracy parameter $\epsilon/4$ gives truncation error at most $\epsilon\|O\|/2$ at order $M=\lceil(\alpha+1)\log(4\alpha/\epsilon)\rceil$.
Since $\log\alpha=2\pi\gamma t'(1+2\gamma t')=\mathcal O(1+\kappa^2t'^2)$ with absolute constants, this is $M=e^{\mathcal O(1+\kappa^2t'^2)}\log(2/\epsilon)$, as claimed, and since $M\ge\alpha\ge e^{4\pi\gamma^2t'^2}$, $M\le\mathrm{poly}(N)$ also gives $t'\le\mathrm{poly}(N)$.
\Cref{lem:eval} computes this truncation with an additional error at most $\epsilon\|O\|/2$.
Since $\log(1/\epsilon)=\mathcal O(M)$, its time and space bounds reduce to those claimed.
\end{proof}

\subsection{Reduction of Disorder Ensemble for \texorpdfstring{$r=\mathcal{O}(1)$}{r = O(1)}}
\label{sec:reduction-fixed-rank}
The ensemble-averaged dynamics estimation problem, \cref{prob:averaged-dynamics}, for $H_J$ with $r=\mathcal{O}(1)$ can always be reduced to the dynamics estimation problem without disorder of \cref{eq:task-fixed}, i.e.\ with an $H_{J=0}$, on an enlarged system.
The core idea consists of replacing the classical random disorder parameters, i.e.\ the Rademacher-distributed $v_{i\ell}$, with Pauli operators on newly introduced ancilla qubits initialised in a product state.

Each system qubit together with its $r$ ancillas forms a qudit of local dimension $\chi=2^{r+1}$.
On qudit $i$, $\sigma_i^\alpha$ acts on the system qubit only, and $Z_{i\ell}=\sigma_i^z\otimes\sigma^z_{i\ell}$ acts as $\sigma^z$ on the system qubit and on the $\ell$-th ancilla.
We write $\overline\rho(t)=\mathbb E_J[\rho(t)]$ for the disorder-averaged state of \cref{eq:task-averaged}.

\begin{restatedlemma}{thm:informal-reduction}[Exact reduction of fixed-rank disorder]
\label{lem:reduction}
Let $H^\prime_{J=0}$ be obtained from $H_J$ by replacing every $v_{i\ell}$ with $\sigma^z_{i\ell}$, let $\rho^\prime(0)=\rho(0)\otimes(\ket+\bra+)^{\otimes rN}$ and $\rho^\prime(t)=e^{-iH^\prime_{J=0}t}\rho^\prime(0)e^{iH^\prime_{J=0}t}$.
Then $\tr_{\mathrm{anc}}\rho^\prime(t)=\overline\rho(t)$ for all $t\ge0$, and hence $\overline F_O(t)=\tr\big((O\otimes I)\rho^\prime(t)\big)$ for every observable $O$ on the system qubits.
Moreover, on the $N$ qudits, $H^\prime_{J=0}$ is non-disordered, $k$-local and permutation invariant, and if $\rho(0)$ is a product state or a symmetric product state, then so is $\rho^\prime(0)$.
\end{restatedlemma}

\begin{proof}
For a sign configuration $v\in\{\pm1\}^{N\times r}$, let $H_v$ be $H_J$ with these signs and $\ket v$ the ancilla basis state with $\sigma^z_{i\ell}\ket v=v_{i\ell}\ket v$.
As the $\sigma^z_{i\ell}$ commute with all system operators, $H^\prime_{J=0}=\sum_vH_v\otimes\ket v\bra v$, and with $\ket+^{\otimes rN}=2^{-rN/2}\sum_v\ket v$,
\begin{equation*}
  \rho^\prime(t)=2^{-rN}\sum_{v,v^\prime}e^{-iH_vt}\rho(0)e^{iH_{v^\prime}t}\otimes\ket v\bra{v^\prime}.
\end{equation*}
The partial trace keeps the blocks $v=v^\prime$, i.e.\ the uniform average of $e^{-iH_vt}\rho(0)e^{iH_vt}$ over $v$, which is $\overline\rho(t)$.
Replacing the signs in a term of \cref{eq:hamiltonian} gives $\prod_{j=1}^k\sigma^z_{i_j}\sigma^z_{i_j\ell}=\prod_{j=1}^kZ_{i_j\ell}$, which acts on the qudits $i_1,\ldots,i_k$ with a site-independent coefficient, while $Nf(m_x,m_y,m_z)$ acts trivially on the ancillas; this gives $k$-locality and permutation invariance.
The statement on $\rho^\prime(0)$ is immediate.
\end{proof}

To apply the results of \cref{sec:compressed-simulation,sec:classical-sampling,sec:cluster-method}, we write $H^\prime_{J=0}$ in collective operators.
Each pattern $\ell$ contributes the elementary symmetric polynomial $e_k=\sum_{i_1<\cdots<i_k}Z_{i_1\ell}\cdots Z_{i_k\ell}$ of the commuting operators $Z_{i\ell}$.
Newton's identities express it through the power sums $p_j=\sum_iZ_{i\ell}^j$,
\begin{equation*}
  ke_k=\sum_{j=1}^k(-1)^{j-1}e_{k-j}\,p_j,\qquad e_0=I,
\end{equation*}
e.g.\ $2e_2=p_1^2-p_2$, and since $Z_{i\ell}^2=I$, these reduce to $p_j=\sum_iZ_{i\ell}=Nm(Z_\ell)$ for odd $j$ and $p_j=NI$ for even $j$.
Thus $e_k=N^kE_k(m(Z_\ell))$ with a polynomial $E_k$ of degree $k$ and, by induction, coefficient norm $\|E_k\|_1\le1$; for instance, $E_2(m)=\frac12(m^2-1/N)$.
The algorithms below compute the coefficients of $E_k$ themselves, and can do so exactly in small space.
Written as a polynomial in $Nm(Z_\ell)=\sum_iZ_{i\ell}$, the recursion only multiplies by $Nm(Z_\ell)$ or by the integer $N$ and divides by $k$, so by induction $k!\,e_k$ has integer coefficients, which are at most $k!N^k$ in modulus since $\|E_k\|_1\le1$.
Every coefficient of $E_0,\ldots,E_k$ is thus a ratio of integers of $\mathcal O(k\log N)$ bits, and running the recursion, which keeps all $\mathcal O(k^2)$ of them, takes polynomial time and $\mathcal O(k^3\log N)$ space, i.e.\ $\mathcal O(\log N)$ for $k$ independent of $N$.
Hence
\begin{equation}
  H^\prime_{J=0}=N\Big(f(m_x,m_y,m_z)-\frac J{\sqrt r}\sum_{\ell=1}^rE_k\big(m(Z_\ell)\big)\Big),
  \label{eq:fixed-rank-dilated}
\end{equation}
with $m_\alpha=m(\sigma^\alpha)$, is of the form \cref{eq:compression-normalized-family} with the unit-norm operators $\sigma^x_i,\sigma^y_i,\sigma^z_i,Z_{i1},\ldots,Z_{ir}$ on each qudit and coefficient norm at most
\begin{equation}
  \Lambda\coloneqq\|f\|_1+|J|\sqrt r.
  \label{eq:fixed-rank-coefficient-norm}
\end{equation}

The proof of \cref{lem:reduction} only uses that $H^\prime_{J=0}$ is obtained by replacing the signs $v_{i\ell}$ with $\sigma^z_{i\ell}$, not the values of the coefficients.
It therefore applies verbatim to the quantum Hopfield models \cref{eq:models-hopfield} with arbitrary real weights $\mu_\ell$, giving coefficient norm $\Lambda=|B|+\sum_\ell|\mu_\ell|$.

We now state the formal versions of \cref{thm:hopfield-quantum,thm:hopfield-sampling,thm:hopfield-classical} for $H_J$, with $k$ and $k_O$ independent of $N$.

\begin{restatedtheorem}{thm:hopfield-quantum}[Quantum simulation of fixed-rank disordered dynamics]
\label{thm:hopfield-quantum-formal}
Let $\Lambda\le\mathrm{poly}(N)$, and let $\rho(0)$, $t$, $\epsilon$ and $O$ be as in \cref{thm:compression-dynamics}.
Then a quantum algorithm solves \cref{prob:averaged-dynamics} for $H_J$ using, simultaneously, $\mathrm{poly}(t,1/\epsilon)\,N^{\mathcal O(2^r)}$ time, $\mathcal O(2^r\log N)$ qubits and $\mathcal O(2^r\log N+\log(t+1/\epsilon))$ bits of classical workspace.
\end{restatedtheorem}

\begin{restatedtheorem}{thm:hopfield-sampling}[Classical sampling of fixed-rank disordered dynamics at short times]
\label{thm:hopfield-sampling-formal}
Let $r$ and $\Lambda$ be independent of $N$, $t=\mathcal O(\log N/N)$, $\epsilon^{-1}\le\mathrm{poly}(N)$ and $\rho(0)=(\ket\phi\bra\phi)^{\otimes N}$ with single-site amplitudes given to inverse-polynomial precision.
Then a classical algorithm using random bits outputs, bit by bit, a string whose distribution is within total variation distance $\epsilon$ of the computational-basis measurement distribution $\overline P(x)=\bra x\overline\rho(t)\ket x$, in time $N^{\mathcal O(2^r)}$ and, simultaneously, $\mathcal O(2^r\log N)$ workspace.
\end{restatedtheorem}

\begin{restatedtheorem}{thm:hopfield-classical}[Classical simulation of fixed-rank disordered dynamics at constant times]
\label{thm:hopfield-classical-formal}
Let $\Lambda$ be independent of $N$, and let $\rho(0)=\bigotimes_{i=1}^N\rho_i$ be a product state and $O$ an observable supported on at most $k_O$ qubits.
Then $\overline F_O(t)$ for $H_J$ can be computed to accuracy $\epsilon\|O\|$ by a deterministic classical algorithm in time $(2^rN)^{\mathcal O(M)}$ and, simultaneously, workspace $\mathcal O(M(r+\log N))$, with $M=e^{\mathcal O(1+\Lambda^2t^2)}\log(2/\epsilon)$, under the conditions of \cref{thm:classical}.
In particular, for $r,t=\mathcal O(1)$ and $\epsilon=\Theta(1)$, this is $\mathrm{poly}(N)$ time and $\mathcal O(\log N)$ workspace.
\end{restatedtheorem}

\begin{proof}[Proof of \cref{thm:hopfield-quantum-formal,thm:hopfield-sampling-formal,thm:hopfield-classical-formal}]
With \cref{lem:reduction},
we apply \cref{thm:compression-dynamics,thm:classical-sampling,thm:classical}, respectively, and substituting $\chi=2^{r+1}$ in their bounds gives the stated ones.
Their input assumptions hold as follows.
By \cref{eq:fixed-rank-dilated}, $H^\prime_{J=0}=Nf^\prime$ is supplied as in \cref{eq:compression-normalized-family} with $\|f^\prime\|_1\le\Lambda$, every entry and coefficient being computed in $\mathcal O(r+\log N)$ space whenever it is read; $O\otimes I$ is supported on the qudits of the at most $k_O$ sites of $O$, and its matrix entries are those of $O$ times Kronecker deltas in the ancilla indices.
For $\rho(0)=(\ket\phi\bra\phi)^{\otimes N}$, $\rho^\prime(0)$ is the symmetric product state of $\ket{\phi^\prime}=\ket\phi\otimes\ket+^{\otimes r}$, whose amplitudes are those of $\ket\phi$ times $2^{-r/2}$; its compressed amplitudes $M_N(\mathbf n)^{1/2}\prod_b(\phi^\prime_b)^{n_b}$, cf.\ \cref{eq:generalised-dicke-states}, are computed to inverse-polynomial precision in polynomial time and $\mathcal O(\chi\log N)$ space.
For \cref{thm:classical}, $H^\prime_{J=0}$ is, up to a constant, of the form \cref{eq:classical-hamiltonian} with $h'_T=0$ and $\|h_T\|\le k^k\Lambda$, as shown after \cref{thm:classical}, so \cref{thm:classical} with $\mathcal J=k^k\Lambda$ gives the stated $M$; the matrix elements of $\rho_i\otimes(\ket+\bra+)^{\otimes r}$ are those of $\rho_i$ times $2^{-r}$.

For sampling, a computational-basis measurement of $\rho^\prime(t)$ yields the system bits $x$ and the ancilla bits, and by \cref{lem:reduction} the marginal of $x$ is $\overline P(x)$.
The sampler of \cref{thm:classical-sampling} outputs the qudit symbols one at a time, so the ancilla bits are dropped as they are produced; as marginalisation does not increase the total variation distance, the system bits are sampled to accuracy $\epsilon$.
\end{proof}

\subsection{Disorder Removal through Gauge Transformation for \texorpdfstring{$H_J$}{H\_J} at \texorpdfstring{$r=1$}{r = 1}}
\label{app:rank-one-gauge}
We consider the case of $r=1$ of the Hopfield model of \cref{eq:models-hopfield}, and discuss how the model can be mapped to a non-disordered Lipkin-Meshkov-Glick (LMG) model through a gauge transformation.

Let $X_v=\prod_i(\sigma_i^x)^{(1-v_{i1})/2}$, which flips exactly the sites with $v_{i1}=-1$. Then $X_v\sigma_i^zX_v=v_{i1}\sigma_i^z$ and $X_v\sigma_i^xX_v=\sigma_i^x$, hence
\begin{align}
	H_{\mathrm{Hop}}  =\frac{\mu_1}{N}\Big(\sum_iv_{i1}\sigma_i^z\Big)^2+BNm_x=X_vH_{\mathrm{LMG}}X_v
	\label{eq:r1-gauge}
\end{align}
with 
	$H_{\mathrm{LMG}}  =N\big(\mu_1m_z^2+Bm_x\big)$.
The disorder thus only conjugates the clean Lipkin--Meshkov--Glick dynamics, $e^{-itH_{\mathrm{Hop}}}=X_ve^{-itH_{\mathrm{LMG}}}X_v$, and its effect depends on the initial state and observable.

Since $X_v\ket{+^N}=\ket{+^N}$ and $X_vm_xX_v=m_x$,
\begin{align}
	 & \bra{+^N}e^{itH_{\mathrm{Hop}}}m_xe^{-itH_{\mathrm{Hop}}}\ket{+^N}
	\nonumber                                                              \\
	 & \quad=\bra{+^N}e^{itH_{\mathrm{LMG}}}m_xe^{-itH_{\mathrm{LMG}}}\ket{+^N}
	\label{eq:r1-gauge-mx}
\end{align}
for every pattern, i.e.\ exactly the clean dynamics.
In contrast, $X_v\ket{0^N}=\ket{x}$ is the computational basis state with $(-1)^{x_i}=v_{i1}$, and $X_vm_zX_v=N^{-1}\sum_iv_{i1}\sigma_i^z$. With $\sigma_i^z(t)=e^{itH_{\mathrm{LMG}}}\sigma_i^ze^{-itH_{\mathrm{LMG}}}$ and $v_{i1}\ket{x}=\sigma_i^z\ket{x}$, averaging uniformly over the pattern $(v_{i1})_i$ gives
\begin{align}
	 & \overline{\bra{0^N}e^{itH_{\mathrm{Hop}}}m_ze^{-itH_{\mathrm{Hop}}}\ket{0^N}}
	\nonumber                                                              \\
	 & \quad=\frac{1}{N2^N}\sum_{i}\sum_{x}\bra{x}\sigma_i^z(t)\sigma_i^z\ket{x}
	\nonumber                                                              \\
	 & \quad=2^{-N}\tr\big(\sigma_1^z(t)\sigma_1^z\big),
	\label{eq:r1-gauge-mz}
\end{align}
using permutation symmetry of $H_{\mathrm{LMG}}$ in the last step. This is the clean infinite-temperature local autocorrelation, whose trace runs over the full $2^N$-dimensional Hilbert space rather than the permutation-symmetric sector, so the clean Hamiltonian alone does not make the disordered problem trivial.
The remaining two combinations ($\langle +^N|m_z(t)|+^N\rangle$, $\langle 0^N|m_x(t)|0^N\rangle$) vanish, the second only after averaging. For $\ket{+^N}$, the expectation of $m_z$ is $N^{-1}\sum_iv_{i1}$ times its clean value, which vanishes because $H_{\mathrm{LMG}}$ and $\ket{+^N}$ are invariant under $\prod_i\sigma_i^x$, which maps $m_z$ to $-m_z$. For $\ket{0^N}$, $m_x$ follows the clean dynamics started from $\ket{x}$, whose pattern average is $2^{-N}\tr(m_x)=0$.

At higher rank, removing $v_{i1}$ changes each remaining pattern to $v_{i1}v_{i\ell}$. These relative patterns generally remain site dependent, so the same gauge cannot make all interaction terms clean.

The clean dynamics in \cref{eq:r1-gauge-mx} can be non-trivial on long time scales. For instance, at $B=0$ write $H_{\mathrm{LMG}}=(\mu_1/N)(1+2\sigma_1^zS+S^2)$ with $S=\sum_{j\ne1}\sigma_j^z$. Then $\sigma_1^x(t)=\sigma_1^x\cos(4\mu_1tS/N)-\sigma_1^y\sin(4\mu_1tS/N)$, and since the $\sigma_j^z$ are independent and uniformly $\pm1$ in $\ket{+^N}$,
\begin{equation}
	\bra{+^N}e^{itH_{\mathrm{LMG}}}m_xe^{-itH_{\mathrm{LMG}}}\ket{+^N}=\cos^{N-1}\Big(\frac{4\mu_1t}{N}\Big).
	\label{eq:r1-gauge-b0}
\end{equation}
This revives to $(-1)^{N-1}$ at $t=\pi N/(4\mu_1)$.
The same argument applies at any rank, since at $B=0$ the Hamiltonian \cref{eq:models-hopfield} is diagonal in the computational basis. With $K_{ij}=\sum_\ell\mu_\ell v_{i\ell}v_{j\ell}$, the terms containing $\sigma_i^z$ are $(2/N)\sigma_i^z\sum_{j\ne i}K_{ij}\sigma_j^z$ up to a constant, so
\begin{equation}
	\bra{+^N}e^{itH_{\mathrm{Hop}}}m_xe^{-itH_{\mathrm{Hop}}}\ket{+^N}=\frac1N\sum_i\prod_{j\ne i}\cos\Big(\frac{4K_{ij}t}{N}\Big).
	\label{eq:any-r-b0}
\end{equation}

\section{Proofs}
\label{sec:proofs}
This section collects the longer proofs of \cref{sec:technical-discussion}; the statements and the surrounding notation are those of the respective subsections.

\subsection{Proofs for \cref{sec:hardness}}

\begin{proof}[Proof of \cref{lem:circuit-to-control}]
	\Cref{lem:pairing} for each gate and \cref{lem:routing} for each of its $d/2$ pairs give $VU=\widetilde UV$, where $\widetilde U$ is a product of at most $n^{\mathrm{PR}}=Ld^2$ adjacent pair rotations $R_\alpha^{\ell,\ell+1}(\theta)$ with $\ell+1<d$ and $\theta=\lambda\pi$, $\lambda\in\{\pm1/8,\pm1/4,\pm1/2\}$. The compiler realizes each of them by the pulse of \cref{lem:selective-pulse} with $R$ as in \cref{eq:hardness-resource-choice}, so that the transverse amplitude in \cref{eq:hardness-pulse} is $\lambda/(8NR\sqrt{(\ell+1)(N-\ell)})$ and $\pi$ enters only through $\tau=8\pi NR$.

	The synthesis error of each pulse is bounded by \cref{eq:hardness-pulse-error} and $|\theta|\le\pi$,
	\begin{equation*}
		\left\|e^{-i\tau H_{\ell,\alpha,\theta}}-R_\alpha^{\ell,\ell+1}(\theta)\right\|
		\le\frac{\pi N}{2R}
		\le\frac{\pi\eta}{8n^{\mathrm{PR}}}
		<\frac{\eta}{2n^{\mathrm{PR}}}.
	\end{equation*}

	The compiler writes each amplitude and the duration with absolute error at most $\delta$, clipping amplitudes to $[-1,1]$ and keeping $h_2=1/4$ exact; call the rounded pulse $\widetilde H,\widetilde\tau$. Each of the generators $Nm_x,Nm_y,Nm_z,Nm_z^2$ in \cref{eq:hardness-controls} has norm at most $N$ since $\|m_\alpha\|\le1$, so $\|H_{\ell,\alpha,\theta}-\widetilde H\|\le4N\delta$ and $\|\widetilde H\|\le4N$. Since $\|e^{-iA}-e^{-iB}\|\le\|A-B\|$ for Hermitian $A,B$,
	\begin{align}
		\|e^{-i\tau H_{\ell,\alpha,\theta}}-e^{-i\widetilde\tau\widetilde H}\|
		 & \le\tau\|H_{\ell,\alpha,\theta}-\widetilde H\|
		+|\tau-\widetilde\tau|\,\|\widetilde H\|\nonumber \\
		 & \le4N(\tau+1)\delta
		\le\frac{\eta}{2n^{\mathrm{PR}}}
		\label{eq:hardness-rounding-bound}
	\end{align}
	for the choice
  $
		\delta=\eta/{8n^{\mathrm{PR}}N(\tau+1)}.
		\label{eq:hardness-rounding-precision}
	$

	By the triangle inequality, each emitted segment $e^{-i\widetilde\tau\widetilde H}$ is within $\eta/n^{\mathrm{PR}}$ of its target rotation $R_\alpha^{\ell,\ell+1}(\theta)$. Telescoping over the at most $n^{\mathrm{PR}}$ unitary segments gives $\|U_{\mathrm{phys}}-\widetilde U\|\le\eta$ on the symmetric subspace, and \cref{eq:hardness-intertwining} follows from $VU=\widetilde UV$.

	Since $\delta\le1$ and $R\le4Nn^{\mathrm{PR}}/\eta+1$, the total duration is at most
	\begin{equation*}
		n^{\mathrm{PR}}(\tau+\delta)\le n^{\mathrm{PR}}(8\pi NR+1)
		=\mathcal O(N^2(n^{\mathrm{PR}})^2/\eta).
	\end{equation*}

	As for the classical resources, a first pass over the input counts $L$, which fixes $R$ and $\delta$. The compiler then streams: it holds only the current gate, its mask, the current pair and route counter, a stage label, and a constant number of fixed-point registers, and writes each pulse to the output as soon as it is generated. No expanded circuit, permutation table, or state vector is stored. Since $n^{\mathrm{PR}}$, $\tau$, and $1/\delta$ are polynomial in $N,L,1/\eta$, all counters and registers have $p=\mathcal O(\log(N L / \eta))$ bits.
  Each of the at most $n^{\mathrm{PR}}=Ld^2$ segments costs a constant number of $p$-bit multiplications, divisions and one square root, i.e.\ $\mathrm{poly}(p)$ bit operations, with $\pi$ and $\tau$ computed once; enumerating the pairs of a gate adds $\mathcal O(dp)$ per gate. With $d\le N+1$ this gives time $\mathcal O(LN^2\,\mathrm{polylog}(NL/\eta))$ and logarithmic workspace.
\end{proof}

\begin{proof}[Proof of \cref{lem:selective-pulse}]
All matrices act on the symmetric subspace. Write $H_{\ell,\alpha,\theta}=Q_\ell+g\,m_\alpha,$
where $Q_\ell=\frac N4m_z^2+\frac{2\ell+1-N}2m_z$ is diagonal and $g=\theta/(\tau b_\ell)$. Substituting $m_z\ket n_N=(1-2n/N)\ket n_N$ from \cref{eq:compression-normalized-entries}, the eigenvalues $\mu_n$ of $Q_\ell$ satisfy
\begin{equation}
  \mu_{n+1}-\mu_n=\omega_n:=\frac{2(n-\ell)}N,\qquad \tau\mu_n\in2\pi\mathbb Z,
  \label{eq:hardness-pulse-spectrum}
\end{equation}
the latter because $4N\mu_n$ is an integer and $\tau=8\pi NR$. So the selected transition $\ell\leftrightarrow \ell+1$ is resonant, every other transition is detuned by $|\omega_n|\ge2/N$ and completes $\omega_n\tau/2\pi=8R(n-\ell)$ full cycles over the pulse, and $e^{-i\tau Q_\ell}=I$.

The control bounds are immediate: $h_2=1/4$, $|h_z|\le(N-1)/(2N)$, and $|h_\alpha|=|g|/N\le1/(8NR\sqrt N)$ since $N^2b_\ell^2=(\ell+1)(N-\ell)\ge N$.

Let $A=b_\ell S^\alpha_{\ell,\ell+1}$ be the resonant term of $m_\alpha=\sum_nb_nS^\alpha_{n,n+1}$, so that $g\tau b_\ell=\theta$ gives $e^{-ig\tau A}=R^{\ell,\ell+1}_\alpha(\theta)$. Pass to the interaction picture with respect to $Q_\ell$,
\begin{equation*}
  C(t)=e^{itQ_\ell}m_\alpha e^{-itQ_\ell},\qquad
  W(t)=e^{itQ_\ell}e^{-itH_{\ell,\alpha,\theta}},
\end{equation*}
so that $W'=-igCW$, $W(0)=I$, and $W(\tau)=e^{-i\tau H_{\ell,\alpha,\theta}}$. Conjugation by $e^{itQ_\ell}$ multiplies the $(n,n+1)$ entry of $m_\alpha$ by $e^{-i\omega_nt}$: the resonant entries are constant and form $A$, all others oscillate. Their integral $F(t)=\int_0^t\big(C(s)-A\big)\,ds$ has entries
\begin{equation*}
  {}_N\!\bra nF(t)\ket{n+1}_N
  ={}_N\!\bra nm_\alpha\ket{n+1}_N\frac{1-e^{-i\omega_nt}}{i\omega_n}
\end{equation*}
for $n\ne\ell$; it is Hermitian and nearest-neighbour, vanishes at $t=0$ and, by the integer cycle count, at $t=\tau$. Its entries have modulus at most $2b_n/|\omega_n|\le N$, so the row-sum bound gives $\|F(t)\|\le2N$ for all $t$.

To compare $W$ with the target, differentiate the relative evolution, $\frac{d}{dt}\big(e^{igtA}W\big)=-ige^{igtA}(C-A)W$, and integrate by parts using $C-A=F'$ and $F(0)=F(\tau)=0$:
\begin{align*}
  e^{ig\tau A}W(\tau)-I
  &=-ig\int_0^\tau e^{igtA}F'W\,dt\\
  &=g^2\int_0^\tau e^{igtA}\big(FC-AF\big)W\,dt.
\end{align*}
With $\|C(t)\|=\|m_\alpha\|\le1$, $\|A\|=b_\ell\le1$, and $e^{igtA}$, $W(t)$ unitary,
\begin{equation*}
  \|W(\tau)-e^{-ig\tau A}\|
  \le4g^2\tau N
  =\frac{\theta^2}{2\pi Rb_\ell^2}
  \le\frac{\theta^2N}{2\pi R}.
  \qedhere
\end{equation*}
\end{proof}

\subsection{Proofs for \cref{sec:cluster-method}}

\begin{proof}[Proof of \cref{lem:eval}]
Using the coefficients $a_m$ of \cref{eq:taylor}, write $g(z)=\sum_{m\ge0}t^ma_m\phi(z)^m$.
Since $\phi$ has no constant term,
\[
  [z^j]\phi(z)^m
  =\sum_{\substack{n_1+\cdots+n_m=j\\n_i\ge1}}
  \phi_{n_1}\cdots\phi_{n_m}
\]
vanishes for $m>j$, and the truncation is
\[
  \sum_{j=0}^{M}\frac{g^{(j)}(0)}{j!}
  =\sum_{j=0}^{M}\sum_{m=0}^{j}
  \sum_{\substack{n_1,\ldots,n_m\ge1\\
                  n_1+\cdots+n_m=j}}
  \phi_{n_1}\cdots\phi_{n_m}\,t^ma_m.
\]
Here $[z^j]$ denotes coefficient extraction.
For $m=0$, the inner sum is $1$ if $j=0$ and $0$ otherwise.
By Wick's theorem, the Gaussian expectation in $w_\sigma$ of \cref{eq:taylor} equals
$\prod_{S:\mu(S)>0}(\mu(S)-1)!!$ if every set $S$ occurs an even number $\mu(S)$ of times among the $S_i$ with $\sigma_i=1$, and $0$ otherwise; in particular only types $\sigma$ with an even number $g$ of Gaussian entries contribute.
Hence $0\le N^{(k-1)(d+g/2)}w_\sigma\le\prod_{S:\mu(S)>0}\mu(S)!\le m!$.

Expand the nested commutator into $2^m$ signed ordered products,
\[
  \sum_{A\subseteq\{1,\ldots,m\}}
  (-1)^{m-|A|}
  \left(\prod_{i\in A}^{\rightarrow}h^{\sigma_i}_{S_i}\right)
  O
  \left(\prod_{i\notin A}^{\leftarrow}h^{\sigma_i}_{S_i}\right),
\]
where the arrows indicate increasing and decreasing index order, respectively, and empty operator products equal $I$.
Let $e_{ab}=\ket a\bra b$, $a,b\in\{0,\ldots,\chi-1\}$, denote the matrix units of one site.
Expand each term $h^{\sigma_i}_{S_i}$ and $O$ in the tensor products of matrix units of the $k$ and at most $k_O$ sites they act on; these have at most $\chi^{2k}$ and $\chi^{2k_O}$ components, whose coefficients are matrix elements of the operator and hence bounded in modulus by its norm.
Every term then becomes a product of matrix units; on each site, a product of matrix units is again a matrix unit or zero, and the trace against the product state $\rho(0)$ factorises into single-site matrix elements $\Tr(\rho_ie_{ab})=\bra b\rho_i\ket a$, of modulus at most one.

The truncation is then an explicit sum of $M^2\cdot2^M\cdot(2N^k)^M\cdot(2\chi^{2k})^M\cdot\mathcal O(\chi^{2k_O})=(\chi N)^{\mathcal O(kM+k_O)}$ terms, each a product of $\mathcal O(M)$ factors of modulus at most $1$ or $\mathrm{poly}(N)$: the $\phi_{n_i}$, the $m$ factors of $t$, the prefactor $w_\sigma/m!\le1$, matrix elements of modulus at most $1$ or $\|O\|$, and single-site matrix elements of the $\rho_i$.

Since only even $g$ contribute, $N^{-(k-1)(d+g/2)}$ has an integer exponent. The normalization therefore requires only integer powers of $N$ and introduces no square-root approximation into coefficient evaluation. The integers $N^{(k-1)(d+g/2)}w_\sigma$, $m!$, and these powers of $N$ use $\mathcal O(kM\log N)$ bits for $M\le\mathrm{poly}(N)$.
The algorithm enumerates these terms and accumulates them.
A term is specified by an index of $\mathcal O((kM+k_O)\log(\chi N))$ bits and has modulus at most $\|O\|N^{\mathcal O(M)}$, so a fixed-point accumulator of $b=\mathcal O((kM+k_O)\log(\chi N)+\log(1/\delta))$ bits, with every product rounded to $b$ bits, reaches absolute precision $\delta\|O\|$.
The machine holds one index and one accumulator, $\mathcal O(b)$ bits, and spends $2^{\mathcal O(k)}\mathrm{poly}(kM+k_O,b)$ arithmetic operations per term, the $2^{\mathcal O(k)}$ for the matrix elements of the $\tilde h_S$ of \cref{eq:absorb}, for total time $(\chi N)^{\mathcal O(kM+k_O)}\mathrm{polylog}(1/\delta)$.
\end{proof}

\section{Outlook}
\label{sec:discussion}
An interesting direction for future work is to explore experimental implementations of quantum simulations in the settings considered here.
Such implementations could potentially demonstrate an advantage in the simultaneous time--space manner described, while requiring hardware to only provide a relatively small number of qubits.

On the theoretical side, another avenue is to extend the present dynamical results to equilibrium spin-glass physics through compressed access to disorder-averaged Gibbs and ground-state observables.
The corresponding quantum question there is whether replica methods for the disorder-dependent Gibbs normalization, combined with thermal-state preparation or spectral filtering, could yield polynomial-time algorithms using logarithmically many qubits. On the classical side, one could seek polynomial-time, polylogarithmic-space algorithms in tractable regimes, such as high temperature or weak disorder via cluster expansions, together with complexity-theoretic evidence for a quantum advantage in low-temperature or ground-state regimes.

Finally, developing new techniques for the SK and quantum $p$-spin models, and generalising ideas from the fixed-rank setting, would help address the clear physics motivation for the former.

\begin{acknowledgments}
We thank Harry Buhrman and Chinmay Nirkhe for helpful discussions.
The work is supported by the German Federal Ministry of Education and Research (BMBF) through the funded project ALMANAQC, grant number 13N17236 within the research program “Quantum Systems”.
The research is partly funded by THEQUCO as part of the Munich Quantum Valley, which is supported by the Bavarian state government with funds from the Hightech Agenda Bayern Plus. 
R.T. acknowledges support from the European Union’s Horizon Europe research and innovation program under grant agreement number 101221560 (ToNQS) and from the Munich Center for Quantum Science and Technology (MCQST), funded by the Deutsche Forschungsgemeinschaft (DFG) under Germany’s Excellence Strategy (EXC2111-390814868).
The numerical calculations used SciPy~\cite{virtanen2020scipy} and quimb~\cite{gray2018quimb}.
Astra/Fable 5.1/Opus 5.5 was used within the TeXRA multi-agent framework~\cite{lu2026multiagentautoformalizationtensornetwork} to assist with selected derivations, code development, and manuscript preparation. The authors take full responsibility for the content of the manuscript.
\end{acknowledgments}

\appendix

\section{Normalisation and Extensivity}
\label{app:normalization}
For fixed $k\ge2$, expanding the couplings in \cref{eq:hamiltonian} gives
\begin{gather}
 H_J=Nf(m_x,m_y,m_z)-\frac{J}{D}
 \sum_{\ell=1}^r\sum_{i_1<\cdots<i_k}
 \prod_{a=1}^k(v_{i_a\ell}\sigma^z_{i_a}),
 \label{eq:normalization-hamiltonian}\\
 D=\begin{cases}
 N^{k-1}\sqrt r,&1\le r<N^{k-1},\\
 \sqrt{rN^{k-1}},&r\ge N^{k-1}.
 \end{cases}
 \label{eq:normalization-denominator}
\end{gather}
The factor $\sqrt r$ comes from the coupling definition in \cref{eq:hamiltonian}, and the second case of $D$ is the replacement of the prefactor $J/N^{k-1}$ by $J/N^{(k-1)/2}$ for $r\ge N^{k-1}$ stated in \cref{sec:summary_of_results}.
For fixed $J$ and $f$ with sum of absolute coefficients $\|f\|_1$, we show
\begin{equation}
 \Pr\!\left(\|H_J\|>(\|f\|_1+C_k|J|)N\right)
 \le2e^{-c_kN}
 \label{eq:normalization-extensive}
\end{equation}
in both regimes, where $C_k$ and $c_k>0$ are constants depending only on $k$.

\begin{proof}
Since $\|m_\alpha\|\le1$, the deterministic part has norm at most $\|f\|_1N$. The disorder is diagonal, so its norm is $|J|\max_z|W(z)|/D$, where
\[
 W(z)=\sum_{\ell=1}^rP_\ell(z),\qquad
 P_\ell(z)=\sum_{i_1<\cdots<i_k}\prod_{a=1}^k v_{i_a\ell}z_{i_a}.
\]
For fixed $z\in\{-1,1\}^N$, the $P_\ell$ are independent and have mean zero.

We first control rare large contributions from a single pattern. Write $w_i=v_{i\ell}z_i$; since $z$ is fixed, these are independent uniform signs, and the pattern contribution $P_k=\sum_{i_1<\cdots<i_k}w_{i_1}\cdots w_{i_k}$ is the coefficient of $t^k$ in $F(t)=\prod_i(1+tw_i)$. The only statistic that matters is $A=\sum_iw_i$. Let $n_\pm$ be the numbers of $w_i$ equal to $\pm1$, so $A=n_+-n_-$, and suppose $A\ge0$; otherwise flip all $w_i$, which changes $P_k$ by at most a sign. Then $F(t)=(1-t^2)^{n_-}(1+t)^{A}$, and reading off the coefficient of $t^k$ and bounding binomial coefficients by powers,
 $
 |P_k|=\Bigl|\sum_{m}(-1)^m\binom{n_-}{m}\binom{A}{k-2m}\Bigr|
 \le\sum_mN^mA^{k-2m}\le(|A|+\sqrt N)^k.
$
Set $x=|A|/\sqrt N$, so $|P_k|\le N^{k/2}(1+x)^k$. Typically $x=\mathcal O(1)$ and $P_k=\mathcal O(N^{k/2})$, but $P_k$ reaches order $N^k$ when the signs align, $x\sim\sqrt N$. Such alignments are rare, since a sum of independent signs concentrates,
\[
 \Pr(x>s)\le2e^{-s^2/2},
\]
but they must be controlled after exponentiation. Restrict $|u|\le a_kN^{-(k-1)}$. Since $x\le\sqrt N$,
\begin{align}
 |uP_k|&\le a_k(1+x)^2\Bigl(\frac{1+x}{\sqrt N}\Bigr)^{k-2}\\
 &\le2^{k-2}a_k(1+x)^2\le2^{k-1}a_k(1+x^2).
\end{align}
Choose $a_k$ so that $2^{k-1}a_k\le1/4$. Then
\[
 \mathbb E[P_k^2e^{|uP_k|}]
 \le e^{1/4}N^k\,\mathbb E\bigl[(1+x)^{2k}e^{x^2/4}\bigr]
 \le M_kN^k
\]
for a constant $M_k$ depending only on $k$: integrating the tail bound, the decay $e^{-s^2/2}$ outweighs the weight $e^{s^2/4}$, so the expectation is bounded independently of $N$.

Finally, $\mathbb EP_k=0$ because each product contains distinct independent signs. The inequality $e^x\le1+x+x^2e^{|x|}/2$ gives
$\mathbb Ee^{uP_k}\le1+M_ku^2N^k\le e^{M_ku^2N^k}$, hence
\begin{equation}
 \log\mathbb Ee^{uP_\ell(z)}\le M_ku^2N^k,
 \qquad |u|\le a_kN^{-(k-1)}.
 \label{eq:normalization-pattern-moment}
\end{equation}

In both regimes, $D\ge N^{k-1}$ and $D^2\ge rN^{k-1}$. Choose $u=a_k/D$; the first inequality ensures this is within the allowed range. Independence multiplies the exponential moments, so their logarithms add:
$\log\mathbb Ee^{\pm uW(z)}\le M_kr u^2N^k\le M_ka_k^2N$.
For a fixed configuration and any $L>0$, Markov's inequality applied to $e^{\pm uW(z)}$ gives
\[
 \Pr(|W(z)|>LND)\le2e^{N(-a_kL+M_ka_k^2)}.
\]
The norm requires control of every configuration, not just one. A union bound multiplies this probability by $2^N$, giving
\[
 \Pr\!\left(\max_z|W(z)|>LND\right)
 \le2e^{N(\log2-a_kL+M_ka_k^2)}.
\]
Fix $L$ large enough that $c_k=a_kL-M_ka_k^2-\log2$ is positive and set $C_k=L$; both depend only on $k$.
Write $H_{\mathrm{dis}}=H_J-Nf(m_x,m_y,m_z)$ for the disorder part.
Since $\|H_{\mathrm{dis}}\|=|J|\max_z|W(z)|/D$, this gives $\|H_{\mathrm{dis}}\|\le C_k|J|N$ with probability at least $1-2e^{-c_kN}$. Adding the deterministic bound proves \cref{eq:normalization-extensive}.
\end{proof}

For $r<N^{k-1}$, choosing $u=a_k/N^{k-1}$ in the same argument gives the sharper bound $\|H_{\mathrm{dis}}\|\le C_k|J|N/\sqrt r$ with probability at least $1-2e^{-c_kN}$, with the same $C_k$ and $c_k$: the moment bound becomes $M_ka_k^2rN^{2-k}\le M_ka_k^2N$, and the Markov and union steps are unchanged.

\section{Permutation-Invariant Observables}
\label{app:local-observables}
\Cref{prob:averaged-dynamics} is stated for observables supported on at most $k_O$ sites.
We show that $k_O$-local permutation-invariant observables reduce to these, up to a constant factor in accuracy and with conversions in logarithmic space, because the averaged state is permutation invariant.

Indeed, the couplings $J_{i_1\cdots i_k}$ are symmetric in their indices and, the $v_{i\ell}$ being i.i.d., their joint distribution is invariant under relabelling the sites, so $U_\pi H_JU_\pi^\dagger$ has the same distribution as $H_J$.
Together with $U_\pi\rho(0)U_\pi^\dagger=\rho(0)$, this gives $U_\pi\overline\rho(t)U_\pi^\dagger=\overline\rho(t)$ for the averaged state of \cref{sec:reduction-fixed-rank} and all $\pi\in S_N$.
Hence every relabelled copy $U_\pi^\dagger OU_\pi$ of an observable $O$ has the same expectation value in $\overline\rho(t)$ as $O$.

In a permutation-invariant state $\rho$, such as $\overline\rho(t)$, a permutation-invariant observable therefore reduces to a few local ones.
For example, for qubits,
\begin{equation*}
	O=N\big(m_z+\tfrac12m_x^2\big)=\tfrac12+\sum_i\sigma^z_i+\frac1N\sum_{i<j}\sigma^x_i\sigma^x_j,
\end{equation*}
and since all $\sigma^z_i$, and all $\sigma^x_i\sigma^x_j$, have the same expectation value,
\begin{equation*}
	\tr(O\rho)=\tfrac12+N\tr(\sigma^z_1\rho)+\tfrac{N-1}2\tr(\sigma^x_1\sigma^x_2\rho).
\end{equation*}
The errors of the two local estimates are multiplied by $N$ and $(N-1)/2$, which is harmless as $\|O\|\ge N$ grows alike.
This needs the right split into pieces, though: $0=\sum_{i<j}(\sigma^z_i+\sigma^z_j)-(N-1)\sum_i\sigma^z_i$ splits the zero observable into two pieces whose estimates carry errors of order $N^2$ times the local accuracy.
We therefore use the canonical split into pieces that are traceless on each of their sites.

Let $\Gamma_0=I$, and let $\Gamma_1,\ldots,\Gamma_{\chi^2-1}$ be a basis of the traceless Hermitian operators on $\mathbb C^\chi$, e.g.\ the Pauli matrices.
Expanding $O$ in products of the $\Gamma_a$ and letting $B_S$ collect the products whose non-identity factors sit exactly on $S$ gives
\begin{equation}
	O=\sum_{j=0}^{k_O}\sum_{|S|=j}B_S,
	\label{eq:local-decomposition}
\end{equation}
where (i) $B_S$ acts on $S$ and its partial trace over any site of $S$ vanishes; (ii) $B_S=0$ for $|S|>k_O$, as $O$ is $k_O$-local; and (iii) $B_{\pi(S)}=U_\pi B_SU_\pi^\dagger$, as the expansion is unique and $O$ permutation invariant.
By (iii), all $B_S$ with $|S|=j$ are copies of one $j$-site operator $B^{(j)}$, itself invariant under permuting its sites; above, $B^{(0)}=\frac12$, $B^{(1)}=\sigma^z$ and $B^{(2)}=\sigma^x\otimes\sigma^x/N$.
As all copies have the same expectation value,
\begin{equation}
	\tr(O\rho)=\sum_{j=0}^{k_O}\binom Nj\tr\big(B^{(j)}_{\{1,\ldots,j\}}\rho\big),
	\label{eq:local-to-pi}
\end{equation}
with $B^{(j)}_{\{1,\ldots,j\}}$ acting on the first $j$ sites.
The $j=0$ term is the constant $\chi^{-N}\tr O$, and estimating the others to accuracy $\epsilon'\|B^{(j)}\|$ gives $\tr(O\rho)$ to accuracy $\epsilon'\sum_j\binom Nj\|B^{(j)}\|\le\epsilon'C_{\chi,k_O}\|O\|$ by \cref{lem:local-to-pi} below.

\begin{lemma}
	\label{lem:local-to-pi}
	There is a constant $C_{\chi,k_O}$, depending only on $\chi$ and $k_O$, such that $\sum_{j=0}^{k_O}\binom Nj\|B^{(j)}\|\le C_{\chi,k_O}\|O\|$ for every $N$.
\end{lemma}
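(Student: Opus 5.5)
Each $B^{(j)}$ can be recovered from $O$ by partial traces, and each such recovery costs at most a constant times $\|O\|/\binom Nj$. The plan is to show this by expressing $B^{(j)}$ through Möbius inversion over subsets and then bounding the resulting expression by exploiting permutation invariance.

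\emph{Inversion formula.} For a set $T$ of sites, let $\mathcal E_T(X)=\chi^{-(N-|T|)}\tr_{\overline T}(X)\otimes I_{\overline T}$ denote the normalised partial trace onto $T$. This map is unital, completely positive and trace-preserving up to normalisation, so it is contractive in operator norm. Property (i) of \cref{eq:local-decomposition} gives $\mathcal E_T(B_S)=B_S$ if $S\subseteq T$ and $0$ otherwise. Hence $\mathcal E_T(O)=\sum_{S\subseteq T}B_S$, and inclusion--exclusion yields
\[
  B_T=\sum_{T'\subseteq T}(-1)^{|T|-|T'|}\mathcal E_{T'}(O).
\]
This alone only gives $\|B^{(j)}\|\le 2^j\|O\|$, which is far from the required $\binom Nj^{-1}$ decay. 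The needed extra factor has to come from averaging over the many copies.

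\emph{Averaging.} For fixed $j$ and $T=\{1,\ldots,j\}$, I use permutation invariance of $O$ to replace each $\mathcal E_{T'}(O)$ by an average. Equivalently, I look at the operator $\sum_{|S|=j}B_S$. It is the degree-$j$ ``homogeneous component'' of $O$ under the decomposition into products of traceless single-site operators. This component is obtained from $O$ by a projector $\Pi_j$ that is a polynomial in the commuting superoperators $\mathcal D_i=\mathrm{id}-\mathcal E_{\overline{\{i\}}}$. The strategy is to bound $\|\Pi_j(O)\|\le C\|O\|$ on permutation-invariant $k_O$-local $O$. One route writes $\Pi_j$ as a Fourier integral: $\Pi_j=\frac1{2\pi}\int_0^{2\pi}e^{-ij\theta}\bigotimes_i(\mathcal E_i+e^{i\theta}\mathcal D_i)\,d\theta$. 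The obstacle is that $\mathcal E+e^{i\theta}\mathcal D$ is not contractive. So instead I would use the fact that $O$ has degree at most $k_O$, so only the $k_O+1$ components $\Pi_0,\ldots,\Pi_{k_O}$ are nonzero. I then evaluate the family of depolarising maps $\Phi_p=\bigotimes_i(\mathcal E_i+p\,\mathcal D_i)$ for $p\in[0,1]$. Each $\Phi_p$ is a product of depolarising channels, hence contractive, and satisfies $\Phi_p(O)=\sum_{j\le k_O}p^j\Pi_j(O)$. Choosing $k_O+1$ fixed distinct values $p_0,\ldots,p_{k_O}\in[0,1]$ and inverting the Vandermonde system gives $\Pi_j(O)$ as a combination of the $\Phi_{p_a}(O)$. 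The coefficients depend only on $k_O$, so $\|\Pi_j(O)\|\le V_{k_O}\|O\|$.

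\emph{From $\Pi_j(O)$ to $\binom Nj\|B^{(j)}\|$.} It remains to show $\binom Nj\|B^{(j)}\|\le c_{\chi,j}\|\Pi_j(O)\|$, which I expect to be the main obstacle. Here $\Pi_j(O)=\sum_{|S|=j}U B^{(j)}U^\dagger$ is a sum of $\binom Nj$ copies, and these copies could in principle cancel in norm. The idea is to test against a suitable state. Pick a $j$-site pure product-like input $\omega$ on $T$ and the maximally mixed state elsewhere, chosen to nearly attain $\|B^{(j)}\|$ up to a $\chi^{-\mathcal O(j)}$ factor; the overlap bound then follows by expanding in an orthonormal operator basis, since all norms on the fixed $\chi^{2j}$-dimensional space are equivalent. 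Because copies $B_S$ with $S\ne T$ are traceless on some site outside $T$, their expectation in this state vanishes. Therefore
\[
  |\tr(B^{(j)}\omega)|=|\tr(\Pi_j(O)\,\omega\otimes I/\chi^{N-j})|\le\|\Pi_j(O)\|.
\]
This gives only $\|B^{(j)}\|\lesssim\|\Pi_j(O)\|$, without the $\binom Nj$ gain. To gain the factor, I would instead use a symmetric test state $\omega^{\otimes N}$ with $\omega$ close to but different from maximally mixed: $\omega=(I+\lambda\Gamma)/\chi$. Then $\tr(\Pi_j(O)\omega^{\otimes N})=\binom Nj\lambda^j\chi^{-j}\tr(B^{(j)}\Gamma^{\otimes j})$. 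Ranging $\Gamma$ over a finite set of traceless single-site operators and polarising, these quantities determine all permutation-symmetric coefficients of $B^{(j)}$. Since $B^{(j)}$ is itself invariant under permuting its sites, this controls $\binom Nj\|B^{(j)}\|$ by $C\|\Pi_j(O)\|$ with $C$ depending only on $\chi$ and $j$. Summing over $j\le k_O$ then gives $C_{\chi,k_O}$.
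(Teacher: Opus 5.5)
Your proposal is correct, and its decisive step is the paper's own argument: you evaluate on identical product states $\bigl((I+\lambda\Gamma)/\chi\bigr)^{\otimes N}$, so that all $\binom Nj$ copies of $B^{(j)}$ contribute the same $\Gamma^{\otimes j}$ term, and you then bound the coefficients of the resulting polynomial, which is itself bounded by the operator norm. The difference is bookkeeping: the paper reads off all coefficients at once by multivariate interpolation on a grid, whereas you first isolate the degree-$j$ part with the contractive depolarising maps $\Phi_p$ and a Vandermonde inversion (equivalent to rescaling $\lambda\to p\lambda$ in the test state, since $\Phi_p^\dagger(\omega_\lambda)=\omega_{p\lambda}$ sitewise) and then polarise, which also yields the operator bound $\|\Pi_j(O)\|\le V_{k_O}\|O\|$ along the way.
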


\begin{proof}
	Since $|\tr(O\rho)|\le\|O\|$ for every state $\rho$, each state bounds $\|O\|$ from below, and identical product states are the natural choice here: the trace factorises over sites and all $\binom Nj$ copies of $B^{(j)}$ contribute equally.
	On these states, $O$ becomes a polynomial whose coefficients are the entries of the $\binom NjB^{(j)}$.

	In general, the $B_S$ do not depend on the choice of the $\Gamma_a$, so we take them orthonormal, $\tr(\Gamma_a\Gamma_b)=\delta_{ab}$, which gives $\|\Gamma_a\|\le1$; let $d=\chi^2-1$.
	By (i), $B^{(j)}=\sum_{a_1,\ldots,a_j}\beta^{(j)}_{a_1\cdots a_j}\Gamma_{a_1}\otimes\cdots\otimes\Gamma_{a_j}$ with real coefficients, which by (iii) are symmetric in their indices, and $\|B^{(j)}\|\le d^j\max|\beta^{(j)}|$.
	For $x\in\mathbb R^d$ with $|x|\le1/\chi$, $X=\sum_ax_a\Gamma_a$ has $\|X\|\le|x|\le1/\chi$, so $\sigma_x=I/\chi+X$ is a state.
	In $\tr(B_S\sigma_x^{\otimes N})$, each site outside $S$ contributes $\tr\sigma_x=1$, and on $S$ only $X$ survives by (i), so
	\begin{equation*}
		p(x)=\tr\big(O\sigma_x^{\otimes N}\big)=\sum_{j=0}^{k_O}\binom Nj\sum_{a_1,\ldots,a_j}\beta^{(j)}_{a_1\cdots a_j}x_{a_1}\cdots x_{a_j},
	\end{equation*}
	with $|p(x)|\le\|O\|$.
	As $\beta^{(j)}$ is symmetric, the coefficient of $x_{a_1}\cdots x_{a_j}$ is $\binom Nj\beta^{(j)}_{a_1\cdots a_j}$ times the number of distinct orderings of $(a_1,\ldots,a_j)$.
	It remains to bound the coefficients of $p$.
	A polynomial of degree at most $k_O$ in $d$ variables is determined by its values on a grid of $(k_O+1)^d$ points, here chosen inside the ball $|x|\le1/\chi$, and each of its coefficients is a fixed linear combination of these values, with weights depending only on $\chi$ and $k_O$.
	Every coefficient of $p$ is therefore at most $c\|O\|$, with $c$ depending only on $\chi$ and $k_O$, so $\binom Nj\|B^{(j)}\|\le d^jc\|O\|$, and summing over $j$ proves the claim.
\end{proof}

The $B^{(j)}$ are computed from $O$ in logarithmic space.
With the normalised partial trace $O|_T=\chi^{|T|-N}\tr_{\{1,\ldots,N\}\setminus T}O$, (i) gives $O|_T=\sum_{S\subseteq T}B_S$, since tracing out a site removes exactly the pieces acting on it.
Inclusion--exclusion inverts this to $B^{(j)}=\sum_{T\subseteq\{1,\ldots,j\}}(-1)^{j-|T|}O|_T\otimes I$.
For $O=Nf$ supplied as in \cref{eq:compression-normalized-family}, expanding each monomial over its site indices $i_1,\ldots,i_\ell$, the normalised partial trace of a term depends only on which indices coincide and which lie in $T$: each such group of index tuples contributes a product of normalised traces of single-site operator products, times the falling factorial of $N-|T|$ counting its tuples.
For $k_O,\chi=\mathcal O(1)$ there are $\mathcal O(1)$ groups, so every entry of $B^{(j)}$ is computed in polynomial time and $\mathcal O(\log N)$ workspace, and recomputed whenever the local algorithm reads it.

Hence, for $\chi,k_O=\mathcal O(1)$, the results on \cref{prob:averaged-dynamics} extend to $k_O$-local permutation-invariant observables: in each regime the relevant state, $\rho(t)$ for $J=0$ and $\overline\rho(t)$ otherwise, is permutation invariant, so estimating each $B^{(j)}_{\{1,\ldots,j\}}$ to accuracy $\epsilon\|B^{(j)}\|/C_{\chi,k_O}$ suffices.

\section{Universality of the Logical Gate Set}
\label{app:gate-universality}
The gates $A_i,B_i,C_{ij}$ in \cref{eq:hardness-gates}, together with their inverses, form an approximately universal gate set: they can approximate any unitary circuit to arbitrary accuracy, up to global phase. To see this, we express the standard universal gates $H_{\mathrm{Had}},T_{\mathrm{ph}},CZ$ as constant-length products of these gates. Suppressing the qubit index for single-qubit identities, Pauli multiplication gives
\begin{align}
	A^2B^2      & =-i(\mathsf X+\mathsf Z)/\sqrt2
	=-iH_{\mathrm{Had}},\nonumber                 \\
	ABA^\dagger & =e^{-i\pi\mathsf Z/8}
	=e^{-i\pi/8}T_{\mathrm{ph}},
\end{align}
where $H_{\mathrm{Had}}$ is the Hadamard gate and
$T_{\mathrm{ph}}=\operatorname{diag}(1,e^{i\pi/4})$.
For $F=B_i^2B_j^2$, the identity
$B^2\mathsf X B^{-2}=-\mathsf Z$ gives
\begin{align}
	FC_{ij}F^\dagger
	 & =e^{-i\pi\mathsf Z_i\mathsf Z_j/4},\nonumber \\
	CZ_{ij}
	 & =e^{-i\pi/4}
	e^{i\pi\mathsf Z_i/4}
	e^{i\pi\mathsf Z_j/4}
	e^{-i\pi\mathsf Z_i\mathsf Z_j/4}.
\end{align}
The local factors satisfy
$e^{i\pi\mathsf Z/4}=(ABA^\dagger)^{-2}$.
The controlled-$Z$ gate is $CZ=\operatorname{diag}(1,1,1,-1)$; conjugating it by a Hadamard on the target gives a controlled-NOT. Thus the construction contains the standard approximately universal Hadamard, $T$, and controlled-NOT gate set, up to global phase. Each replacement has constant length and can be generated sequentially in constant additional workspace. Such preprocessing has a known global phase; the direct compilation of an $A,B,C$ circuit has none.

\section{Proofs for \cref{sec:technical-discussion}}

\subsection{Proofs of \cref{sec:cluster-method}}
\label{app:cluster-count}
\begin{proof}[Proof of \cref{lem:clusters}]
The proof proceeds exactly as that of Proposition 3.6 of \citet{haah2021}: the count is maximized on the infinite $\Delta$-regular tree rooted at $v$, where it equals $\sum_{n=1}^{w}\binom{w-1}{n-1}D_n$, with $D_n=\frac{\Delta}{n(\Delta-1)+1}\binom{n(\Delta-1)+1}{n-1}$ the exact number of subtrees with $n$ nodes rooted at $v$ (Lemma 3.7 of \citet{haah2021}) and $\binom{w-1}{n-1}$ the number of assignments of multiplicities $\ge1$, summing to $w$, to the $n$ nodes.
The only departure is that we sharpen their rounded bound $D_n\le e\Delta(e(\Delta-1))^{n-1}$, the sole source of the prefactor $e\Delta$ in their Proposition 3.6, to
$D_n\le(e(\Delta-1))^{n-1}$,
so that the binomial theorem yields the claimed $\sum_{n=1}^{w}\binom{w-1}{n-1}D_n\le(1+e(\Delta-1))^{w-1}$; note also $1+e(\Delta-1)\le e\Delta$ since $1\le e$.
To prove the sharpened bound, note $D_1=1$; for $n\ge2$, write $\ell\coloneqq n-1$ and $K\coloneqq n(\Delta-1)+1$, so that $K-1=(\ell+1)(\Delta-1)$, and use the identity $\frac1K\binom K\ell=\frac1\ell\binom{K-1}{\ell-1}$:
\[
  \begin{aligned}
    D_n&=\frac{\Delta}{\ell}\binom{(\ell+1)(\Delta-1)}{\ell-1}\\
    &\le\frac{\Delta}{\ell}
    \frac{((\ell+1)(\Delta-1))^{\ell-1}}{(\ell-1)!}\\
    &=\frac{\Delta}{\Delta-1}
    \frac{(\ell+1)^{\ell-1}}{\ell!}(\Delta-1)^\ell
    \le(e(\Delta-1))^\ell,
  \end{aligned}
\]
where the last step combines $\frac{\Delta}{\Delta-1}\le2\le\ell+1$ with $(\ell+1)^\ell\le e^\ell\ell!$.
Finally, the claim for $\mathcal G^O_m$ follows by applying the above at $w=m+1$ with $v=O$ on the interaction graph of $H$ and $O$.
\end{proof}

\section{Numerics}
\label{sec:numerics}

\begin{figure*}[t]
  \centering
  \subfloat[\label{fig:hopfield-vs-sk-sz}]{%
    \includegraphics[width=0.604\linewidth]{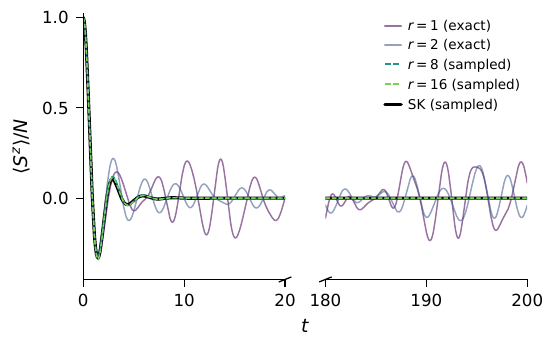}}%
  \hfill
  \subfloat[\label{fig:hopfield-vs-sk-dist}]{%
    \includegraphics[width=0.375\linewidth]{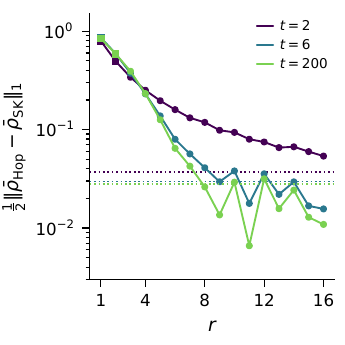}}
  \caption{The transverse-field Hopfield model $H_{\mathrm{Hop}}$ approaching SK with growing
  rank, at $N=16$ and $B=1.0$, with the cumulant-matched weights of \cref{eq:mu-cumulant}
  and the initial state $\ket{0}^{\otimes N}$.
  (a)~Disorder-averaged magnetization $\overline{\langle m_z(t)\rangle}$ on the windows
  $0\le t\le20$ and $180\le t\le200$, drawn on the same time scale. The thick black line is
  the SK model, \cref{eq:models-sk}, averaged over $200$ sampled Gaussian coupling
  matrices, with a band of $\pm2$ standard errors of the mean. Thin solid lines at $r=1,2$ are
  the exact pattern average from the symmetric-subspace simulation of the ancilla dilation
  (\cref{sec:numerical-methods}); dashed lines at $r=8,16$ are the average over $200$
  sampled pattern tables evolved in the full $2^N$-dimensional space, with their $\pm2$
  standard-error band. At $r=1,2$ the magnetization keeps oscillating with an amplitude of
  about $0.2$ up to $t=200$; at $r=8$ and $16$ the curves lie on top of SK within statistical
  error at all times shown.
  (b)~Trace distance $\tfrac12\|\overline{\rho}_{\mathrm{Hop}}-\overline{\rho}_{\mathrm{SK}}\|_1$
  between the disorder-averaged states of the two models as a function of the rank $r$, at the
  times $t=2$, $6$, and $200$ (colors). Every sampled state, Hopfield and SK, is averaged over
  site permutations before the disorder average, so that both states are evaluated in the
  Schur--Weyl block form of \cref{sec:numerical-methods}. Squares are the ranks with an exact pattern average
  ($r=1,2$), so that their distance to SK carries the sampling noise of the SK average only;
  circles are the sampled ranks, which carry the noise of both averages. The dotted horizontal
  line of each color is the sampling floor at that time: the trace distance that finite
  sampling alone produces, estimated by splitting the realizations of each model into two
  halves and adding the two split-half distances in quadrature, taken as the median over the
  sampled ranks. A point on or below its floor is indistinguishable from SK at this number of
  realizations. The distance at $t=200$ reaches the floor by $r\approx8$, whereas at $t=2$ it
  decreases more slowly and stays resolved up to $r=16$. Odd ranks lie above the even ones
  because the triangle correlation of \cref{eq:hopfield-triangle} cancels only for even
  $r$.}
  \label{fig:hopfield-vs-sk}
\end{figure*}

We now numerically consider ensemble averaged dynamics in the setting of \cref{sec:summary-hopfield}, i.e. $r=\mathcal{O}(1)$, and more specifically the quantum Hopfield models of \cref{eq:models-hopfield}.

First, we discuss how, for a given, fixed $r$, the parameters $\mu_\ell$ can be chosen so that the low-order statistics of the Hopfield bonds match those of SK as closely as the rank allows.
As discussed in \cref{sec:reduction-fixed-rank}, their discrete pattern disorder admits the permutation-invariant dilation, and thus a reduction of \cref{prob:averaged-dynamics} to the permutation-invariant dynamics estimation of \cref{eq:task-fixed}; utilising this, we discuss some applicable numerical techniques in \cref{sec:numerical-methods}.
Finally, we discuss some results in \cref{sec:numerical-results}.

\subsection{Choice of the Hopfield Weights}
\label{sec:hopfield-weights}

A useful starting point for connecting the Hopfield and SK models is to consider a low-rank approximation of the SK interaction matrix in \cref{eq:models-sk}. Retaining a few terms in its spectral decomposition expresses the interaction as a weighted sum of squared collective spin operators, one for each retained eigenvector. The Hopfield model in \cref{eq:models-hopfield} has this same structure, but uses independent random sign patterns $v_{i\ell}$ rather than SK eigenvectors.
It is therefore not a spectral truncation of an SK realization.
However, at fixed even rank $r$, the weights $\mu_\ell$ can be chosen to match the lowest-order bond moments of SK. We first match the variance of each pair coupling, then examine loop correlations among bonds.

Expanding the squares in \cref{eq:models-hopfield} gives a constant term $(\sum_\ell\mu_\ell)I$, since $(\sigma_i^z)^2=I$, and the pair couplings
\begin{equation}
  K_{ij}=\frac2N\sum_{\ell=1}^r\mu_\ell v_{i\ell}v_{j\ell},
  \label{eq:hopfield-pairs}
\end{equation}
where the factor of two counts the two orders of each pair. The weights $\mu_\ell$ are fixed and real. We use an overbar for the average over the independent uniform signs $v_{i\ell}$, as in \cref{eq:task-averaged}. For $i\ne j$, each product $v_{i\ell}v_{j\ell}$ has zero mean and unit variance, and contributions from different patterns are independent. Thus
$\overline{K_{ij}}=0$ and $\overline{K_{ij}^2}=4\sum_\ell\mu_\ell^2/N^2$.
Matching the SK bond variance $1/N$ requires $\sum_\ell\mu_\ell^2=N/4$.
Still, SK bonds are independent, whereas Hopfield bonds share the same patterns. The simplest way to see the difference is to multiply the bonds around a triangle. For distinct sites $i,j,k$,
\begin{equation}
  \overline{K_{ij}K_{jk}K_{ki}}
    =\frac{8}{N^3}\sum_\ell\mu_\ell^3.
  \label{eq:hopfield-triangle}
\end{equation}
Indeed, a product of independent signs has a nonzero average only if every sign occurs an even number of times. In this triangle, that requires all three bonds to contribute the same pattern index. The corresponding average is zero for independent SK bonds. We can remove this difference by pairing equal and opposite weights, so that $\sum_\ell\mu_\ell^3=0$.
Next consider a loop through four distinct sites. The same sign-counting argument gives
$
  \overline{K_{ij}K_{jk}K_{kl}K_{li}}
    =\frac{16}{N^4}\sum_\ell\mu_\ell^4.
$
This average also vanishes in SK, but cannot vanish here: all fourth powers are nonnegative, and the variance constraint requires nonzero weights. We therefore make it as small as possible. At fixed $r$ and $\sum_\ell\mu_\ell^2=N/4$, the Cauchy--Schwarz inequality gives
\begin{equation}
  \sum_\ell\mu_\ell^4\geq\frac1r
  \left(\sum_\ell\mu_\ell^2\right)^2
  =\frac{N^2}{16r},
\end{equation}
with equality precisely when all weights have equal magnitude. For even $r$, equal magnitudes and equal numbers of positive and negative weights satisfy both requirements. The variance constraint then fixes their magnitude, giving the choice
\begin{equation}
  \mu_\ell=(-1)^{\ell-1}\frac{\sqrt N}{2\sqrt r}.
  \label{eq:mu-cumulant}
\end{equation}
Thus equal magnitudes minimize the four-bond loop correlation, while balanced signs cancel the triangle correlation, without changing the matched SK variance. The remaining four-bond average is $1/(rN^2)$, or $1/r$ for bonds normalized to unit variance. For odd $r$, equal magnitudes still minimize this average, but cannot also cancel the triangle correlation.

The same choice also gives a Gaussian limit as the number of patterns grows. Each normalized bond $\sqrt N K_{ij}$ is a sum of $r$ independent, mean-zero pattern contributions of magnitude $1/\sqrt r$. For any fixed collection of distinct bonds, the contribution vectors are independent across patterns and have zero cross-covariances between bonds. Their bounded contributions satisfy the multivariate central limit theorem, so the normalized bonds converge jointly to independent standard Gaussians as $r\to\infty$. At finite rank, the loop correlations above quantify a departure from this limit.

This optimality concerns the bond variance and the two loop correlations, not the full dynamics. At fixed $r$, a configuration aligned with a pattern receives an energy per site of magnitude $|\mu_\ell|\propto\sqrt{N/r}$ from that pattern, which can grow with $N$. Moreover, convergence of a fixed collection of bonds does not give an error bound uniform in system size or evolution time. The finite-pattern model is therefore a statistically matched low-rank analogue, not a controlled approximation to SK dynamics.

\subsection{Numerical Techniques}
\label{sec:numerical-methods}

\paragraph{Exact symmetric subspace simulation}
For a permutation-invariant Hamiltonian and an initial pure state in the symmetric subspace, the dynamics can be propagated directly in the compressed form of \cref{sec:compressed-encoding}: we evolve $\varrho(t)=e^{-iht}\varrho(0)e^{iht}$ in the occupation basis $\ket{\mathbf n}_N$ and evaluate $F_O(t)=\tr(o\varrho(t))$ with $o=V^\dagger OV$ as in \cref{eq:compression-compressed-state}. Numerically, only the valid occupations need to be stored, of which there are
\begin{equation}
  D_{N,\chi}=\binom{N+\chi-1}{\chi-1}=\mathcal O(N^{\chi-1})\le Q
  \label{eq:numerical-symmetric-dimension}
\end{equation}
for fixed $\chi$, instead of $\chi^N$.
The matrix elements of $h$ follow from \cref{eq:compression-matrix-units}.
Each factor $m(B)$ in \cref{eq:compression-matrix-units} moves at most one site from one state to another, so $h$ connects $\ket{\mathbf n}_N$ only to occupations reached by at most $k$ such moves, of which there are at most $\chi^{2k}$; each row of $h$ thus has only a constant number of nonzero entries at fixed locality and $\chi$.

The resulting sparse $h$ can be diagonalized at small $D_{N,\chi}$ or applied through a Krylov approximation to the exponential at larger sizes, without a many-body truncation; numerical propagation errors must still be controlled. State-vector storage is $\mathcal O(D_{N,\chi})$, not logarithmic in $N$.

The same method applies to the Hopfield dilation of \cref{sec:reduction-fixed-rank}, where the enlarged sites have $\chi=2^{r+1}$. For $\rho(0)=(\ket{\phi}\bra{\phi})^{\otimes N}$, the enlarged initial state $(\ket{\phi}\otimes\ket{+}^{\otimes r})^{\otimes N}$ is a symmetric pure state, as required for \cref{eq:compression-compressed-state}, and evaluating $O\otimes I$ gives the exact discrete-disorder average, up to numerical error, without sampling patterns. The dimension in \cref{eq:numerical-symmetric-dimension} grows rapidly with $r$, limiting accessible ranks.

\paragraph{ED \& sampling}
To access larger ranks, we sample independent pattern tables with fair signs $v_{i\ell}\in\{-1,+1\}$ and construct the couplings in \cref{eq:hopfield-pairs}, keeping the weights fixed. For each realization, we evolve $\ket{0}^{\otimes N}$ in the full $2^N$-dimensional Hilbert space via standard ED.
Averaging the resulting expectation values estimates the disorder-averaged dynamics, naturally with a time complexity exponential in $N$.
The SK averages are sampled in the same way, with independent Gaussian couplings.
For trace distances we average the states themselves. An average of $M$ pure states has rank at most $M$, and hence trace distance at least $1-M/2^N$ from $I/2^N$, whatever the dynamics.
However, the disorder ensembles are invariant under relabelling the sites, and so is $\overline\rho(t)$ - we therefore average each sampled state over all site permutations before averaging over realizations, which leaves the estimate unbiased.
By Schur--Weyl duality the result has the block form $\bigoplus_J\rho_J\otimes I_{m_J}$, where $\rho_J$ acts on the spin-$J$ irreducible representation and $m_J$ is its multiplicity, so that $\sum_J(2J+1)^2=\mathcal O(N^3)$ numbers fix it; all trace distances are evaluated in this form.
The exact ancilla averages are permutation invariant by construction and have the same form without this step.

\subsection{Numerical Results}
\label{sec:numerical-results}

\Cref{fig:hopfield-vs-sk} compares the disorder-averaged Hopfield with the weights of \cref{eq:mu-cumulant} and SK dynamics; as described in the corresponding caption at growing ranks they recover the behaviour of the SK model.

\section{Random Transverse-Field Dynamics and the LMG Reduction}
\label{app:random-transverse}
\subsection{Random Transverse Signs and Exact Reduction}
\label{sec:random-transverse}

\begin{figure*}[t]
  \centering
  \includegraphics[width=\textwidth]{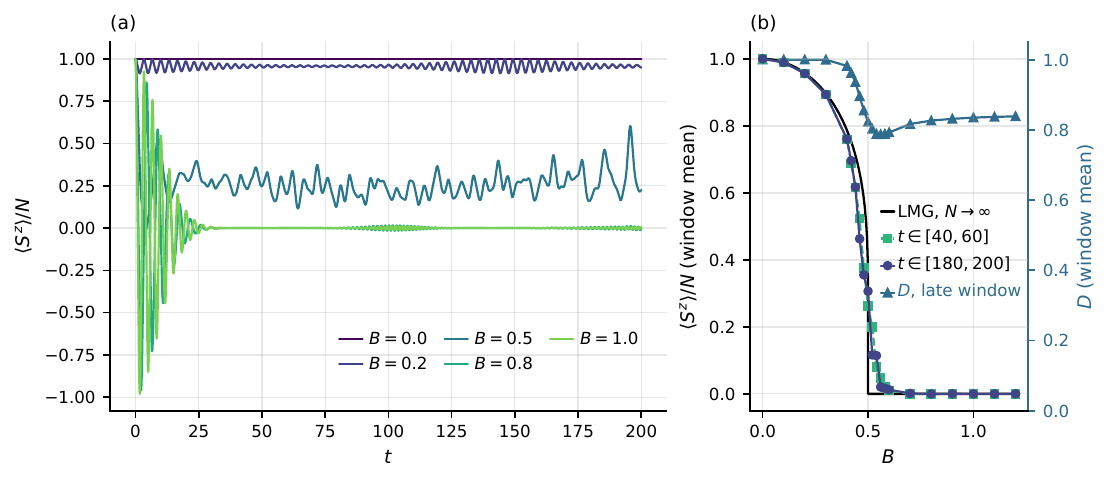}
  \caption{Random transverse-field dynamics for $N=100$, initially polarized in $z$, with $J=1$.
  (a) Longitudinal magnetization $\langle m_z\rangle$ for $B=0,0.2,0.5,0.8,1$, sampled at intervals $0.1$ through $t=200$.
  (b) Signed magnetization averaged over $t\in[40,60]$ and $[180,200]$ (left axis), compared with the classical LMG orbit mean (black curve). Each quantum mean contains 201 samples including both endpoints.
  The triangle-marked curve gives the mean trace distance $D(t)=\tfrac12\|\rho_{\mathrm{phys}}(t)-I/2^N\|_1$ over $[180,200]$ (right axis).
  The classical curve takes the infinite-size limit before the infinite-time limit; the quantum curves are finite-window readouts, not stationary-state values.}
  \label{fig:rt-results}
\end{figure*}

Another model fitting the disorder-reduction framework of \cref{sec:reduction-fixed-rank} is
\begin{align}
	H_s & =\frac{1}{N}\sum_{i<j}\sigma_i^z\sigma_j^z
	+B\sum_i s_i\sigma_i^x,
	    & s_i                                        & \overset{\mathrm{i.i.d.}}{\sim}\operatorname{Unif}\{-1,+1\}.
	\label{eq:rt-hamiltonian}
\end{align}

Following \cref{sec:reduction-fixed-rank}, introduce an ancilla Pauli operator $\tau_i^z$ at each site and define
\begin{align}
	\bar H       & =\frac{1}{N}\sum_{i<j}\sigma_i^z\sigma_j^z
	+B\sum_i\sigma_i^x\tau_i^z,                                                                  \\
	\ket{\Psi_0} & =(\ket{0}\otimes\ket{+})^{\otimes N},
	             & \ket{+}                                    & =\frac{\ket{0}+\ket{1}}{\sqrt2}.
\end{align}
The conserved ancilla eigenvalues label the realizations. Writing
$\tau_i^z\ket{s}_a=s_i\ket{s}_a$, orthogonality gives
\begin{align}
	\ket{\Psi(t)}
	 & =2^{-N/2}\sum_s e^{-iH_st}\ket{0^N}\otimes\ket{s}_a, \\
	\rho_{\mathrm{phys}}(t)
	 & =\tr_a\ket{\Psi(t)}\bra{\Psi(t)}
	=2^{-N}\sum_s e^{-iH_st}\rho_0e^{iH_st}.
	\label{eq:rt-ancilla-trace}
\end{align}
Thus measuring $O\otimes I$ in the enlarged system gives the disorder-averaged expectation of any fixed physical observable $O$, mapping \cref{prob:averaged-dynamics} to the permutation-invariant dynamics estimation of \cref{eq:task-fixed}.
The Hamiltonian and initial state are symmetric under permutations of the enlarged sites, each of local dimension $\chi=4$. Their symmetric occupation basis has dimension $\binom{N+3}{3}$, counting four nonnegative occupations summing to $N$.

Similarly to \cref{app:rank-one-gauge}, the signs can be removed by a local gauge transformation, relating the model to clean LMG dynamics.
Let $H_+$ be the uniform-field Hamiltonian and set
\begin{align}
	b_i & =(1-s_i)/2,
	    & G_s         & =\prod_i(\sigma_i^z)^{b_i}.
\end{align}
Since $\sigma^z\sigma^x\sigma^z=-\sigma^x$ and
$(\sigma^z)^2=I$,
\begin{align}
	H_s                 & =G_sH_+G_s,
	                    & G_s\ket{0^N}       & =\ket{0^N},           \\
	e^{-iH_st}\ket{0^N} & =G_s\ket{\phi(t)},
	                    & \ket{\phi(t)}      & =e^{-iH_+t}\ket{0^N}.
\end{align}
Define computational-basis dephasing by the trace-preserving channel
$\Delta_z(X)=\sum_u\ket{u}\bra{u}X\ket{u}\bra{u}$, where
$u\in\{0,1\}^N$. For two bit strings $u,v$, independent uniform signs give
\begin{align}
	\bra{u}\left(2^{-N}\sum_sG_sXG_s\right)\ket{v}
	 & =X_{uv}\prod_i\frac{1+(-1)^{u_i+v_i}}{2}\nonumber \\
	 & =X_{uv}\delta_{u,v}.
\end{align}
Consequently, the exact reduced state is
\begin{align}
	\rho_{\mathrm{phys}}(t)
	       & =\Delta_z\!\left(\ket{\phi(t)}\bra{\phi(t)}\right)
	=\sum_u p_u(t)\ket{u}\bra{u},                               \\
	p_u(t) & =\big|\bra{u}\ket{\phi(t)}\big|^2.
	\label{eq:rt-dephased-state}
\end{align}
This identity determines every single-time physical expectation:
\begin{align}
	\tr(O\rho_{\mathrm{phys}}(t))
	 & =\sum_u p_u(t)\bra{u}O\ket{u}\nonumber  \\
	 & =\bra{\phi(t)}\Delta_z(O)\ket{\phi(t)}.
	\label{eq:rt-observables}
\end{align}
Applying the same formula to measurement effects determines all single-time outcome probabilities. Diagonal observables, including longitudinal correlations and the full magnetization distribution, agree with clean LMG dynamics in every realization because they commute with $G_s$. Transverse one-spin expectations instead satisfy
$\langle\sigma_i^a\rangle_s=s_i\langle\sigma_i^a\rangle_+$ for
$a\in\{x,y\}$ and vanish only after the sign average. Their realization-dependent values need not vanish.

\subsection{Classical Reference and Finite-Window Dynamics}
\label{sec:rt-classical}

To establish the classical reference with the present Pauli normalization, use
$Nm_z^2=I+\frac2N\sum_{i<j}\sigma_i^z\sigma_j^z$, so that
$H_+=N(m_z^2/2+Bm_x)-I/2$.
The commutators $[m_a,m_b]=\frac{2i}{N}\epsilon_{abc}m_c$ give
\begin{align}
	\dot m_x & =-\{m_z,m_y\},\nonumber           \\
	\dot m_y & =\{m_z,m_x\}-2Bm_z,
	         & \dot m_z                  & =2Bm_y.
\end{align}
Here $\{A,C\}=AC+CA$. In the mean-field limit from the polarized coherent state, factorizing quadratic expectations yields
\begin{align}
	\dot x & =-2yz,
	       & \dot y & =2z(x-B),
	       & \dot z & =2By,
	\label{eq:rt-classical-eom}
\end{align}
where $(x,y,z)=\lim_{N\to\infty}\langle(m_x,m_y,m_z)\rangle$
starts at $(0,0,1)$. These equations conserve
$x^2+y^2+z^2=1$ and $e=z^2/2+Bx=1/2$.
For $B>0$, eliminating $x$ and $y$ gives
\begin{align}
	x        & =\frac{1-z^2}{2B},
	         & y^2                   & =1-z^2-\frac{(1-z^2)^2}{4B^2},\nonumber \\
	\dot z^2 & =(1-z^2)(z^2+4B^2-1).
	\label{eq:rt-orbit}
\end{align}
The fixed point $(1,0,0)$ has energy $e=B$. Linearizing about it gives
$\ddot z=4B(1-B)z$, making it a saddle for $0<B<1$.
The north-pole orbit meets its separatrix when $B=1/2$.
Below this field the initial energy is above the saddle energy, and
$z$ remains between $\sqrt{1-4B^2}$ and $1$.

For $0<B<1/2$, set $q=4B^2$ and parameterize the descending half-orbit by
$z=\sqrt{1-q\sin^2\theta}$, $0\le\theta\le\pi/2$.
Then $dz=-q\sin\theta\cos\theta\,d\theta/z$ and
$\sqrt{(1-z^2)(z^2+q-1)}=q\sin\theta\cos\theta$, so $dt=d\theta/z$.
The two half-orbits therefore give
\begin{align}
	T & =2K(q), \qquad
	K(q)=\int_0^{\pi/2}\frac{d\theta}{\sqrt{1-q\sin^2\theta}}, \\
	\int_0^T z(t)\,dt & =2\int_0^{\pi/2}d\theta=\pi.
\end{align}
Here $K$ uses the elliptic \emph{parameter} $q$, not the modulus.
For $B>1/2$, the orbit crosses both hemispheres; its speed in
\cref{eq:rt-orbit} is even in $z$, so its signed magnetization integral vanishes.
At $B=1/2$, $z(t)=\operatorname{sech}t$ solves the descending equation and has zero infinite-time mean. At $B=0$, $z(t)=1$.
Combining these cases gives
\begin{align}
	\overline m_z^{\mathrm{cl}}
	 & \coloneqq\lim_{\mathcal T\to\infty}
	\frac{1}{\mathcal T}\int_0^{\mathcal T}z(t)\,dt\nonumber \\
	 & =\begin{cases}
		\pi/[2K(4B^2)], & 0\le B<1/2, \\
		0,              & B\ge1/2.
	\end{cases}
	\label{eq:rt-orbit-mean}
\end{align}
The endpoint singularity of $K(q)$ is proportional to the integral of
$[(1-q)+(\pi/2-\theta)^2]^{-1/2}$. Its logarithmic divergence makes the orbit mean approach zero continuously, not through a finite jump.

\Cref{fig:rt-results} shows exact numerics via the reduction of \cref{sec:reduction-fixed-rank} as well as a comparison of the numerically determined plateau values with the LMG discussion of this section.

\bibliographystyle{apsrev4-2}
\bibliography{references}

\end{document}